\theoremstyle{plain}
\newtheorem{theo}{Theorem}[section]
\newtheorem{cor}[theo]{Corollary}
\newtheorem{rem}[theo]{Remark}
\newtheorem{defi}[theo]{Definition}
\newtheorem{lemma}[theo]{Lemma}
\newtheorem{prop}[theo]{Proposition}
\newtheorem{ex}[theo]{Example}
\newtheorem{op}[theo]{Open Problem}
\algnewcommand{\IfOneRow}[1]{\State\algorithmicif\ #1,}
\algnewcommand{\EndifOneRow}{}
\renewcommand{\ALG@name}{Algorithm}
\numberwithin{theo}{section}
\numberwithin{equation}{section}
\numberwithin{table}{section}
\numberwithin{figure}{section}
\DeclareFixedFont{\ttb}{T1}{txtt}{bx}{n}{12} 
\DeclareFixedFont{\ttm}{T1}{txtt}{m}{n}{12}  
\definecolor{deepblue}{rgb}{0,0,0.5}
\definecolor{deepred}{rgb}{0.6,0,0}
\definecolor{deepgreen}{rgb}{0,0.5,0}
\newcommand\pythonstyle{\lstset{
language=Python,
basicstyle=\ttm,
morekeywords={self},              
keywordstyle=\ttb\color{deepblue},
emph={MyClass,__init__},          
emphstyle=\ttb\color{deepred},    
stringstyle=\color{deepgreen},
frame=tb,                         
showstringspaces=false
}}
\newcommand\pythoninline[1]{{\pythonstyle\lstinline!#1!}}
\newcommand{\vF}[1]{\mathbb{F}_2^{#1}}
\newcommand{\cB}{\mathcal{F}}
\newcommand{\VB}{\mathcal{VF}}
\def\cC{{\mathcal C}}
\def\cD{{\mathcal D}}
\def\cM{{\mathcal{MM}}}
\newcommand{\B}{\mathcal{B}}
\newcommand{\F}{\mathbb{F}}
\renewcommand*{\backref}[1]{}
\renewcommand*{\backrefalt}[4]{%
	\ifcase #1 (Not cited.)%
	\or        (Cited on page~#2.)%
	\else      (Cited on pages~#2.)%
	\fi}
\title{Design and analysis of bent functions using {$\mathcal{M}$}-subspaces}
\author{\Large Enes Pasalic$^1$, Alexandr Polujan$^2$, Sadmir Kudin$^1$, Fengrong Zhang$^{3,4}$\vspace{0.4cm} \\
	\small $^1$ University of Primorska, FAMNIT \& IAM, Glagolja\v{s}ka 8, 6000 Koper, Slovenia\ \\ \small \{\tt enes.pasalic6@gmail.com,
	\tt  sadmir.kudin@iam.upr.si\}\vspace{0.4cm}\\
	\small$^2$ Otto-von-Guericke-Universit\"{a}t, Universit\"{a}tsplatz 2, 39106, Magdeburg, Germany\ \\ \small \tt alexandr.polujan@gmail.com \vspace{0.4cm}\\
	\small	$^3$ State Key Laboratory of Integrated Services Networks,\ \\ Xidian University, Xian 710071, P.R. China \ \\
	\small $^4$ Mine Digitization Engineering Research Center of Ministry of Education,\ \\ China University of Mining and Technology, Xuzhou, Jiangsu 221116, China \ \\ \small \tt zhfl203@163.com
}
\date{}
\begin{document}
	
\maketitle
\begin{abstract}
	  In this  article, we provide the first systematic analysis of bent functions $f$ on $\F_2^{n}$ in the Maiorana-McFarland class $\cM$ regarding the origin and cardinality of their \textit{$\mathcal{M}$-subspaces}, i.e., vector subspaces on which the second-order derivatives of $f$ vanish. By imposing restrictions on  permutations $\pi$ of $\F_2^{n/2}$, we specify the conditions, such that Maiorana-McFarland bent functions  $f(x,y)=x\cdot \pi(y)  + h(y)$ admit a unique $\mathcal{M}$-subspace of dimension $n/2$. On the other hand, we show that permutations $\pi$ with linear structures give rise to Maiorana-McFarland bent functions that do not have this property. In this way, we contribute to the classification of Maiorana-McFarland bent functions, since the number of $\mathcal{M}$-subspaces is invariant under equivalence. Additionally, we give  several  generic methods of specifying  permutations $\pi$ so that $f\in\cM$ admits a unique $\mathcal{M}$-subspace. Most notably, using the knowledge about $\mathcal{M}$-subspaces, we show that using the bent 4-concatenation of four suitably chosen  Maiorana-McFarland bent functions, one can in a generic manner generate bent functions on $\F_2^{n}$ outside the completed Maiorana-McFarland class $\cM^\#$ for any even $n\geq 8$. Remarkably, with our construction methods it is possible to obtain inequivalent bent functions on $\F_2^8$ not stemming from two primary classes, the partial spread class $\mathcal{PS}$ and $\cM$. In this way, we contribute to a better understanding of the origin of bent functions in eight variables, since only a small fraction, of which size is about  $2^{76}$,  stems from $\mathcal{PS}$ and $\cM$, whereas the total number of bent functions on $\F_2^8$ is  approximately $2^{106}$. \ \\[1mm]
	  \textbf{Keywords.} Bent function, Maiorana-McFarland class, Partial spread class, Equivalence, Linear structure, Permutation, Bent 4-concatenation.
	\end{abstract}
\section{Introduction}
Bent functions are famous combinatorial objects introduced by Rothaus~\cite{Rot} in the mid-1960s that give rise to various discrete structures.  
Two known primary classes of bent functions are the Maiorana-McFarland class $\cM$ and the Partial Spread class $\mathcal{PS}$, which were introduced in the 1970s in \cite{MM73} and \cite{Dillon}, respectively. On the other hand, the so-called secondary constructions (the reader is referred to \cite{Mesnager}) use the known bent functions for the purpose of constructing new ones. However, only a few sporadic works on bent functions analyze the class inclusion properly, being more focused on specifying explicit univariate/bivariate trace form or construction methods without being precise whether these functions might belong to $\cM$ class for instance. This eventually leads to a lack of understanding related to the classification and enumeration of bent functions.  For instance, bent functions on $\F_2^8$ that belong to the main two primary classes are only a small fraction (about the size of $2^{76}$) of all $\approx 2^{106}$ bent functions in eight variables~\cite{LangevinL11}.

A pioneering work to provide bent functions that provably do not belong to $\cM$ or to $\mathcal{PS}$, up to equivalence, is due to Carlet \cite{CC93} who introduced two new classes of bent functions, the so-called $\cC$ and $\cD$ classes. In a recent series of articles \cite{Cclass_DCC,OutsideMM,BFAExtended,BapicPasalic,Kudin2021,Kudin2022}, the authors specified explicit families of bent functions  outside the completed $\cM$ class that belong to $\cC$ and $\cD$. Nevertheless, apart from the class $\cD_0$ of Carlet, these functions are defined on the variable space $n \geq 10$. Thus, the origin of bent functions outside $\cM^\# \cup \mathcal{PS}^\#$ on $\F_2^8$ is still unclear.  Moreover, most of the known secondary methods for constructing bent functions commonly employ bent functions on a smaller variable space. For example, in a recent article \cite{Bent_Decomp2022}, the authors provided several methods of generating infinite families of bent functions outside $\cM^\#$ using the so-called 4-concatenation $f=f_1||f_2||f_3||f_4$ of bent functions  $f_1,f_2,f_3,f_4$ in $n$ variables introduced in \cite{Decom} and later restated in \cite{SHCF}. Due to the design approach, employing bent functions outside $\cM^\#$ on a smaller space,  these results are significant only for $n \geq 10$ and do not answer the existence of bent functions outside the known primary classes	when $n=8$.  Such an approach then makes it impossible to construct bent functions on $\F_2^8$ since all bent functions in less than 8 variables are in $\cM^\#$. 

Dillon in his thesis~\cite{Dillon} proved that a given bent function $f$ on $\F_2^n$ belongs to the $\cM^\#$ class if and only if $D_aD_b f=0$ for all $a,b\in V$, where $V$ is a vector space of $\F_2^n$ of dimension $n/2$ (see also Lemma~\ref{lem M-M second} for details); these vector spaces were called \textit{$\mathcal{M}$-subspaces} in~\cite{Polujan2020}. Despite being introduced decades ago, the algebraic properties of $\mathcal{M}$-subspaces attracted attention only recently in a few works, e.g, in~\cite{Kolomeec17,Polujan2020,PolujanPhD}.

The main aim of this article is to provide the first systematic investigation of $\mathcal{M}$-subspaces of Boolean bent functions, and using this knowledge, provide generic construction methods of Boolean bent functions in $n$ variables outside the $\cM^\#$ class for all even $n\ge 8$. Notably, we give a characterization of bent functions on $\F_2^n$ in $\cM$ class, that have a unique $\mathcal{M}$-subspace $V=\F_2^{n/2} \times \{0_{n/2}\}$. We show that the property of a Maiorana-McFarland bent function $f(x,y)=x \cdot \pi(y) + h(y)$ to have a {\em unique $\mathcal{M}$-subspace} is, in many cases, completely determined by choice of permutation $\pi$.  In the other direction, if a permutation $\pi$ admits linear structures (implying that its  components also do) then $f \in \cM$ has at least two $\mathcal{M}$-subspaces. This characterization not only contributes to the classification of Maiorana-McFarland bent functions but also partially explains why the condition that the components of $\pi$ do not admit linear structures has  been efficiently used in, e.g., \cite{OutsideMM,Kudin2021,Cclass_DCC,BFAExtended,BapicPasalic} to specify functions in $\cC$ and $\cD$ that are outside $\cM^\#$.  More precisely, a modification of a bent function $f \in \cM$ is easier performed if only one vanishing subspace needs to be deprived of this property through the addition of  an indicator function. Using the obtained knowledge about $\mathcal{M}$-subspaces of Maiorana-McFarland bent functions, we provide several design methods of specifying bent functions $f_1,f_2,f_3,f_4$ on $\F_2^n$ such that the concatenation $f=f_1||f_2||f_3||f_4$ is bent on $\F_2^{n+2}$ and outside $\cM^\#$ for all $n\ge 6$. 
Additionally, we indicate that obtained with our approach bent functions on $\F_2^8$ are outside the $\mathcal{PS}^\#$ class as well, thus we contribute to the better understanding of the origin of all bent functions in $n=8$ variables.

The rest of the paper is organized in the following way. In Subsection~\ref{sec:pre} we recall basic definitions related to Boolean functions, and in Subsection~\ref{sub: bent-4 cat and its prop} we summarize the necessary algebraic properties of bent 4-concatenation. In Section~\ref{sec: 2}, we investigate, which classes of permutations $\pi$ on $\F_2^m$ are suitable for the construction of Maiorana-McFarland bent functions of the form $(x,y)\in\F_2^m\times\F_2^m\mapsto x\cdot \pi(y)$ with several $\mathcal{M}$-subspaces. Particularly, in Subsections~\ref{sub: 2.1} and~\ref{sub: 2.2}, we show that permutations with linear structures as well as quadratic permutations that admit many  $\mathcal{M}$-subspaces, respectively, lead to Maiorana-McFarland bent functions with several $\mathcal{M}$-subspaces. In Section~\ref{sec: 3}, we study the opposite question, namely,  we investigate, which classes of permutations $\pi$ on $\F_2^m$ are suitable for the construction of Maiorana-McFarland bent functions of the form $(x,y)\in\F_2^m\times\F_2^m\mapsto x\cdot \pi(y)+h(y)$ with the unique canonical $\mathcal{M}$-subspace. In Subsection~\ref{sub: 3.1}, we introduce permutations with the~\eqref{eq: P1} property as those permutations  $\pi$ on $\F_2^m$ for which $D_vD_w\pi\neq0_m$  for all linearly independent $v,w\in\F_2^m$. Remarkably, we show that permutations $\pi$ with this property guarantee that Maiorana-McFarland bent functions of the form $(x,y)\in\F_2^m\times\F_2^m\mapsto x\cdot \pi(y) + h(y)$ have  the unique canonical $\mathcal{M}$-subspace independently on the choice of a Boolean function $h$ on $\F_2^m$; the latter provides a variety of different Maiorana-McFarland bent functions with the unique $\mathcal{M}$-subspace even from a single permutation $\pi$ with this property. In Subsection~\ref{sub: 3.2}, we consider permutations  $\pi$ on $\F_2^m$ for which $D_uD_w \pi= 0_m$, for any $u,v\in S$,  where $\dim(S)\geq 1$. Remarkably, we completely characterize such permutations $\pi$ on $\F_2^m$ giving rise to bent functions $(x,y)\in\F_2^m\times\F_2^m\mapsto x\cdot \pi(y)$ with the unique canonical $\mathcal{M}$-subspace and refer to them as permutations with~\eqref{eq: P2} property in the sequel. In Section~\ref{sec: 4} we give several explicit constructions of permutations with~\eqref{eq: P1} and~\eqref{eq: P2} properties. In Section~\ref{sec: 5}, we provide several generic construction methods of bent functions outside the $\cM^\#$ class using the bent 4-concatenation. First, in Subsection~\ref{sub: 5.1}, we completely describe possible $\mathcal{M}$-subspaces of the bent 4-concatenation of four Maiorana-McFarland bent functions. Additionally, we explain how to check the membership in the $\mathcal{PS}^\#$ computationally. Consequently, we consider two different scenarios of the concatenation of Maiorana-McFarland bent functions which both lead to bent functions outside $\cM^\#$. In Subsection~\ref{sub: 5.2}, we show that if Maiorana-McFarland bent functions do not share a common $\mathcal{M}$-subspace, then their concatenation is outside $\cM^\#$. In subsection~\ref{sec:sharingMsubspace}, we show that even if Maiorana-McFarland bent functions share a common $\mathcal{M}$-subspace, then under certain technical conditions it is still possible that their concatenation is outside $\cM^\#$. Moreover, we indicate that with our approaches it is possible to construct inequivalent bent functions on $\F_2^8$ outside $\cM^\#\cup\mathcal{PS}^\#$.
In Section~\ref{sec: concl}, we conclude the paper and give a list of open problems.

\subsection{Preliminaries}\label{sec:pre}

The vector space $\mathbb{F}_2^n$ is the space of all $n$-tuples $x=(x_1,\ldots,x_n)$, where $x_i \in \mathbb{F}_2$.
For $x=(x_1,\ldots,x_n)$ and $y=(y_1,\ldots,y_n)$ in $\mathbb{F}^n_2$, the usual scalar (or dot) product over $\mathbb{F}_2$ is defined as $x\cdot y=x_1 y_1 + \cdots +  x_n y_n.$ The Hamming weight of $x=(x_1,\ldots,x_n)\in \mathbb{F}^n_2$ is denoted and computed as $wt(x)=\sum^n_{i=1} x_i.$ Throughout the paper, we denote by $0_n=(0,0,\ldots,0)\in \mathbb{F}^n_2$ the all-zero vector with $n$ coordinates, and by $\mathbbm{e}_{k}\in\F_2^n$ the $k$-th canonical basis vector. In certain cases, we endow $\F_2^n$ with the structure of the finite field $\left(\F_{2^{n}},\cdot \right)$. An element $\alpha \in \mathbb{F}_{2^n}$ is said to be a \emph{primitive element}, if it is a generator of the multiplicative group $\mathbb{F}_{2^n}^*$. The \emph{absolute trace} $Tr\colon \mathbb{F}_{2^n} \rightarrow \mathbb{F}_{2}$ is given by $Tr(x) =\sum_{i=0}^{n-1} x^{2^{i}}$.

The set of all Boolean functions in $n$ variables, which is the set of mappings from $\mathbb{F}_2^n$ to $\mathbb{F}_2$, is denoted by $\mathcal{B}_n$. 
It is well-known that any Boolean function $f\in\mathcal{B}_n$ can be uniquely represented by the \textit{algebraic normal form (ANF)}, which is given by $f(x_1,\ldots,x_n)=\sum_{u\in \mathbb{F}^n_2}{\lambda_u}{(\prod_{i=1}^n{x_i}^{u_i})}$, where $x_i, \lambda_u \in \mathbb{F}_2$ and $u=(u_1, \ldots,u_n)\in \mathbb{F}^n_2$.
The \textit{algebraic degree} of $f$, denoted by $\deg(f)$, is the maximum Hamming weight of $u \in \F_2^n$ for which $\lambda_u \neq 0$ in its ANF.

The \textit{first order-derivative} of a function $f\in\mathcal{B}_n$ in the direction $a \in \F_2^n$ is the mapping $D_{a}f(x)=f(x+a) +  f(x)$. Derivatives of higher orders are defined recursively, i.e., the \emph{$k$-th order derivative} of a function $f\in\mathcal{B}_n$ is defined by $D_Vf(x)=D_{a_k}D_{a_{k-1}}\ldots D_{a_1}f(x)=D_{a_k}(D_{a_{k-1}}\ldots D_{a_1}f)(x)$, where $V=\langle a_1,\ldots,a_k \rangle$ is a vector subspace of $\F_2^n$ spanned by elements $a_1,\ldots,a_k\in\F_2^n$. An element $a\in\F_2^n$ is called a \textit{linear structure} of $f\in\mathcal{B}_n$, if $f(x+a)+f(x)=const$ for all $x\in\F_2^n$. We say that  $f\in\mathcal{B}_n$ \textit{has no linear structures}, if $0_n$ is the only linear structure of $f$. 

The \emph{Walsh-Hadamard transform} (WHT) of $f\in\mathcal{B}_n$, and its inverse WHT, at any point $a\in\mathbb{F}^n_2$ are defined, respectively, by
\begin{equation*}
	W_{f}(a)=\sum_{x\in \mathbb{F}_2^n}(-1)^{f(x) +  a\cdot x} \quad\mbox{and}\quad
	(-1)^{f(x)}=2^{-n}\sum_{a\in \mathbb{F}_2^n}W_f(a)(-1)^{a\cdot x}.
\end{equation*}

For even $n$, a function $f\in\mathcal{B}_n$ is called {\em bent} if $W_f(u)=\pm2^{\frac{n}{2}}$ for all $u\in\F_2^n$. For a bent function $f\in\mathcal{B}_n$, a Boolean function $f^*\in \mathcal{B}_n$ defined by $W_f(u)=2^{\frac{n}{2}}(-1)^{f^*(u)}$ for all $u\in\F_2^n$ is a bent function, called the {\it dual} of $f$. Two Boolean functions $f,f'\in\mathcal{B}_n$ are called \textit{extended-affine equivalent}, if there exists an affine permutation $A$ of $\F_2^n$ and affine function $l\in\mathcal{B}_n$, such that $f\circ A + l= f'$. It is well known, that extended-affine equivalence preserves the bent property. In the sequel, while saying two Boolean functions are (in)equivalent, we always mean extended-affine equivalence, since this is the only type of equivalence we deal with in this article. 

The \textit{Maiorana-McFarland class} $\cM$ is the set of $n$-variable ($n=2m$) Boolean bent
functions of the form
\[
f(x,y)=x \cdot \pi(y)+ h(y), \mbox{ for all } x, y\in\F_2^m,
\]
where $\pi$ is a permutation on $\F_2^m$, and $h$ is an arbitrary Boolean function on
$\F_2^m$.
\begin{defi} 
	A class of bent functions $\mathit{B}_n \subset \mathcal{B}_n$ is \emph{complete} if it is globally invariant under extended-affine equivalence. The \emph{completed class}, denoted by $\cM^\#$ in the case of the Maiorana-McFarland class  $\cM$,  is the smallest possible complete class that  contains the class under consideration.
\end{defi}
With the following criterion of Dillon, one can show that a given Boolean bent function $f\in\mathcal{B}_n$ is (not) a member of the completed Maiorana-McFarland class.
\begin{lemma} \cite[p. 102]{Dillon}\label{lem M-M second}
	Let $n=2m$. A Boolean bent function $f\in\mathcal{B}_n$ belongs to $\cM^{\#}$ if and only if
	there exists an $m$-dimensional linear subspace $V$ of $\F_2^n$ such that the second-order derivatives
	$ D_{a}D_{b}f(x)=f(x) +  f(x +  a) +  f(x +  b) +  f(x +  a +  b)$
	vanish for any $ a,  b \in V$.
\end{lemma}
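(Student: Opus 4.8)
The plan is to prove both implications after first isolating the genuinely coordinate-free content of the statement: the property of possessing an $m$-dimensional subspace on which all second-order derivatives vanish is invariant under extended-affine equivalence. To establish this, I would record the transformation rule for higher-order derivatives under an affine change of variables $A(z)=Lz+c$ with $L$ linear and invertible, namely $D_aD_b(f\circ A)=(D_{La}D_{Lb}f)\circ A$, together with the fact that the second-order derivatives of an affine summand $l$ vanish identically. Consequently, if $D_aD_bf\equiv 0$ for all $a,b$ in an $m$-dimensional subspace $V$, then $f\circ A+l$ exhibits the same vanishing on the $m$-dimensional subspace $L^{-1}V$. This single observation reduces both directions to a convenient canonical situation and, incidentally, is exactly what makes the number of such subspaces an equivalence invariant, as is used throughout the paper.

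For the forward implication, I would take $f\in\cM^\#$, so that $f$ is extended-affine equivalent to a genuine Maiorana--McFarland function $g(x,y)=x\cdot\pi(y)+h(y)$. On the canonical subspace $V_0=\F_2^m\times\{0_m\}$ a direct computation gives, for $a=(a_1,0_m)$, the first derivative $D_ag(x,y)=a_1\cdot\pi(y)$, which depends on $y$ only; hence $D_bD_ag\equiv 0$ for every $b=(b_1,0_m)\in V_0$. Thus $g$ has the required subspace, and by the equivalence-invariance just established so does $f$.

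For the converse, suppose $f$ is bent and $D_aD_bf\equiv 0$ for all $a,b\in V$ with $\dim V=m$. Using the invariance, I would pick a linear isomorphism carrying $V$ to $V_0=\F_2^m\times\{0_m\}$ and thereby assume $V=V_0$ outright. Vanishing of all second-order derivatives in the directions of $V_0$ means precisely that for each fixed $y$ the map $x\mapsto f(x,y)$ has algebraic degree at most one, so it can be written as $f(x,y)=x\cdot\phi(y)+h(y)$ for a map $\phi\colon\F_2^m\to\F_2^m$ (whose components are the coefficients of the $x_i$) and $h(y)=f(0_m,y)$. It remains to show that $\phi$ is a permutation, which is where the bent hypothesis enters and which I expect to be the crux of the argument. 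Computing the Walsh--Hadamard transform and summing over $x$ first yields
\begin{equation*}
W_f(a,b)=2^m\sum_{y:\,\phi(y)=a}(-1)^{h(y)+b\cdot y}.
\end{equation*}
If $\phi$ failed to be surjective, some $a$ would lie outside its image, forcing $W_f(a,b)=0$ for all $b$ and contradicting $|W_f|\equiv 2^m$; hence $\phi$ is onto, and being a self-map of a finite set it is bijective. Therefore $f(x,y)=x\cdot\phi(y)+h(y)$ is a Maiorana--McFarland function in the chosen coordinates, so the original $f$ lies in $\cM^\#$.

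The routine parts are the derivative bookkeeping and the direct computation for $g$; the only real subtlety is the reverse direction, where one must observe that bentness is exactly the ingredient that upgrades the affine-in-$x$ representation into a genuine Maiorana--McFarland form by forcing $\phi$ to be a permutation.
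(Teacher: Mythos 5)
The paper does not prove this lemma at all---it is quoted directly from Dillon's thesis---so there is no internal argument to compare against; your proof stands on its own and is correct. It is the standard argument: the transformation rule $D_aD_b(f\circ A)=(D_{La}D_{Lb}f)\circ A$ makes the existence of an $m$-dimensional vanishing subspace EA-invariant, which settles the forward direction and reduces the converse to $V=\F_2^m\times\{0_m\}$, where vanishing of all second-order derivatives in the $x$-directions forces $f(x,y)=x\cdot\phi(y)+h(y)$, and the Walsh-transform computation correctly shows that bentness forces $\phi$ to be surjective, hence a permutation.
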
 
Following the terminology in~\cite{Polujan2020}, we introduce the $\mathcal{M}$-subspaces of Boolean (not necessarily bent) functions in the following way.
\begin{defi}
	Let $f\in\mathcal{B}_n$ be a Boolean function. We call a vector subspace $V$ of $\F_2^n$  an $\mathcal{M}$-subspace of $f$, if for any $ a,  b \in V$ we have that 	$ D_{a}D_{b}f=0$. We denote by $\mathcal{MS}_r(f)$ the collection of all $r$-dimensional $\mathcal{M}$-subspaces of the function $f$.
\end{defi} 
It is well known~\cite{Carlet2021}, that for a bent function $f\in\mathcal{B}_n$ the maximum dimension of an $\mathcal{M}$-subspace is $n/2$; bent functions achieving this bound with equality are exactly the bent functions in $\cM^\#$ by Lemma~\ref{lem M-M second}. For every Maiorana-McFarland bent function $f(x,y)=x\cdot\pi(y)+h(y)$ on $\F_2^m\times\F_2^m$, the vector space $\F_2^{m} \times \{0_m \}$ is an $\mathcal{M}$-subspace of $f$, as observed by Dillon~\cite{Dillon}. However, in general, this vector space  $\F_2^{m} \times \{0_m \}$, which we refer to as \textit{the canonical $\mathcal{M}$-subspace},  is not necessarily unique. For instance, for a Maiorana-McFarland bent function $f$ on  $\F_2^{m} \times \F_2^{m}$, the number of its $\mathcal{M}$-subspaces is at most $\prod_{i=1}^{m} \left(2^i+1\right)$. Moreover, the equality is attained if and only if $f\in\mathcal{B}_{2m}$ is quadratic, as it was deduced in~\cite{PolujanPhD} from~\cite[Theorem 2]{Kolomeec17}. Finally, we note that in~\cite[Proposition 4.4]{Polujan2020} it was shown that the number of $\mathcal{M}$-subspaces of a Boolean function $f\in\mathcal{B}_n$ is invariant under equivalence; consequently, two bent functions with a different number of $\mathcal{M}$-subspaces are inequivalent. One can determine all $\mathcal{M}$-subspaces of a Boolean function $f\in\mathcal{B}_n$ as described in~\cite[Algorithm 1]{Polujan2020}.

We note that for vectorial functions, i.e., the mappings $F\colon\F_2^n\to\F_2^m$, one can essentially extend the definitions related to differential properties (e.g., derivatives, linear structures and $\mathcal{M}$-subspaces) by simply replacing $f\in\mathcal{B}_n$ by  $F\colon\F_2^n\to\F_2^m$ in the corresponding definitions.  For $b\in\F_2^m$, the \textit{component function} $F_b\in\mathcal{B}_n$ of $F\colon\F_2^n\to\F_2^m$ is defined by $F_b(x)=b\cdot F(x)$ for all $x\in\F_2^n$. Finally, every vectorial function $F\colon\F_2^n\to\F_2^m$ can be uniquely represented in the form $F(x)=(f_1(x),\ldots,f_m(x))^T$, where Boolean functions $f_i\in\mathcal{B}_n$ are called the \textit{coordinate functions} of $F$; thus the algebraic normal form and the algebraic degree of $F$ are defined coordinate-wise.

\subsection{Bent 4-concatenation and its algebraic properties}\label{sub: bent-4 cat and its prop}
In the following, we will be mainly interested in the design of bent functions $f\in\mathcal{B}_{n+2}$ from four bent functions $f_1,f_2,f_3,f_4\in\mathcal{B}_n$ using the \textit{bent 4-concatenation} $f=f_1||f_2||f_3||f_4$, of which ANF is given by
\begin{equation}\label{eq:ANF_4conc}
f(x,y_1,y_2)=f_1(x) +  y_1(f_1 +  f_3)(x) +  y_2(f_1 +  f_2)(x) +  y_1y_2(f_1 +  f_2 +  f_3 +  f_4)(x).
\end{equation}
From this expression, it is not difficult to see that  $f_1(x)=f(x,0,0), f_2(x)=f(x,0,1),f_3(x)=f(x,1,0)$ and $f_4(x)=f(x,1,1)$. Note that if $f_i \in \B_n$ are all bent, then the necessary and sufficient condition that $f=f_1||f_2||f_3||f_4 \in \mathcal{B}_{n+2}$ is bent as well, is that the \textit{dual bent condition} is satisfied~\cite{SHCF}, i.e., $f^*_1  +  f^*_2  +  f^*_3  +  f^*_4=1$.

For the further analysis of the bent 4-concatenation $f=f_1||f_2||f_3||f_4$ in terms of the second-order derivatives, we derive the expression for $D_aD_b f(x,y_1,y_2)$ where  $a=(a',a_1,a_2)$ and $b=(b',b_1,b_2)$ and $a',b' \in \F_2^n$ and $a_i,b_i \in \F_2$ as follows:
\begin{equation}\label{eq:2ndderiv_conc correct}
	\begin{split}
		D_aD_bf(x,y_1,y_2)&=D_{a'}D_{b'}f_1(x)+ y_1D_{a'}D_{b'}f_{13}(x)+ y_2 D_{a'}D_{b'}f_{12}(x)+  y_1y_2D_{a'}D_{b'}f_{1234}(x) \\
		& + a_1D_{b'}f_{13}(x +  a')+  b_1 D_{a'}f_{13}(x +  b') +  a_2D_{b'}f_{12}(x +  a') +  b_2D_{a'}f_{12}(x +  b')  \\
		& +  (a_1y_2 +  a_2y_1 +  a_1a_2)D_{b'}f_{1234}(x +  a') +  (b_1y_2 +  b_2y_1 +  b_1b_2)D_{a'}f_{1234}(x +  b') \\
		 &  +  (a_1b_2 +  b_1a_2)f_{1234}  (x +  a' +  b'), 
	\end{split}
\end{equation}
where the Boolean function $f_{i_1\ldots i_k}\in\mathcal{B}_n$ is defined by $f_{i_1\ldots i_k}:=f_{i_1} +  \cdots  +  f_{i_k}$.	

In this context, the main design goal is to specify suitable $f_i\in\mathcal{B}_n$ so that $f\in\mathcal{B}_{n+2}$ is a bent function, and to ensure that $f$ does not satisfy the $\cM^\#$ class membership criterion of Dillon.
\section{Bent functions with more than one $\mathcal{M}$-subspace}\label{sec: 2}

In this section, we derive sufficient conditions that $f(x,y)=x\cdot \pi(y)  + h(y)$ admits more than one $\mathcal{M}$-subspace. This feature is disadvantageous from the perspective of constructing bent functions $f=f_1||f_2||f_3||f_4\in\mathcal{B}_{2m+2}$ outside $\cM^\#$ from Maiorana-McFarland bent functions $f_i\in\mathcal{B}_{2m}$, since in this case, it is more difficult to ensure that the second-order derivatives of $f$ do not vanish on any $(m+1)$-dimensional subspace of $\F_2^{2m+2}$. Essentially, this property is closely related to the choice of a permutation $\pi$ on $\F_2^m$ which is then characterized by the presence of non-zero linear structures or being quadratic.
\subsection{Permutations with linear structures}\label{sub: 2.1}
First, we show that permutations with linear structures give rise to Maiorana-McFarland bent functions with more than one $\mathcal{M}$-subspace.
\begin{prop} \label{prop:uniqueMnecessary} Let $\pi$ be a permutation of $\F_2^m$ with a non-zero linear structure $s \in \F_2^m$, i.e., $$D_s\pi(x)= \pi(x) +  \pi(x+s)=v \in \F_2^m$$ holds for all $x\in\F_2^m$, and let $h: \F_2^m \to \F_2$ be an arbitrary Boolean function.
Then, the function $g\colon\F_2^{m}\times\F_2^{m} \to \F_2$ defined by $$g(x,y)=x \cdot \pi(y) +h(y),\quad\mbox{for all } x,y \in \F_2^{m},$$ 
has at least two $\mathcal{M}$-subspaces.
\end{prop}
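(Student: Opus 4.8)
The plan is to produce, next to the canonical $\mathcal{M}$-subspace $V_0=\F_2^m\times\{0_m\}$, a second $m$-dimensional $\mathcal{M}$-subspace $W\neq V_0$ built directly from the linear structure $s$; since $V_0$ is an $\mathcal{M}$-subspace of every Maiorana-McFarland function (as recalled after Lemma~\ref{lem M-M second}), exhibiting such a $W$ establishes the claim. The first observation I would record is that the value $v=D_s\pi$ is necessarily nonzero: if $v=0_m$ then $\pi(y+s)=\pi(y)$ for all $y$ while $s\neq 0_m$, contradicting the injectivity of the permutation $\pi$. Consequently $v^{\perp}=\{w\in\F_2^m : w\cdot v=0\}$ is a hyperplane of dimension $m-1$.

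Next I would take the candidate
\[
W=\bigl(v^{\perp}\times\{0_m\}\bigr)\oplus\langle (0_m,s)\rangle .
\]
Because $s\neq 0_m$, the generator $(0_m,s)$ does not lie in $\F_2^m\times\{0_m\}$, so $\dim W=(m-1)+1=m$ and $W\neq V_0$ (the latter containing only vectors with vanishing second block). To verify that $W$ is an $\mathcal{M}$-subspace it suffices to check that $D_aD_bg=0$ for all pairs $a,b$ drawn from a spanning set: by the identity $D_aD_bg=D_ag+D_bg+D_{a+b}g$ and the fact that a Boolean function whose coordinate-direction second-order derivatives all vanish is affine, vanishing on basis pairs propagates to all of $W$. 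Pairs taken inside $v^{\perp}\times\{0_m\}$ contribute $0$ because $g$ is linear in the first argument, so the only genuine computation is the mixed second-order derivative.

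That mixed term is the crux. Expanding $D_{(0_m,s)}g(x,y)=g(x,y+s)+g(x,y)$ and substituting $\pi(y+s)=\pi(y)+v$ gives $D_{(0_m,s)}g(x,y)=x\cdot v+D_sh(y)$, after which applying $D_{(w,0_m)}$ retains only the part depending on $x$, yielding $D_{(w,0_m)}D_{(0_m,s)}g=w\cdot v$. This vanishes exactly when $w\in v^{\perp}$, which holds for every $w$ spanning our chosen hyperplane; together with the trivially vanishing pairs this shows all basis second-order derivatives vanish, so $W$ is an $m$-dimensional $\mathcal{M}$-subspace distinct from $V_0$. The only subtleties are (i) noticing that the permutation property forces $v\neq 0_m$, which is precisely what guarantees $\dim(v^{\perp})=m-1$ and hence $\dim W=m$, and (ii) the reduction of the $\mathcal{M}$-subspace condition to a spanning set; both are routine, so I do not anticipate a serious obstacle. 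One may equally replace $(0_m,s)$ by $(c',s)$ for any $c'\in\F_2^m$, since the computation above shows the mixed derivative is independent of $c'$, yielding in fact a whole family of such second $\mathcal{M}$-subspaces.
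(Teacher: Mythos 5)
Your proposal is correct and follows essentially the same route as the paper: both construct the second $\mathcal{M}$-subspace $\langle v^{\perp}\times\{0_m\},\,(0_m,s)\rangle$ and reduce the verification to the single mixed second-order derivative $D_{(w,0_m)}D_{(0_m,s)}g=w\cdot v$, which vanishes on the hyperplane $v^{\perp}$. Your additional observations (that $v\neq 0_m$ because $\pi$ is injective, and that $(0_m,s)$ may be replaced by $(c',s)$) are correct refinements not spelled out in the paper.
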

\begin{proof} Clearly, the canonical $\mathcal{M}$-subspace $\F_2^{m} \times \{0_m \}$ is the first one. We will now construct another one. Let $v=D_s\pi \in \F_2^{m}$ and $W=\langle v \rangle ^{\perp} \subset \F_2^{m}$. Set $V=\langle W \times \{0_m \}, (0_m,s) \rangle$. 
	For two different non-zero vectors $a=(a_1,a_2)$ and $b=(b_1, b_2)$ in $V$ we compute 
	\begin{equation*}
		D_{a}D_{b}g(x)=x\cdot\left( D_{a_2}D_{b_2}\pi(y)\right)
		+  a_1\cdot D_{b_2}\pi(y+ a_2) 
		+   b_1 \cdot D_{a_2}\pi(y+ b_2)+D_{a_2}D_{b_2}h(y). \nonumber
	\end{equation*}
	If $a_2=b_2=0_m$, i.e, if $a$ and $b$ are in $W \times \{0_m \}$, we deduce that $D_{(a_1, a_2)}D_{(b_1, b_2)}g(x)=0$.
	If $b=(0_m,s)$ and $a \in W \times \{0_m \}$, then $a_2=0_m$, and we have 
	$$D_{(a_1, a_2)}D_{(b_1, b_2)}g(x)= a_1 \cdot D_{s}\pi(y)=a_1 \cdot v=0,$$
	since $a_1 \in W= \langle v \rangle ^{\perp}$.
	From this, we conclude that the second-order derivatives of $g$ vanish on $V$ as well.
\end{proof}

However, the condition that permutation $\pi$ of $\F_2^m$ has no linear structures does not imply that the only vanishing $\mathcal{M}$-subspace is $\F_2^m \times \{0_m\}$, as the following example shows.

\begin{ex} \label{ex:twoMsubspaces}
	Let $m=5$ and $\pi$ be a permutation of $\F_2^m$ defined by its algebraic normal form in the following way:
	\begin{equation}\label{eq: pi ugly}
		\pi(y)=\begin{bmatrix} y_1\\ y_2\\ y_3 + y_1 y_3 + y_1 y_5\\ y_1 y_3 + y_2 y_3 + y_4\\ 
			y_1 y_3 + y_2 y_4 + y_5 + y_1 y_5 \end{bmatrix}.
	\end{equation}
	It is not difficult to check, that the only linear structure of $\pi$ is $s=0$. However, the function $g(x,y)=x \cdot \pi(y)$ has exactly two $\mathcal{M}$-subspaces: the canonical $\mathcal{M}$-subspace $\F_2^{m} \times \{0_m \}$ as well as $V$, which is given by:
	\begin{equation*}
		V=
		\left\langle\scalebox{0.7}{$\begin{array}{cccccccccc}
				1 & 0 & 0 & 0 & 0 & 0 & 0 & 0 & 0 & 0 \\
				0 & 1 & 0 & 0 & 0 & 0 & 0 & 0 & 0 & 0 \\
				0 & 0 & 0 & 0 & 0 & 0 & 0 & 1 & 0 & 0 \\
				0 & 0 & 0 & 0 & 0 & 0 & 0 & 0 & 1 & 0 \\
				0 & 0 & 0 & 0 & 0 & 0 & 0 & 0 & 0 & 1 \\
			\end{array}$}
		\right\rangle.
	\end{equation*}
	Note that for the permutation $\pi$ defined in~\eqref{eq: pi ugly}, there exist a lot of Boolean functions $h$ on $\F_2^5$ such that by adding Boolean function $h(y)$ on $\F_2^5$ to $g(x,y)=x \cdot \pi(y)$, one gets a bent function $f(x,y)=x \cdot \pi(y) + h(y)$ having the unique canonical $\mathcal{M}$-subspace. A concrete example of such a function is $h(y_1,\ldots,y_5)=y_3 y_4 y_5$.
\end{ex}

\subsection{Quadratic permutations inducing  more than one  $\mathcal{M}$-subspace  for bent functions in $\cM$}\label{sub: 2.2}
In this subsection, we provide instances of quadratic permutations for which the function defined by $f(x,y)=x \cdot \pi  (y)$ has more than one $\mathcal{M}$-subspace. We will use the following two results from \cite{Kudin2022}.

\begin{lemma} \cite{Kudin2022} \label{lem:algdegvec}
	Let $G: \F_2^m \to \F_2^t$ be a vectorial Boolean function. If there exists an $(m-k)$-dimensional subspace $H$ of $\F_2^m$ such that $D_aD_b G=0_t$ for all $a,b \in H$, then the algebraic degree of $G$ is at most $k+1$.
\end{lemma}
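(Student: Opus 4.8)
My plan is to reduce the vectorial statement to a statement about ordinary Boolean functions and then read off the degree bound from a suitable normal form. First I would note that the hypothesis $D_aD_bG=0_t$ for all $a,b\in H$ holds coordinate-wise: writing $G=(g_1,\dots,g_t)^T$ with $g_i\in\mathcal{B}_m$, the condition says precisely that $D_aD_bg_i=0$ for each coordinate function $g_i$ and all $a,b\in H$. Since the algebraic degree of $G$ is defined coordinate-wise, i.e. $\deg(G)=\max_i\deg(g_i)$, it suffices to prove the scalar statement: any $g\in\mathcal{B}_m$ admitting an $(m-k)$-dimensional subspace $H$ with $D_aD_bg=0$ for all $a,b\in H$ satisfies $\deg(g)\le k+1$.

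Next, since the algebraic degree is invariant under invertible linear (indeed affine) changes of variables, I would apply a linear coordinate change taking $H$ to the canonical subspace $\F_2^{m-k}\times\{0_k\}$. Writing the variable as $(z,w)$ with $z\in\F_2^{m-k}$ and $w\in\F_2^k$, I would argue that the vanishing of all second-order derivatives of $g$ in directions lying in $H$ is equivalent to the restriction $z\mapsto g(z,w)$ being affine for every fixed $w$: restricting to the coset determined by a fixed $w$, the map $g(\cdot,w)$ has all its second-order derivatives on $\F_2^{m-k}$ equal to zero, hence is an affine Boolean function of $z$. Collecting the (in general $w$-dependent) coefficients yields the representation
\[
g(z,w)=\sum_{i=1}^{m-k} z_i\,\alpha_i(w)+\beta(w),
\]
where $\alpha_1,\dots,\alpha_{m-k},\beta\in\mathcal{B}_k$ are Boolean functions of the $k$ variables $w$.

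Finally, I would bound the degree directly from this normal form. Each summand $z_i\,\alpha_i(w)$ has degree at most $1+\deg(\alpha_i)\le 1+k$, since $\alpha_i$ depends on only the $k$ variables $w$, while $\beta(w)$ has degree at most $k$; hence $\deg(g)\le k+1$. Taking the maximum over all coordinate functions $g_i$ then gives $\deg(G)\le k+1$, as claimed.

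The argument is essentially routine; the only step requiring a little care is the equivalence between ``the second-order derivatives of $g$ vanish on $H$'' and ``$g$ is affine on every coset of $H$'', which I would justify by restricting to a single coset and recalling that a Boolean function all of whose second-order derivatives vanish is affine. Everything else follows from the affine-invariance of the algebraic degree and its coordinate-wise definition for vectorial functions.
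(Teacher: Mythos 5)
Your proof is correct and complete: the reduction to coordinate functions, the normalization of $H$ to $\F_2^{m-k}\times\{0_k\}$, the observation that vanishing second-order derivatives on $H$ force $g$ to be affine on every coset of $H$ (via $D_aD_bg(x)=g(x)+g(x+a)+g(x+b)+g(x+a+b)=0$), and the degree count from the normal form $\sum_i z_i\alpha_i(w)+\beta(w)$ are all sound. The paper itself gives no proof of this lemma --- it is imported from the cited reference \cite{Kudin2022} --- so there is nothing to compare against; your argument is the standard one for this statement.
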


\begin{lemma} \cite{Kudin2022} \label{lem:afcomponent} Let $\pi: \F_2^m \to \F_2^m$ be a permutation such that there is a linear hyperplane $V$ of $\F_2^m$, on which $\pi$ is affine. Let $l(x)$ be the linear Boolean function that defines $V$, that is,  $l(x)=0$ if and only if $x \in V$. Then, $l(x)$ or $l(x)+1$ is a component function of $\pi$.
\end{lemma}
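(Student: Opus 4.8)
The plan is to prove Lemma~\ref{lem:afcomponent} by a counting/dimension argument on the images of the two cosets determined by $V$. Since $V$ is a linear hyperplane of $\F_2^m$, it has dimension $m-1$, and $\F_2^m = V \cup (V+t)$ for any fixed $t \notin V$. The key observation I would exploit is that $\pi$ is affine on $V$, so its restriction $\pi|_V$ is an affine map from an $(m-1)$-dimensional space into $\F_2^m$; hence the image $\pi(V)$ is an affine subspace (a coset of a linear subspace) of dimension at most $m-1$. Because $\pi$ is a \emph{permutation}, $\pi(V)$ has exactly $2^{m-1}$ elements, so $\pi(V)$ is in fact an affine hyperplane of $\F_2^m$, i.e.\ a coset of a linear hyperplane.

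First I would record that an affine hyperplane of $\F_2^m$ is precisely a set of the form $\{z \in \F_2^m : c \cdot z = \epsilon\}$ for some nonzero $c \in \F_2^m$ and some $\epsilon \in \F_2$; this is the dual description via a nonzero linear functional $z \mapsto c \cdot z$. Applying this to $\pi(V)$, there exist a nonzero $c$ and a constant $\epsilon$ such that $c \cdot \pi(y) = \epsilon$ holds for exactly those $y$ with $\pi(y) \in \pi(V)$, which—again because $\pi$ is a bijection—are exactly the $y \in V$. In other words, the component function $\pi_c(y) = c \cdot \pi(y)$ takes the constant value $\epsilon$ on $V$ and the value $\epsilon + 1$ on the complementary coset $V + t$ (since $\pi$ maps $V+t$ bijectively onto the complementary affine hyperplane $\F_2^m \setminus \pi(V)$, which is the opposite level set of $c \cdot z$).

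The final step is to match this against the linear functional $l$ defining $V$. By hypothesis $l(y) = 0$ exactly when $y \in V$, so $l$ is the indicator (as an $\F_2$-valued function) that separates $V$ from $V+t$: it is $0$ on $V$ and $1$ on $V+t$. The component $\pi_c$ does the same thing up to the additive constant $\epsilon$: it equals $\epsilon$ on $V$ and $\epsilon+1$ on $V+t$. A Boolean function on $\F_2^m$ that is constant on each of the two cosets of the hyperplane $V$ must be an affine function of $l$, so $\pi_c = l + \epsilon$. Thus $\pi_c = l$ when $\epsilon = 0$ and $\pi_c = l + 1$ when $\epsilon = 1$, which is exactly the claim that $l(x)$ or $l(x)+1$ is a component function of $\pi$.

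The main obstacle I anticipate is justifying cleanly that $\pi(V)$ is genuinely an affine hyperplane rather than merely an affine set of dimension $\le m-1$; this is where both the affineness of $\pi|_V$ (to get an affine image) and the injectivity of $\pi$ (to force the image to have full cardinality $2^{m-1}$, hence exact dimension $m-1$) are simultaneously needed. Once that cardinality/dimension fact is pinned down, the passage to the separating linear functional $c$ and the identification $\pi_c = l + \epsilon$ is routine linear algebra over $\F_2$. One should be slightly careful that $c$ is nonzero—guaranteed since $\pi(V)$ is a \emph{proper} affine subspace—so that $\pi_c$ is a legitimate (nonzero-index) component function of $\pi$.
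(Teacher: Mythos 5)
Your proof is correct. Note that the paper does not prove this lemma at all --- it imports it from the cited reference \cite{Kudin2022} --- so there is no in-paper argument to compare against; your argument is the natural one and matches the standard proof: affineness of $\pi|_V$ makes $\pi(V)$ an affine subspace, injectivity of $\pi$ forces $|\pi(V)|=2^{m-1}$ so that $\pi(V)$ is an affine hyperplane $\{z : c\cdot z=\epsilon\}$ with $c\neq 0_m$, and bijectivity sends the complementary coset $V+t$ onto the complementary level set, giving $c\cdot\pi = l+\epsilon$. All the delicate points you flag (exact dimension of the image, nonvanishing of $c$) are handled correctly.
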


\begin{lemma}\label{lem:afcomponent2} Let $\pi$ be a permutation of $\F_2^m$, such that there exists an $(m-1)$-dimensional subspace $S \subset \F_2^m$ for which $D_aD_b \pi=0_m$, for all $a,b \in S$. Let $s: \F_2^m \to \F_2$ be the linear Boolean function that defines $S$, that is,  $s(y)=0$ if and only if $y \in S$. Then, $\pi$ is at most quadratic and $s(y)$ or $s(y)+1$ is a component function of $\pi$.
\end{lemma}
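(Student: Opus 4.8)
The plan is to combine the two preceding lemmas, which are stated as assumptions. First I would invoke Lemma~\ref{lem:algdegvec} with the vectorial function $G = \pi$, $t = m$, and $k = 1$: the hypothesis gives an $(m-1)$-dimensional subspace $S$ on which all second-order derivatives of $\pi$ vanish, so Lemma~\ref{lem:algdegvec} immediately yields $\deg(\pi) \le 2$, i.e., $\pi$ is at most quadratic. This disposes of the first claim with essentially no work beyond citing the result.

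For the second claim I would argue that the vanishing of $D_aD_b\pi = 0_m$ for all $a,b \in S$ forces $\pi$ to be \emph{affine} when restricted to any coset of $S$, and in particular on $S$ itself (which is a linear hyperplane, since $S$ has dimension $m-1$). Indeed, fixing any basis $a_1,\dots,a_{m-1}$ of $S$, the condition $D_{a_i}D_{a_j}\pi = 0_m$ means every first-order derivative $D_{a_i}\pi$ is constant in the directions of $S$, so $\pi|_S$ has no quadratic (or higher) part; concretely, for $x \in S$ one can write $\pi(x) = \pi(0_m) + \sum_i x_i D_{a_i}\pi(0_m)$ after expressing $x$ in the basis, which is an affine map on $S$. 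Thus $\pi$ is affine on the linear hyperplane $V := S$, and $s$ is precisely the linear functional defining $V$ in the sense of Lemma~\ref{lem:afcomponent}.

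With that established, the final conclusion follows directly from Lemma~\ref{lem:afcomponent}: applying it with $V = S$ and $l = s$ gives that $s(y)$ or $s(y)+1$ is a component function of $\pi$, which is exactly the second half of the statement. So the skeleton is: (i) degree bound via Lemma~\ref{lem:algdegvec}; (ii) reduce the second-order vanishing condition on $S$ to affineness of $\pi$ on the hyperplane $S$; (iii) quote Lemma~\ref{lem:afcomponent}.

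The only genuinely nontrivial step is (ii), the passage from ``all second-order derivatives along $S$ vanish'' to ``$\pi$ is affine on $S$.'' I expect this to be the main obstacle, though it is more bookkeeping than depth: one must be careful that vanishing second-order derivatives only control the behavior \emph{within} cosets of $S$, so the affineness is genuinely a restriction statement (affine on $S$ as a hyperplane), not global affineness of $\pi$. Once phrased as ``$\pi|_S$ is affine,'' Lemma~\ref{lem:afcomponent} is tailored exactly to absorb it, so no additional estimates or case analysis should be needed. A minor point to verify is that $S$, being an $(m-1)$-dimensional \emph{subspace}, is indeed a linear hyperplane containing $0_m$, so that $\pi|_S$ being affine is equivalent to the restriction hypothesis of Lemma~\ref{lem:afcomponent}.
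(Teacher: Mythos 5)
Your proposal is correct and follows essentially the same route as the paper: degree bound from Lemma~\ref{lem:algdegvec}, affineness of $\pi$ on the hyperplane $S$ from the vanishing second-order derivatives, then Lemma~\ref{lem:afcomponent}. The paper's step (ii) is just a shorter version of yours (it sets $y=0_m$ in $D_aD_b\pi(y)=0_m$ to get $\pi(a+b)+\pi(a)+\pi(b)+\pi(0_m)=0_m$ for all $a,b\in S$, which is the same affineness conclusion your basis computation yields).
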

\begin{proof}
	The fact that $\pi$ is at most quadratic follows directly from Lemma \ref{lem:algdegvec}. Let $a,b$ be two arbitrary vectors from $S$. Since $D_aD_b\pi(y)=0_m$ for all $y\in \F_2^m$, setting $y=0_m$ we get:
	$$\pi(a+b)+\pi(a)+\pi(b)+\pi(0_m)=0_m. $$
	Since $a,b \in S$ were arbitrary, we deduce that $\pi$ is affine on the linear hyperplane $S$, and from Lemma \ref{lem:afcomponent} it follows that $s(y)$ or $s(y)+1$ is a component function of $\pi$.
\end{proof}

\begin{prop} Let $\pi$ be a permutation of $\F_2^m$, such that there exists an $(m-1)$-dimensional subspace $S \subset \F_2^m$ for which $D_aD_b \pi=0_m$, for all $a,b \in S$.
	Let $f\colon \F_2^{m}\times \F_2^{m} \to \F_2$ be the function defined by:
	$$f(x,y)=x \cdot \pi(y).$$
	Then, $f$ has at least two $\mathcal{M}$-subspaces.
\end{prop}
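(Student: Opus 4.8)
The plan is to exhibit, alongside the canonical $\mathcal{M}$-subspace $\F_2^m \times \{0_m\}$, a second $m$-dimensional $\mathcal{M}$-subspace of $f$, so that $f$ has at least two such subspaces. The decisive structural input is Lemma~\ref{lem:afcomponent2}: writing $\ell\colon\F_2^m\to\F_2$ for the nonzero linear form with $\ell(y)=0\iff y\in S$, that lemma (together with Lemma~\ref{lem:afcomponent} used in its proof) guarantees a nonzero vector $c\in\F_2^m$ and a constant $\epsilon\in\F_2$ with $c\cdot\pi(y)=\ell(y)+\epsilon$ for all $y$. In words, the affineness of $\pi$ on the hyperplane $S$ forces $\ell+\epsilon$ to be a component function of $\pi$, and this is exactly the handle I would exploit.

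The candidate is $V=\langle c\rangle\times S$, which has dimension $1+(m-1)=m$ and differs from $\F_2^m\times\{0_m\}$, since it contains the vectors $(0_m,s)$ with $s\in S\setminus\{0_m\}$. To verify that $V$ is an $\mathcal{M}$-subspace I would use the second-order derivative expression for $f(x,y)=x\cdot\pi(y)$, obtained exactly as in the proof of Proposition~\ref{prop:uniqueMnecessary} with $h=0$, namely
$$D_aD_bf(x,y)=x\cdot D_{a_2}D_{b_2}\pi(y)+a_1\cdot D_{b_2}\pi(y+a_2)+b_1\cdot D_{a_2}\pi(y+b_2),$$
for $a=(a_1,a_2)$ and $b=(b_1,b_2)$. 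For $a,b\in V$ one has $a_1,b_1\in\langle c\rangle$ and $a_2,b_2\in S$, and the three terms are then treated separately.

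The first term vanishes directly by hypothesis, since $D_{a_2}D_{b_2}\pi=0_m$ whenever $a_2,b_2\in S$. For the two cross terms the key observation is the identity
$$c\cdot D_{s}\pi(z)=D_s(c\cdot\pi)(z)=\ell(z+s)+\ell(z)=\ell(s),$$
valid for every $z\in\F_2^m$; in particular $c\cdot D_s\pi\equiv 0$ for all $s\in S$. Since $a_1$ and $b_1$ are each either $0_m$ or $c$, and since $a_2,b_2\in S$, both $a_1\cdot D_{b_2}\pi(y+a_2)$ and $b_1\cdot D_{a_2}\pi(y+b_2)$ vanish identically. Hence all second-order derivatives of $f$ vanish on $V$, so $V$ is the desired second $\mathcal{M}$-subspace.

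I expect the only genuine subtlety to be the bookkeeping that produces the component identity: one must invoke Lemma~\ref{lem:afcomponent2} to pass from ``$\pi$ is affine on $S$'' to ``$\ell+\epsilon$ is a component of $\pi$'', and then recognize that this single component, through $c\cdot D_s\pi=\ell(s)=0$ for $s\in S$, is precisely what annihilates the cross terms. Once this identity is in hand the verification is immediate; it is worth noting that quadraticity of $\pi$ is not actually needed for this construction, the hypothesis $D_aD_b\pi=0_m$ on $S$ being used only to kill the first term and, via Lemma~\ref{lem:afcomponent2}, to supply the component $\ell+\epsilon$.
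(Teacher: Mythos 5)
Your proof is correct and follows essentially the same route as the paper's: both invoke Lemma~\ref{lem:afcomponent2} to obtain a nonzero $c$ with $c\cdot\pi=\ell+\epsilon$, take the second $\mathcal{M}$-subspace $V=\langle c\rangle\times S$ (the paper writes it as $\langle (c,0_m),\{0_m\}\times S\rangle$, which is the same subspace), and kill the cross terms via $c\cdot D_s\pi=\ell(s)=0$ for $s\in S$. The only cosmetic difference is that you handle all pairs $a,b\in V$ uniformly through the general second-order derivative formula, whereas the paper splits into the two cases $v,w\in\{0_m\}\times S$ and $v=(c,0_m)$, $w\in\{0_m\}\times S$.
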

\begin{proof} It is obvious that $\F_2^m \times \lbrace 0_m \rbrace$ is one $\mathcal{M}$-subspace for $f$.
	Let $s: \F_2^m \to \F_2$ be the linear Boolean function that defines $S$, that is,  $s(y)=0$ if and only if $y \in S$. From Lemma \ref{lem:afcomponent2} we deduce that $s(y)$ or $s(y)+1$ is a component function of $\pi$. Let $c \in \F_2^m$ be such that $c \cdot \pi$ is equal to $s$ or $s+1$. Let $S'$ denote the subspace $S'= \lbrace 0_m \rbrace \times S$, and let $V$ be the $m$-dimensional subspace of $\F_2^{2m}$ defined by $V= \langle (c,0_m),S' \rangle$. We will show that $V$ is also an $\mathcal{M}$-subspace for $f$. If $v=(v_1,v_2)$ and $w=(w_1,w_2)$ are from $V$ such that $v_1=w_1=0_m$, that is $v,w \in S'$, then $v_2,w_2$ are in $S$, and $$D_vD_wf(x,y)=x \cdot D_{v_2}D_{w_2}\pi(y)=0.$$
	Assume now that $v=(c,0_m)$ and $w \in S'$. Then
	$$D_vD_wf(x,y)= D_w(c\cdot \pi(y))= s(y+w_2)+s(y).$$
	Since $w_2$ is in $S$, then $y+w_2$ is in $S$ if and only if $y$ is in $S$, hence $s(y+w_2)=s(y)$, and consequently $$D_vD_wf(x,y)= s(y+w_2)+s(y)=0.$$
	We conclude that $D_vD_wf=0$ for all $v,w \in V$, and hence that $V$ is also an $\mathcal{M}$-subspace for~$f$.
\end{proof}

\section{Bent functions in $\cM$ with the unique canonical $\mathcal{M}$-subspace}\label{sec: 3}
In this section, we characterize more precisely permutations that give rise to the unique  canonical $\mathcal{M}$-subspace  for $f(x,y)=x\cdot\pi(y)+h(y)$.
This is achieved through two useful properties called $(P_1)$ and $(P_2)$ which classify  permutations with respect to vanishing subspaces of its second-order derivatives $D_aD_b \pi$.  In  Section \ref{sec: 4}, we will  provide some generic methods of specifying permutations satisfying these properties, including a  generic class of APN permutations that necessarily satisfy the $(P_1)$ property.
\subsection{Bent functions from permutations having $(P_1)$ property}\label{sub: 3.1}
In the following statement, we provide a sufficient condition on permutations $\pi$ of $\F_2^m$, such that $f(x,y)=x\cdot\pi(y)+h(y)$ has the unique $\mathcal{M}$-subspace $\F_2^m\times\{0_m\}$ independently on the choice of a function $h$ on $\F_2^m$.
\begin{theo}\label{theo: unique P1} Let $\pi$ be a permutation of $\F_2^m$ which has the following property:
	\begin{equation}\label{eq: P1} \tag{$P_1$}
		D_vD_w\pi\neq0_m \mbox{ for all linearly independent } v,w\in\F_2^m.
	\end{equation}
	Define $f\colon\F_2^{m}\times\F_2^{m} \to \F_2$ by $f(x,y)=x \cdot \pi(y) + h(y)$, for all $x,y \in \F_2^{m}$, where $h\colon\F_2^m \to \F_2$ is an arbitrary Boolean function. Then, the following hold:
	\begin{itemize}
		\item[1.] Permutation $\pi$ has no linear structures.
		\item[2.] The vector space $V=\F_2^m \times \{0_m \}$ is the only $\mathcal{M}$-subspace of $f$.
	\end{itemize}
\end{theo}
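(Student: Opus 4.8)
The plan is to extract both claims from the explicit formula for $D_aD_bf$, namely the one computed in the proof of Proposition~\ref{prop:uniqueMnecessary}. For the first claim I argue by contradiction. Suppose $s\neq 0_m$ were a linear structure of $\pi$, so that $D_s\pi\equiv v$ is constant on $\F_2^m$. Differentiating a constant map gives $D_wD_s\pi=D_w(D_s\pi)=0_m$ for every $w\in\F_2^m$; choosing any $w$ linearly independent from $s$ (which exists since $m\geq 2$) contradicts property~\eqref{eq: P1}. Hence $0_m$ is the only linear structure of $\pi$.

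For the second claim, the canonical subspace $V=\F_2^m\times\{0_m\}$ is an $\mathcal{M}$-subspace by Dillon's observation recalled in the preliminaries, so the real content is that no other $m$-dimensional $\mathcal{M}$-subspace exists. Let $U$ be an arbitrary $m$-dimensional $\mathcal{M}$-subspace and take $a=(a_1,a_2),b=(b_1,b_2)\in U$. As in Proposition~\ref{prop:uniqueMnecessary},
$$D_aD_bf(x,y)=x\cdot D_{a_2}D_{b_2}\pi(y)+a_1\cdot D_{b_2}\pi(y+a_2)+b_1\cdot D_{a_2}\pi(y+b_2)+D_{a_2}D_{b_2}h(y).$$
Only the first summand depends on $x$, so the identity $D_aD_bf\equiv 0$ forces $D_{a_2}D_{b_2}\pi=0_m$; by~\eqref{eq: P1} this means $a_2$ and $b_2$ are linearly dependent for all $a,b\in U$. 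Consequently the image $p(U)$ of $U$ under the projection $p(x,y)=y$ is a subspace of $\F_2^m$ in which any two vectors are linearly dependent, whence $\dim p(U)\leq 1$.

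If $\dim p(U)=0$ then $U\subseteq V$, and comparing dimensions yields $U=V$. It remains to exclude $\dim p(U)=1$, which is the crux of the argument. In that case $U_0:=U\cap V$ has dimension $m-1$; writing $W:=\{a_1:(a_1,0_m)\in U_0\}$, a hyperplane of $\F_2^m$, and fixing some $b=(b_1,s)\in U$ with $s\neq 0_m$ spanning $p(U)$, the formula above collapses for every $a=(a_1,0_m)\in U_0$ (using $D_{0_m}\equiv 0$) to $D_aD_bf(x,y)=a_1\cdot D_s\pi(y)$. Vanishing for all $y$ gives $D_s\pi(y)\in W^\perp$ for every $y$; since $\dim W^\perp=1$, the map $D_s\pi$ takes values in a two-element set $\{0_m,v_0\}$. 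But $\pi$ is a permutation and $s\neq 0_m$, so $D_s\pi(y)=\pi(y)+\pi(y+s)\neq 0_m$ for all $y$, forcing $D_s\pi\equiv v_0$. Thus $s$ would be a nonzero linear structure of $\pi$, contradicting the first claim. Therefore $\dim p(U)=1$ is impossible and $U=V$.

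The main obstacle is precisely this last elimination of the one-dimensional projection case: property~\eqref{eq: P1} on its own only confines $p(U)$ to dimension at most one, and closing the gap requires feeding the $x$-independent part of the second-order derivative back in and invoking both the injectivity of $\pi$ (so that $D_s\pi$ never vanishes) and the already-established absence of linear structures. I also note that, strictly speaking, every subspace of $V$ is an $\mathcal{M}$-subspace in the sense of the definition, so the uniqueness asserted here is understood among $\mathcal{M}$-subspaces of the maximal dimension $m$, which are exactly the ones governing $\cM^{\#}$-membership via Dillon's criterion.
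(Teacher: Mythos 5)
Your proposal is correct and follows essentially the same route as the paper: part 1 is the identical contradiction argument, and part 2 likewise projects the candidate subspace onto the $y$-coordinates, uses~\eqref{eq: P1} to bound the image dimension by one, and kills the one-dimensional case by showing the orthogonality constraint together with the non-vanishing of $D_s\pi$ (as $\pi$ is a permutation) would force $s$ to be a linear structure. The paper phrases this last step dually, via $\dim\langle \mathrm{Im}\,D_{b_2}\pi\rangle\geq 2$ versus the $(m-1)$-dimensional space of first coordinates of the kernel, but the content is the same, and your closing remark that uniqueness is meant among $\mathcal{M}$-subspaces of maximal dimension $m$ matches the paper's reading.
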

\begin{proof} 
	\emph{1.} Assume that $\pi$ has a non-zero linear structure $a\in\F_2^m$, i.e., for all $x\in\F_2^m$ holds $D_{a}\pi(x)=v$ for some $v\in\F_2^m$. Then, taking $b\in\F_2^m\setminus\{0_m,a\}$, we get that $D_{a}D_{b}\pi=0_m$, which contradicts the property~\eqref{eq: P1}. \ \\ 
	\noindent\emph{2.} Let $V$ be an $m$-dimensional subspace of $\F_2^{2m}$ such that $D_aD_bf=0$ for all $a,b \in V$. Define the linear mapping $L:V \to \F_2^m$ by $L(x,y)=y$, for all $(x,y) \in V$.
	
	In general, the second-order derivative of  $f$ is given by,
	\begin{equation} \label{eq: secder2}
		D_{(a_1, a_2)}D_{(b_1, b_2)}f(x,y)
		=x\cdot\left( D_{a_2}D_{b_2}\pi(y)\right)+  a_1\cdot D_{b_2}\pi(y+ a_2)
		+ b_1 \cdot D_{a_2}\pi(y+ b_2) + D_{a_2}D_{b_2}h(y) . 
	\end{equation}
	If $a_2,b_2 \in \F_2^m \setminus \{0_m \}$ and $a_2 \neq b_2$, then $D_{a_2}D_{b_2}\pi(y) \neq 0_m$, so $D_{(a_1, a_2)}D_{(b_1, b_2)}f \neq 0$, because $x\cdot\left( D_{a_2}D_{b_2}\pi(y)\right) \neq 0$. Since for all $a,b \in V$ we have $D_aD_bf=0$, we deduce that, for all $a=(a_1,a_2),b=(b_1,b_2)$ in $V$, either  $L(a)=a_2=0_m$, or $L(b)=b_2=0_m$, or $L(a)=a_2=b_2=L(b)$.
	This means that $\dim(Im(L)) \leq 1$. From the rank-nullity theorem, we get that $\dim(Ker(L))\geq m-1$. If $\dim(Ker(L))=m$, then $V=\F_2^m \times \{0_m \}$. 
	
	Assume now that $\dim(Ker(L))=m-1$, and let $b=(b_1,b_2) \in V$ be the vector such that $b_2\neq0_m$. For all $a=(a_1,a_2) \in Ker(L)$ we have $a_2=0$, and hence 
	\begin{equation} \label{a2iszero}
		D_{(a_1, a_2)}D_{(b_1, b_2)}f(x,y) = a_1 \cdot D_{b_2} \pi(y)=0, \text{ for all } y \in \F_2^m.
	\end{equation}
	Denote by $S_b$ the subspace of $\F_2^m$ generated by $\{D_{b_2} \pi(y) \colon y \in \F_2^m \}$. Not that, since $\pi$ is a permutation, and $b_2\neq 0_m$ the vector $D_{b_2} \pi(y)=\pi(y)+\pi(y + b_2)$ is never equal to $0_m$, this means that if $\dim(S_b)=1$, then $D_{b_2} \pi(y)$ is constant (i.e.,\ $b_2$ is a linear structure for $\pi$), and consequently, for any nonzero $c \in \F_2^m \setminus \{0_m,b_2 \}$, we have $D_cD_{b_2}\pi=0_m$, which is in contradiction with the assumption $D_vD_w\pi \neq 0_m$, for all nonzero different $v,w \in \F_2^m$. This implies that $\dim(S_b) \geq 2$, and hence $\dim(S_b^{\perp})\leq m-2$. From the equation \eqref{a2iszero} we have that for every $a=(a_1,a_2) \in Ker(L)$, the vector $a_1$ is in $S_b^{\perp}$, hence $\{ a_1 \colon a=(a_1,a_2) \in Ker(L) \} \subseteq S_b^{\perp}$. However, $\dim( \{ a_1 \colon a=(a_1,a_2) \in Ker(L) \} )=\dim(Ker(L))=m-1$, and this is a contradiction, because $\dim(S_b^{\perp})\leq m-2$. This means that the case $\dim(Ker(L))=m-1$ is not possible, hence, the only $m$-dimensional subspace of $\F_2^{2m}$ such that $D_aD_bf=0$ for all $a,b \in V$, is $V=\F_2^m \times \{0_m \}$.
\end{proof}

Imposing an additional condition on the permutation $\pi$, it is possible to further refine the  structure of vanishing subspaces.

\begin{cor}\label{cor:dim2}  Let $\pi$ be a permutation of $\F_2^m$ with the property~\eqref{eq: P1} and such that $\gamma\cdot \pi$ has no nonzero linear structures for $\gamma\in \F_2^{m}\setminus \{0_m\}$. Let $f\colon\F_2^{m}\times\F_2^{m} \to \F_2$ be the function defined by $f(x,y)=x \cdot \pi(y) + h(y)$, for all $x,y \in \F_2^{m}$, where $h: \F_2^m \to \F_2$ is an arbitrary Boolean function. If $S$ is a subspace of $\F_2^{m}\times \F_2^{m}$ such that $\dim(S)>1$ and $D_aD_bf=0$, for all $a,b \in S$, then $S$ is a subspace of $\F_2^{m} \times \{0_m \}$.
\end{cor}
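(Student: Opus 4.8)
The plan is to re-run the argument of Theorem~\ref{theo: unique P1} through the projection $L\colon S\to\F_2^m$, $L(x,y)=y$, and to show that the single case left open there is now excluded outright by the extra hypothesis on the components of $\pi$. So first I would recall the second-order derivative formula~\eqref{eq: secder2} and apply property~\eqref{eq: P1} verbatim: if $a=(a_1,a_2)$ and $b=(b_1,b_2)$ lie in $S$ with $a_2,b_2$ nonzero and distinct, then $D_{a_2}D_{b_2}\pi\neq0_m$, so the leading term $x\cdot\bigl(D_{a_2}D_{b_2}\pi(y)\bigr)$ is not identically zero and $D_aD_bf\neq0$. Since all second-order derivatives vanish on $S$, this forces every pair of second coordinates to be $0_m$ or to coincide, i.e.\ $\dim(\mathrm{Im}(L))\le 1$.

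If $\dim(\mathrm{Im}(L))=0$ then $L\equiv0$ and $S\subseteq\F_2^m\times\{0_m\}$, which is exactly the claim, so the content of the proof is to rule out $\dim(\mathrm{Im}(L))=1$. Here I would invoke the hypothesis $\dim(S)>1$: by rank--nullity, $\dim(\mathrm{Ker}(L))=\dim(S)-1\ge 1$, so there is a \emph{nonzero} $a_1\in\F_2^m$ with $(a_1,0_m)\in S$, and (since $\mathrm{Im}(L)$ is one-dimensional) some $b=(b_1,s)\in S$ with $s\neq0_m$. Substituting $a=(a_1,0_m)$ and $b=(b_1,s)$ into~\eqref{eq: secder2} and using $D_{0_m}g=0$, the same collapse that produced~\eqref{a2iszero} leaves $D_aD_bf(x,y)=a_1\cdot D_s\pi(y)$, which must be identically zero in $y$.

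The key step—and the only place the new assumption is used—is to rewrite $a_1\cdot D_s\pi(y)=D_s(a_1\cdot\pi)(y)\equiv0$ as the statement that $s$ is a linear structure of the component function $a_1\cdot\pi$. Because $a_1\neq0_m$ and $s\neq0_m$, this contradicts the hypothesis that $\gamma\cdot\pi$ has no nonzero linear structures for every $\gamma\in\F_2^m\setminus\{0_m\}$. That contradiction kills the case $\dim(\mathrm{Im}(L))=1$, leaving $\dim(\mathrm{Im}(L))=0$ and hence $S\subseteq\F_2^m\times\{0_m\}$.

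I do not expect a genuine obstacle. In contrast to Theorem~\ref{theo: unique P1}, where one had to estimate the dimension of the span $\langle D_{b_2}\pi(y):y\in\F_2^m\rangle$ and argue that it is at least two, the stronger component hypothesis makes the contradiction immediate. The only point requiring care is the bookkeeping guaranteeing that the kernel supplies a \emph{nonzero} $a_1$, which is precisely where the assumption $\dim(S)>1$ is essential: for a one-dimensional $S$ the kernel of $L$ could be trivial and no contradiction would arise.
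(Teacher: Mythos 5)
Your proof is correct and follows essentially the same route as the paper: property~\eqref{eq: P1} forces the projection onto the second coordinate to have image of dimension at most one, and the component linear-structure hypothesis then kills the remaining case via $a_1\cdot D_s\pi(y)=D_s(a_1\cdot\pi)(y)\equiv0$. The only cosmetic difference is that you organize the argument through $L$ and rank--nullity (so a nonzero kernel element is guaranteed), whereas the paper does a direct case analysis on pairs $(a,b)$ including the subcase $a_2=b_2\neq0_m$; the two are equivalent since the difference of two such elements is a nonzero kernel element.
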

\begin{proof}
	Notice that since $\pi$ has the~\eqref{eq: P1} property, there exist no two distinct nonzero elements
	$u, v \in\F_2^m$ such that $D_uD_v\pi(y)=0_m$, for all $y \in \F_2^m$.
	 Consequently, $\pi(y)+\pi(y+u) + \pi(y+v) +  \pi(y+u+v)\neq 0_m$ for any distinct  nonzero $u,v \in\F_2^m$. Then,  denoting $a=(a_1,a_2)$, $  b=(b_1,b_2)\in\F_2^m \times \F_2^m$, we  have   
	\begin{equation*}
			D_{(a_1, a_2)}D_{(b_1, b_2)}f(x,y)
			=x\cdot\left( D_{a_2}D_{b_2}\pi(y)\right)
			+  a_1\cdot D_{b_2}\pi(y+ a_2) 
			+   b_1 \cdot D_{a_2}\pi(y+ b_2)
			+D_{a_2}D_{b_2}h(y). 
	\end{equation*} 
	The term $x\cdot\left( D_{a_2}D_{b_2}\pi(y)\right)$ cannot be cancelled unless $a_2=0_m$  or $b_2=0_m$, alternatively  $a_2=b_2 \neq 0_m$. 
	Assuming  that  $a_2=0_m$ and $b_2 \neq 0_m$ (the same  reasoning applies if $b_2=0_m$) leads to $D_{(a_1, a_2)}D_{(b_1, b_2)}f(x,y)=  a_1\cdot D_{b_2}\pi(y) $ which implies that $a_1=0_m$ and therefore $a=(a_1,a_2)=0_m,0_m)$, a contradiction. The case $a_2=b_2 \neq 0_m$, implying also that $a_1 \neq b_1$ since $\dim(S) >1$, gives $D_{(a_1, a_2)}D_{(b_1, b_2)}f(x,y)=(a_1+ b_1) \cdot D_{a_2}\pi(y+ a_2)$ which is nonzero (since $a_1+b_1 \neq 0_m$) and consequently $D_{(a_1, a_2)}D_{(b_1, b_2)}f(x,y)\neq 0$.
\end{proof}


The following result specifies  both the  necessary and sufficient condition for a permutation $\pi$ on $\F_2^m$, when the function $h(y)=\delta_0(y)=\prod_{i=1}^m(y_i+1)$ is used to define $f(x,y)= x \cdot \pi(y) + h(y)$, so that $f$  admits only the canonical vanishing  $\mathcal{M}$-subspace $\F_2^m \times \{0_m\}$.  
\begin{prop}
Let $\pi$ be a permutation of $\F_2^m$ with $\deg(\pi)<m-1$, and let $f\colon \F_2^{m}\times\F_2^{m} \to \F_2$ be the function defined by 
$$f(x,y)= x \cdot \pi(y) + \delta_0(y), \text{ for all } x,y \in \F_2^{m}.$$
Then $f$ has only one $\mathcal{M}$-subspace if and only if $\pi$ has no  nonzero linear structures.
\end{prop}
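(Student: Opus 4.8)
The plan is to treat the two implications separately; the forward ``only if'' direction will be essentially free, while the reverse ``if'' direction carries all the content. For the ``only if'' direction I would argue contrapositively: if $\pi$ has a nonzero linear structure, then Proposition~\ref{prop:uniqueMnecessary}, applied with the particular choice $h=\delta_0$, already guarantees that $f(x,y)=x\cdot\pi(y)+\delta_0(y)$ has at least two $\mathcal{M}$-subspaces. Consequently, a unique $\mathcal{M}$-subspace forces $\pi$ to have no nonzero linear structures, and no computation beyond invoking the proposition is required.

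For the ``if'' direction, assume $\pi$ has no nonzero linear structures and $\deg(\pi)<m-1$. Following the strategy of Theorem~\ref{theo: unique P1}, I would take an arbitrary $m$-dimensional $\mathcal{M}$-subspace $V$ of $f$ and study the projection $L\colon V\to\F_2^m$, $L(x,y)=y$; the goal is to show $\dim(Im(L))=0$, which forces $V=\F_2^m\times\{0_m\}$. Writing $a=(a_1,a_2)$, $b=(b_1,b_2)\in V$ and expanding $D_aD_bf$ via \eqref{eq: secder2} with $h=\delta_0$, the identity $D_aD_bf=0$ separates into the $x$-linear part $D_{a_2}D_{b_2}\pi(y)=0_m$ and the $x$-free part
$$a_1\cdot D_{b_2}\pi(y+a_2)+b_1\cdot D_{a_2}\pi(y+b_2)+D_{a_2}D_{b_2}\delta_0(y)=0.$$
The crucial observation is that $D_{a_2}D_{b_2}\delta_0$ is the indicator function of $\{0_m,a_2,b_2,a_2+b_2\}$, which for linearly independent $a_2,b_2$ is the two-dimensional subspace $\langle a_2,b_2\rangle$ and therefore has algebraic degree exactly $m-2$.

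The main obstacle, and the step where the degree hypothesis on $\pi$ is essential, is to bound $\dim(Im(L))\le 1$. Suppose $\dim(Im(L))\ge 2$; then I can pick linearly independent $a_2,b_2\in Im(L)$ together with preimages $a,b\in V$. In the displayed $x$-free equation the two $\pi$-terms have algebraic degree at most $\deg(\pi)-1\le m-3$, whereas $D_{a_2}D_{b_2}\delta_0$ has degree exactly $m-2$; hence the homogeneous degree-$(m-2)$ part of the left-hand side equals that of $D_{a_2}D_{b_2}\delta_0$ and is nonzero, so the equation cannot hold. This is precisely why $\delta_0$ is chosen: its maximal degree makes its second-order derivatives too high in degree to be absorbed by the low-degree contributions of $\pi$, and the strict inequality $\deg(\pi)<m-1$ is exactly what guarantees the gap.

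It then remains to exclude $\dim(Im(L))=1$. In that case $Ker(L)=W\times\{0_m\}$ for a hyperplane $W\subset\F_2^m$, and there is $(b_1,b_2)\in V$ with $b_2\neq 0_m$. Taking $a=(a_1,0_m)$ with $a_1\in W$, the $x$-free equation collapses to $a_1\cdot D_{b_2}\pi(y)=0$ for all $y$ and all $a_1\in W$, so $D_{b_2}\pi(y)\in W^\perp$ for every $y$. Since $\dim(W^\perp)=1$ and $D_{b_2}\pi(y)=\pi(y)+\pi(y+b_2)\neq 0_m$ (as $\pi$ is a permutation and $b_2\neq 0_m$), the value $D_{b_2}\pi(y)$ must be the unique nonzero element of $W^\perp$, i.e.\ constant in $y$; this makes $b_2$ a nonzero linear structure of $\pi$, a contradiction. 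Hence $\dim(Im(L))=0$ and $V=\F_2^m\times\{0_m\}$, completing the uniqueness argument.
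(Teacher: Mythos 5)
Your proposal is correct and follows essentially the same route as the paper's proof: the ``only if'' direction via Proposition~\ref{prop:uniqueMnecessary}, and the ``if'' direction via the projection $L(x,y)=y$, ruling out $\dim(Im(L))\geq 2$ by the degree gap between $D_{a_2}D_{b_2}\delta_0$ (degree $m-2$) and the $\pi$-terms, and ruling out $\dim(Im(L))=1$ by deriving a nonzero linear structure of $\pi$. The only (harmless) addition is your explicit identification of $D_{a_2}D_{b_2}\delta_0$ as the indicator of $\langle a_2,b_2\rangle$, which the paper leaves implicit.
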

\begin{proof}
If $\pi$ has linear structures, then the fact that $f$ has at least two $\mathcal{M}$-subspaces follows from Proposition \ref{prop:uniqueMnecessary}.

Assume now that $\pi$ has no nonzero linear structures. Let $V$ be an $m$-dimensional subspace of $\F_2^{2m}$ such that $D_aD_bf=0$ for all $a,b \in V$. Define the linear mapping $L:V \to \F_2^m$ by $L(x,y)=y$, for all $(x,y) \in V$.
	In general, the second-order derivative of  $f$, for any $a_1,a_2,b_1,b_2 \in \F_2^m$, is given by
	\begin{equation}\label{secderdelta0}
		D_{(a_1, a_2)}D_{(b_1, b_2)}f(x,y)
		=x\cdot\left( D_{a_2}D_{b_2}\pi(y)\right)+  a_1\cdot D_{b_2}\pi(y+ a_2)
		+ b_1 \cdot D_{a_2}\pi(y+ b_2) + D_{a_2}D_{b_2}\delta_0(y) . 
	\end{equation}

Assume that $\dim(Im(L))\geq 2$. Let $(c_1,c_2), (d_1,d_2) \in V$ be such that $c_2$ and $d_2$ are two different nonzero elements in $\F_2^m$. Since the algebraic degree of $D_{c_2}D_{d_2}\delta_0(y)$ is $m-2$, and since $\deg(\pi)<m-1$, from \eqref{secderdelta0} we deduce that the algebraic degree of $D_{(c_1, c_2)}D_{(d_1, d_2)}f$ is $m-2$, and that is a contradiction, since $(c_1,c_2), (d_1,d_2) \in V$ and so $D_{(c_1, c_2)}D_{(d_1, d_2)}f=0$.

If $\dim(Im(L))=1$, then $\dim(Ker(L))=m-1$. Let $(a_1,a_2) \in V$ be such that $a_2 \neq 0_m$, and let $(b_1,0_m) \in V$ be an arbitrary element in $Ker(L)$. From \eqref{secderdelta0} we compute
	\begin{equation*}
		D_{(a_1, a_2)}D_{(b_1, 0_m)}f(x,y)
		= b_1 \cdot D_{a_2}\pi(y)=0, \text{ for all } x,y \in \F_2^m. 
	\end{equation*}
This means that the subspace $S_{a_2}$ generated by the set $ \lbrace D_{a_2}\pi(y) \colon y \in \F_2^m \rbrace$ is in the orthogonal complement of $b_1$, for every $b_1$ such that $(b_1,0_m) \in Ker(L)$. Since $\dim(Ker(L))=m-1$, we deduce that $\dim(S_{a_2})=1$. Also, $\pi$ is a permutation and $a_2 \neq 0_m$, so $D_{a_2}\pi(y) \neq 0_m$, for all $y \in  \F_2^m$, hence $ \lbrace D_{a_2}\pi(y) \colon y \in \F_2^m \rbrace = \lbrace v \rbrace$ for some nonzero $v \in \F_2^m$, and this means that $a_2$ is a nonzero linear structure of $\pi$. However, this is a contradiction, since the assumption is that $\pi$ has no nonzero linear structures.

We conclude that it has to be the case that $\dim(Im(L))=0$, and consequently that the only $\mathcal{M}$-subspace of $f$ is $V= \F_2^m \times \lbrace 0_m \rbrace$.
\end{proof}


\subsection{Bent functions from permutations having $(P_2)$ property}\label{sub: 3.2}
In the following statement, we show that even permutations on $\F_2^m$, for which second-order derivatives vanish on a certain $(m-k)$-dimensional subspace $S$ (where $2 \leq k  \leq m-1$), can  still be used for the construction of Maiorana-McFarland bent functions with a unique $\mathcal{M}$-subspace.
\begin{prop}\label{prop:suffcondunique}	
	Let $\pi$  be a nonlinear permutation over $\F_2^m$ and $f(x,y)= x \cdot \pi(y)$ a bent function in $\cM$. Denote by $S$ a vector subspace of $\F_2^m$ for which
	$D_aD_b \pi(y) = 0$, for any $a,b\in S$,  where $\dim(S) \geq 1$. If $\dim(S)=m-k$, then the necessary and sufficient condition for $f$ to have the unique canonical $\mathcal{M}$-subspace is that there do not exist  linearly independent 
		$u_1, \ldots, u_k \in \F_2^m$ for which $u_i \cdot D_a \pi(y)=0$ for any $a \in S$, and we necessarily have that $2 \leq k  \leq m-1$.
\end{prop}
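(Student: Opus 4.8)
The plan is to read off, from the second-order derivative of $f(x,y)=x\cdot\pi(y)$ (formula~\eqref{eq: secder2} with $h\equiv 0$), the two conditions that any $\mathcal{M}$-subspace must satisfy. Since the term $x\cdot\bigl(D_{a_2}D_{b_2}\pi(y)\bigr)$ is the only one depending on $x$, the identity $D_{(a_1,a_2)}D_{(b_1,b_2)}f\equiv 0$ splits into two separate requirements: $D_{a_2}D_{b_2}\pi\equiv 0_m$, and $a_1\cdot D_{b_2}\pi(y+a_2)+b_1\cdot D_{a_2}\pi(y+b_2)\equiv 0$. Throughout I will use the standard fact that an $m$-dimensional $V$ is an $\mathcal{M}$-subspace iff these vanish for all \emph{pairs of basis vectors} of $V$ (a function is affine on a subspace precisely when its second-order derivatives vanish on a basis). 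I write $T_S=\langle D_a\pi(y):a\in S,\ y\in\F_2^m\rangle$ and note that ``$u_i\cdot D_a\pi(y)=0$ for all $a\in S$'' is exactly $u_i\in T_S^{\perp}$; hence the existence of $k$ linearly independent such $u_i$ is equivalent to $\dim T_S\le m-k=\dim S$.

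For the direction ``such $u_i$ exist $\Rightarrow$ $f$ is not unique,'' I would give the explicit construction $V=\langle (u_1,0_m),\dots,(u_k,0_m)\rangle\oplus(\{0_m\}\times S)$. A dimension count gives $\dim V=k+(m-k)=m$, and $V\neq\F_2^m\times\{0_m\}$ because $S\neq\{0_m\}$. Checking the three types of basis pairs settles it: pairs inside $\{0_m\}\times S$ vanish by $D_aD_b\pi=0$ on $S$; pairs among the $(u_i,0_m)$ vanish trivially (derivatives in direction $0_m$); and a mixed pair $(u_i,0_m),(0_m,s)$ reduces via the second condition to $u_i\cdot D_s\pi(y)=0$, which holds since $u_i\in T_S^{\perp}$ and $s\in S$. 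Thus $V$ is a second $\mathcal{M}$-subspace, so $f$ lacks the unique canonical one; contrapositively, uniqueness forces the non-existence of such $u_i$.

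For the converse, assume $f$ has a non-canonical $\mathcal{M}$-subspace $V$ of dimension $m$, project by $L(x,y)=y$, and set $K=Ker(L)$, $P=Im(L)$, $K_1=\{a_1:(a_1,0_m)\in K\}$, so that $\dim K_1=\dim K=m-\dim P$ and $P\neq\{0_m\}$. The first condition applied to all pairs of $V$ forces $D_pD_{p'}\pi\equiv 0$ for $p,p'\in P$, so $P$ is itself a vanishing subspace; the second condition applied to a pair with $a\in K$ gives $a_1\cdot D_{b_2}\pi(y)=0$ for all $b_2\in P$, i.e.\ $K_1\subseteq T_P^{\perp}$ and hence $\dim T_P\le\dim P$. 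A basis of $K_1$ then yields $m-\dim P$ linearly independent vectors orthogonal to $T_P$; identifying this image $P$ with the prescribed $S$ (so $\dim P=m-k$ and $\dim K_1=k$) produces exactly the $k$ independent $u_i\in T_S^{\perp}$, contradicting their non-existence and closing the biconditional.

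The range of $k$ is quick: $\dim S\ge 1$ gives $k\le m-1$, while $\pi$ nonlinear forbids $S=\F_2^m$, and $k=1$ (a vanishing hyperplane) is excluded under uniqueness because the preceding Proposition already exhibits a second $\mathcal{M}$-subspace in that case, so $k\ge 2$. I expect the genuine obstacle to be the identification in the previous paragraph: an arbitrary non-canonical $\mathcal{M}$-subspace need not have image exactly $S$, so the crux is to show that its image $P$ must coincide with (or match the dimension of) the prescribed $S$. This is where maximality of $S$ as a vanishing subspace, the constraint $\dim T_P\le\dim P$, and the structural information on $\pi$ restricted to vanishing hyperplanes from Lemma~\ref{lem:afcomponent2} should be invoked; matching $P$ to $S$, rather than either derivative computation, is the step I anticipate will require the most care.
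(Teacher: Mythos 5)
Your constructive direction (existence of $u_1,\dots,u_k$ implies a second $\mathcal{M}$-subspace) is complete, correct, and essentially identical to the paper's: the paper likewise adjoins $(u_1,0_m),\dots,(u_k,0_m)$ to $S'=\{0_m\}\times S$ and verifies the three types of pairs via the same splitting of \eqref{eq: secder2} into the $x$-dependent term and the cross terms. Your reduction to basis pairs is legitimate (the identity $D_{a+b}g=D_ag+D_bg+D_aD_bg$ propagates vanishing of second-order derivatives from a basis to the whole subspace), and your treatment of the range of $k$ — ruling out $k=1$ via Lemma~\ref{lem:afcomponent2}, which forces a component of $\pi$ to be constant on the hyperplane $S$ and hence produces a $u_1$ — is exactly the paper's argument for why $\dim(S)\le m-2$ is necessary for uniqueness.

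The gap you flag in the converse is real, and you are right that it is the crux: your projection argument produces $m-\dim P$ independent vectors orthogonal to $T_P$ for the image $P$ of the hypothetical non-canonical $\mathcal{M}$-subspace, but nothing forces $P=S$, so you cannot conclude the existence of $k$ independent vectors orthogonal to $T_S$ for the \emph{prescribed} $S$. You should know, however, that the paper's own proof does not close this either: it only argues that $\{0_m\}\times S$ (for the given $S$) cannot be extended to an $m$-dimensional vanishing subspace of $f$, whether by vectors $(u_i,0_m)$ or by vectors $(b_1,b_2)$ with $b_2\in S$, and never considers $\mathcal{M}$-subspaces whose $y$-projection is a different vanishing subspace of $\pi$. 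The statement is only literally an equivalence if the non-existence condition is required for \emph{every} vanishing subspace $S$ of $\pi$ of the relevant dimensions — which is precisely how the subsequent definition of ``fully satisfying \eqref{eq: P2}'' is phrased. So your converse, run for an arbitrary vanishing subspace $P$ in place of $S$, is in fact the correct way to finish: any non-canonical $\mathcal{M}$-subspace yields some vanishing subspace $P$ of $\pi$ together with $m-\dim P$ independent vectors in $T_P^{\perp}$, and this is what the fully-quantified hypothesis excludes. Invoking Lemma~\ref{lem:afcomponent2} to force $P=S$, as you propose, will not work in general, since $\pi$ may have several inequivalent vanishing subspaces of the same dimension.
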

\begin{proof}
	It is clear that if $\pi$ is linear/affine then $\dim(S)=m$ and the number of $\mathcal{M}$-subspaces is $\prod_{i=1}^{m} \left(2^i+1\right)$. Thus, we need to show that $\dim(S)$ cannot be $m-1$.  Assuming that   $\dim(S)=m-1$,  Lemma \ref{lem:algdegvec} and Lemma \ref{lem:afcomponent2} imply that $\pi$ is at most quadratic  and affine on this hyperplane determined by $S$. Furthermore,  there exists $u_1$ such that $u_1 \cdot D_a \pi(y)=0$. 
	Noticing that $D_{a_2}D_{b_2} \pi(y)=0$ for any $a_2,b_2 \in S$,  
	let  $S' =\{0_m\} \times S$ be a subspace of $\F_2^m \times \F_2^m$ of dimension $m-1$. Then,  for any $a=(a_1,a_2),b=(b_1,b_2) \in S'$  
	\begin{equation} \label{eq: secorder}
			D_{(a_1, a_2)}D_{(b_1, b_2)}f(x,y)
			=x\cdot\left( D_{a_2}D_{b_2}\pi(y)\right)
			+  a_1\cdot D_{b_2}\pi(y+ a_2) 
			+   b_1 \cdot D_{a_2}\pi(y+ b_2)=0,
	\end{equation} 
	since $a_1=b_1=0$. Then, adjoining $(u_1,0)$ to $S'$ so that $S=\langle (u_1,0), S' \rangle$, we would have that $\dim(S)=m$ and $D_{(u_1, 0)}D_{(b_1, b_2)}f(x,y)=0$ for any $(b_1, b_2) \in S'$ (where $b_1=0$). Consequently, $S$ is a  vanishing subspace for $f$ and different from $\F_2^m \times 0$. Thus, to  have the unique vanishing subspace we necessarily have that $\dim(S)\leq m-2$, that is $ k \geq 2$.
	
	
	 In general,  when $\dim(S)=m-k$ where $2 \leq k  \leq m-1$ a similar reasoning applies. Extending $S' =  \{0_m\} \times S$ to the full dimension $m$, by adjoining $(u_1,0_m), \ldots, (u_k,0_m)$ to $S'$, is impossible due to our assumption. This follows from the fact that taking, e.g., $(u_1,0)$ and $(b_1,b_2) \in S'$ (where $b_1=0$), the equation \eqref{eq: secorder} reduces to $u_1 \cdot D_{b_2}\pi(y)$, which is nonzero.  On the other hand, we can also extend $S'$ by adjoining elements in $(b_1,b_2) \in \F_2^m \times \F_2^m$  where $b_2 \in S$, which is necessary  for   ensuring that $x\cdot\left( D_{a_2}D_{b_2}\pi(y)\right)$ is cancelled if we consider $(a_1,a_2)$ and $(b_1,b_2)$, where $a_2  \neq  b_2 \in S$. However, adjoining $(b_1,b_2)$ to $S'$ 
	 implies that $(u_i, 0) \in \langle (b_1,b_2), S' \rangle $ and the same reasoning as above applies. 
\end{proof}
We state  this   property more  formally  in the following definition.
\begin{defi}
	Let $S$ be any subspace of dimension $m-k$, with $2 \leq k  \leq m-1$, such that $D_aD_b\pi(y)=0_m$ for all $a,b \in S$, where $\pi$ is a nonlinear permutation on $\F_2^m$. Then, $\pi$ satisfies the property ($P_2$) with respect to $S$ if:
		\begin{equation}\label{eq: P2} \tag{$P_2$}
	\dim(S)=m-k \textnormal{ with } 2 \leq k  \leq m-1; \not \exists u_1, \ldots, u_k \in \F_2^m: u_i \cdot D_a \pi(y)=0 \textnormal{ for all } a \in S.
	\end{equation}
	If $\pi$ satisfies this property  with respect to any $S$ of arbitrary   dimension $1 \leq \dim(S)  \leq m-2$, then we simply say that $\pi$ (fully) satisfies \eqref{eq: P2}.  
\end{defi}
\begin{rem}
	For instance, the permutation  $\pi$ on $\F_2^5$ from Example~\ref{ex:twoMsubspaces} does not satisfy the conditions in Proposition \ref{prop:suffcondunique}.  Here  $\dim(S)=m-2=3$ and 
	two vectors $u_1=((1,0,0,0,0),0_5)$ and $u_1=((0,1,0,0,0),0_5)$ can be adjoined to $S'=\{0_5\} \times S$ since they select linear functions $y_1$ and $y_2$ whose first order derivatives vanish for any choice of $a_2 \in S$.  
\end{rem}  	

\begin{rem}
	1. Note that the property \eqref{eq: P1} implies \eqref{eq: P2}, but not vice versa.

	\noindent 2. As shown in~\cite{BBS2017}, there exist 75 affine inequivalent quadratic permutations $\pi$ of $\F_2^5$. Among them, 34 permutations give rise to bent functions $(x,y)\mapsto x\cdot \pi(y)$ with the unique canonical $\mathcal{M}$-subspace. With respect to the properties \eqref{eq: P1}, \eqref{eq: P2}, they are distributed as follows:
	\begin{itemize}
		\item 2 permutations have the property \eqref{eq: P1}, note that these permutations are APN;
		\item 32 permutations have the property \eqref{eq: P2} (but not \eqref{eq: P1}).
		\item[--] For 28 of them there exist a subspace $S_i$ of $\F_2^m$ of dimension $m-3=2$, s.t. $D_aD_b\pi_i=0$ for all $a,b \in S_i$. An example of such a permutation $\pi_i$ and a subspace $S_i$ is given by:
			\begin{equation*}
			\pi_1(y)=
			\begin{bmatrix}
			y_1 \\
			y_2 + y_1 y_2 + y_1 y_4 \\
			y_1 y_2 + y_3 + y_2 y_4 \\ 
			y_2 y_3 + y_4 + y_1 y_4 + y_2 y_4 + y_1 y_5 \\
			y_1 y_2 + y_3 y_4 + y_5 + y_1 y_5
			\end{bmatrix}\quad \mbox{and} \quad S_1=
			\left\langle\scalebox{0.85}{$
				\begin{array}{ccccc}
					0 & 0 & 0 & 1 & 0 \\
					0 & 0 & 0 & 0 & 1 \\
				\end{array}$}
			\right\rangle.
		\end{equation*}    
		\item[--] For the remaining 4 permutations, the maximum dimension of $S_i$ s.t. $D_aD_b\pi_i=0$ for all $a,b \in S_i$ is equal to $(m-2)=3$. An example of such a permutation $\pi_i$ and a subspace $S_i$ is given by:
	\begin{equation*}\label{eq: subspace S with dimS m-2}
		\pi_2(y)=
		\begin{bmatrix}
			y_1\\ y_2 + y_1 y_2 + y_1 y_3\\ y_3 + y_1 y_3 + y_1 y_5\\ y_1 y_2 + y_4 + y_1 y_4\\ y_2 y_3 + y_1 y_4 + y_5 + y_1 y_5
		\end{bmatrix}\quad \mbox{and} \quad S_2=
		\left\langle\scalebox{0.85}{$
			\begin{array}{ccccc}
				0 & 0 & 1 & 0 & 0 \\
				0 & 0 & 0 & 1 & 0 \\
				0 & 0 & 0 & 0 & 1 \\
			\end{array}$}
		\right\rangle.
	\end{equation*}    
\end{itemize}
\end{rem}

\section{Explicit constructions of permutations with $(P_1)$ and $(P_2)$ properties}\label{sec: 4}
The main aim of this section is to specify certain  classes of permutations on $\F_2^m$ satisfying either $(P_1)$ or $(P_2)$ property, and thus to provide constructions of Maiorana-McFarland bent functions with the unique canonical $\mathcal{M}$-subspace $\F_2^{m} \times \{0_m \}$.

\subsection{APN and APN-like permutations}
In the following remark, we indicate that APN permutations have the property~\eqref{eq: P1}, and, hence, can be used for the construction of Maiorana-McFarland bent functions with the unique canonical $\mathcal{M}$-subspace.
\begin{rem}
	Recall that a function $F\colon\F_2^m\to \F_2^m$ is called \textit{almost perfect nonlinear (APN)} if, for all $a\in\F_2^m\setminus\{0_m\},b\in\F_2^m$, the equation $F(x+a)+F(x)=b$ has 0 or 2 solutions $x\in\F_2^m$. Using the notation in~\cite{Li20,Meidl2023DM}, for $n \ge 2$, we define the set of all $2$-dimensional flats in $\F_2^m$ as follows:
	$$	\cB_m=\{ \{x_1,x_2,x_3,  x_4\}  \mid \mbox{$x_1+x_2+x_3+x_4=0_m$ } 
		 \mbox{and $x_1,x_2,x_3,x_4 \in\F_2^m$ are distinct} \}.$$
	
	It is well-known, that a function $F\colon \F_2^m \rightarrow \F_2^m$ is APN if and only if for each $\{x_1,x_2,x_3,x_4\} \in \cB_m$, holds
	$$
	F(x_1)+F(x_2)+F(x_3)+F(x_4) \ne 0_m.
	$$
	Namely, the summation of $F$ over each $2$-dimensional flat is non-vanishing. For a function $F\colon \F_2^m \rightarrow \F_2^m$, define the set of \emph{vanishing flats} with respect to $F$ as
	$$	\VB_{m,F}=\{ \{x_1,x_2, x_3, x_4\} \in \cB_m \mid 
		  F(x_1)+F(x_2)+F(x_3)+F(x_4)=0_m \}.$$
	With this notation, $F$ is APN on $\F_2^m$ if and only if $\VB_{m,F}=\varnothing$. Therefore, any permutation $\pi$ of $\F_2^m$, which is APN,  satisfies the condition~\eqref{eq: P1}. 
	For instance, all power APN functions $x\mapsto x^d$ are permutations of $\F_2^m$ for $m$ odd, as shown by Dobbertin, for the proof we refer to~\cite{Carlet2021}.
\end{rem}
Note that if a function $\pi$ on $\F_2^m$ is quadratic, then $D_{a,b}\pi(y) =const$ for all $a,b\in\F_2^m$. In this way, with the ``vanishing flats'' characterization of APN functions, we deduce the following characterization of quadratic permutations with the~\eqref{eq: P1} property.
\begin{cor}
	A quadratic permutation $\pi$ of $\F_2^m$ has the~\eqref{eq: P1} property if and only if $\pi$ is a quadratic APN permutation of $\F_2^m$.
\end{cor}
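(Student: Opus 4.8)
The plan is to exploit the observation recorded just before the statement---that for a quadratic $\pi$ the second-order derivative $D_vD_w\pi(y)$ is independent of $y$---since this is precisely the ingredient that makes the pointwise ``vanishing flat'' condition and the global property \eqref{eq: P1} coincide. First I would set up the dictionary between linearly independent pairs and $2$-dimensional flats. For linearly independent $v,w\in\F_2^m$, the four points $0_m,v,w,v+w$ are pairwise distinct and sum to $0_m$, hence $\{0_m,v,w,v+w\}\in\cB_m$; conversely, any flat $\{x_1,x_2,x_3,x_4\}\in\cB_m$ may be written as $\{x_1,\,x_1+v,\,x_1+w,\,x_1+v+w\}$ with $v=x_1+x_2$ and $w=x_1+x_3$, where distinctness of the four points forces $v,w\neq 0_m$ and $v\neq w$, so that $v,w$ are linearly independent.

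For the ``if'' direction I would invoke the implication already established in the preceding remark: an APN permutation satisfies \eqref{eq: P1}. Concretely, given linearly independent $v,w$, the flat $\{0_m,v,w,v+w\}$ lies in $\cB_m$ and hence, by the vanishing-flat characterization with $\VB_{m,\pi}=\varnothing$, is non-vanishing, so $D_vD_w\pi(0_m)=\pi(0_m)+\pi(v)+\pi(w)+\pi(v+w)\neq 0_m$ and therefore $D_vD_w\pi\neq 0_m$. This direction uses nothing about the degree of $\pi$.

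The ``only if'' direction is where quadraticity enters, and it is the heart of the argument. Suppose $\pi$ is quadratic and satisfies \eqref{eq: P1} but is not APN. Then $\VB_{m,\pi}\neq\varnothing$, so there is a flat $\{x_1,x_2,x_3,x_4\}\in\cB_m$ with $\pi(x_1)+\pi(x_2)+\pi(x_3)+\pi(x_4)=0_m$. Writing $v=x_1+x_2$ and $w=x_1+x_3$ as above, a direct check gives $x_1+v=x_2$, $x_1+w=x_3$, and $x_1+v+w=x_4$, so the left-hand side equals $D_vD_w\pi(x_1)$; thus $D_vD_w\pi(x_1)=0_m$. Since $\pi$ is quadratic, $D_vD_w\pi$ is constant in its argument, whence $D_vD_w\pi=0_m$ identically. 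As $v,w$ are linearly independent, this contradicts \eqref{eq: P1}, so a quadratic $\pi$ with \eqref{eq: P1} must be APN.

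The only delicate point---which I expect to flag explicitly as the crux---is that for a \emph{general} permutation a vanishing flat forces $D_vD_w\pi$ to vanish only at the single base point $x_1$, which is strictly weaker than \eqref{eq: P1}. The constancy of second-order derivatives of quadratic functions is exactly what upgrades this pointwise statement to the global one, and it is the sole place the quadratic hypothesis is actually needed.
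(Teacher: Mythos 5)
Your proof is correct and follows exactly the route the paper intends: the corollary is deduced from the vanishing-flats characterization of APN functions together with the observation that second-order derivatives of a quadratic map are constant, which is precisely your argument. You have simply spelled out in full the dictionary between linearly independent pairs and $2$-dimensional flats that the paper leaves implicit.
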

\begin{ex}
	Every bent function in $n=6$ variables with the unique $\mathcal{M}$-subspace is equivalent to a bent function of the form $f(x,y)=Tr(x y^3)$, for $x,y\in\F_{2^3}$. In this case, $y\mapsto y^3$ is an APN permutation of $\F_{2^3}$.
\end{ex}

Further, we  show that the following family of quadratic \textit{APN-like permutations}, i.e., non-APN permutations with a small number of vanishing flats (relative to the total number of vanishing flats), have the~\eqref{eq: P2} property. In this way, they can be used for constructing bent functions with the unique $\mathcal{M}$-subspace.
\begin{theo}\cite{Li20}\label{th: coverDO}
	Let $\pi(x)=x^{2^t+1}$ be a function over $\mathbb{F}_{2^m}$ with $(m, t)=s>1$. 
	Then, $\displaystyle\left|\mathcal{V} \mathcal{F}_{m, \pi}\right|=$ $2^{n-2}\left(2^{s-1}-1\right)\cdot\left(2^n-1\right)/3$.
\end{theo}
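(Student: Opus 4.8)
The plan is to count the vanishing flats of the Gold function $\pi(x)=x^{2^t+1}$ directly, exploiting the fact that $\pi$ is quadratic so that $D_aD_b\pi$ is constant and the vanishing-flat condition becomes a linear one. First I would fix a direction $a\in\F_{2^m}^*$ and compute the second-order derivative $D_aD_b\pi(x)$. Since $\pi$ is quadratic, this is independent of $x$; a short computation gives $D_aD_b\pi = a^{2^t}b + ab^{2^t} = ab(a^{2^t-1}+b^{2^t-1})$ (up to the usual Gold-function expansion). A flat $\{x,x+a,x+b,x+a+b\}$ is vanishing precisely when $D_aD_b\pi=0$, i.e.\ when $a^{2^t}b+ab^{2^t}=0$, which is a condition on the \emph{unordered pair of directions} $\{a,b\}$ alone and does not depend on the base point $x$.

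Next I would solve this direction equation. Dividing by $ab\neq 0$, the condition becomes $(b/a)^{2^t-1}=1$, i.e.\ $b/a$ lies in the subgroup of $(2^t-1)$-th roots of unity in $\F_{2^m}^*$. Because $\gcd(2^t-1,2^m-1)=2^{\gcd(t,m)}-1=2^s-1$, this subgroup has order exactly $2^s-1$. The key step is therefore to translate ``counting vanishing flats'' into ``counting, over all base points and all admissible direction pairs, with the correct multiplicity to pass from ordered triples $(x,a,b)$ to unordered $4$-element flats.'' I expect this bookkeeping — dividing out the symmetries of a $2$-flat — to be the main obstacle, since each flat $\{x_1,x_2,x_3,x_4\}$ arises from several choices of $(x,a,b)$ and one must divide by the right factor to avoid overcounting.

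To carry this out cleanly I would count ordered data and then normalize. Fix a vanishing flat; its three nonzero difference vectors $a,b,a+b$ form the nonzero elements of a $2$-dimensional $\F_2$-subspace, and any two of them determine the flat together with a choice of base point. So I would first count pairs of distinct directions $\{a,b\}$ (with $a,b,a+b$ all nonzero) satisfying the root-of-unity condition, then multiply by the number of base points $x$ (which is $2^m$), and finally divide by the number of representations of a single flat. Concretely, each vanishing $2$-dimensional \emph{translate class} is counted, and one checks that the condition $b/a\in\mu_{2^s-1}$ together with the symmetry between $a$, $b$ and $a+b$ yields $2^{s-1}-1$ essentially distinct direction-subspaces per normalization, which is exactly the factor $(2^{s-1}-1)$ appearing in the formula.

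Assembling the pieces, the count takes the shape
\begin{equation*}
	\left|\VB_{m,\pi}\right| = \frac{2^m \cdot (\text{number of admissible direction pairs})}{(\text{overcount per flat})},
\end{equation*}
and the only remaining task is to verify that, after substituting the subgroup order $2^s-1$ and dividing by the $4!/(\text{stabilizer})$-type symmetry factor of a $2$-flat, this collapses to $2^{n-2}(2^{s-1}-1)(2^n-1)/3$. The factor $(2^n-1)/3$ should emerge from counting the admissible subspaces $\langle a,b\rangle$ (there are $(2^n-1)(2^{n}-2)/6$ two-dimensional subspaces in total, and the root-of-unity constraint selects a $(2^{s-1}-1)$-proportion of the relevant configurations), while the $2^{n-2}$ factor records the base-point freedom after the division. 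I would double-check the constant $3$ and the power of $2$ against the known APN limiting case $s=1$, where $2^{s-1}-1=0$ correctly forces $\VB_{m,\pi}=\varnothing$, recovering that Gold functions with $\gcd(t,m)=1$ are APN.
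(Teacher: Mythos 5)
The paper does not actually prove this statement --- Theorem~\ref{th: coverDO} is imported verbatim from~\cite{Li20} as a black box, so there is no in-paper argument to compare yours against. Judged on its own, your counting strategy is correct and all the key identities you invoke are right: for the quadratic map $\pi(x)=x^{2^t+1}$ the sum of $\pi$ over a flat $\{x,x+a,x+b,x+a+b\}$ equals $D_aD_b\pi=a^{2^t}b+ab^{2^t}$, independent of $x$ and symmetric in $a$, $b$, $a+b$; the condition $(b/a)^{2^t-1}=1$ picks out $b/a$ in the subgroup of order $\gcd(2^t-1,2^m-1)=2^s-1$, i.e.\ $b\in a\F_{2^s}^*$. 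The one piece you leave as ``to verify'' does close up, and here is the cleanest way to see it: the number of ordered admissible pairs $(a,b)$ is $(2^m-1)(2^s-2)$, each flat is represented by $4\cdot 3\cdot 2=24$ triples $(x,a,b)$, so
\begin{equation*}
	\left|\VB_{m,\pi}\right|=\frac{2^m\,(2^m-1)(2^s-2)}{24}=2^{m-2}\cdot\frac{(2^m-1)(2^{s-1}-1)}{3},
\end{equation*}
which is the stated formula (note the theorem's $n$ is a typo for $m$). Equivalently, and perhaps more transparently, the vanishing direction subspaces are exactly the $2$-dimensional $\F_2$-subspaces contained in some $\F_{2^s}$-line $a\F_{2^s}$; there are $\tfrac{2^m-1}{2^s-1}$ such lines, each containing $\tfrac{(2^s-1)(2^s-2)}{6}$ planes, and each vanishing direction subspace contributes all $2^{m-2}$ of its cosets since the second derivative is base-point-free. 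Your sanity check at $s=1$ (recovering APN-ness of Gold functions with $\gcd(t,m)=1$) is also valid. So this is a complete and correct proof modulo writing out the final arithmetic explicitly; the only stylistic caveat is that the vanishing condition is best phrased as a property of the direction \emph{subspace} $\langle a,b\rangle$ rather than of the unordered pair $\{a,b\}$, which is what makes the division by $24$ (rather than some other symmetry factor) legitimate.
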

%
The following characterization of linear structures of the components of permutation  monomials given in \cite{Pascale_Gohar2010} (stated only for binary quadratic case) is useful for our purpose.
\begin{theo}  \cite{Pascale_Gohar2010} \label{th:Pas_Goh}
	Let $\delta \in \F_{2^m}$ and $1 \leq s \leq 2^m-2$ be such that $f(x)=Tr(\delta x^s)$ is not the zero function  on $\F_2^m$. Then, when $wt_H(s)=2$ the function $f$ has a linear structure if and only if the following is true:\\
	(ii): $s=2^j(2^i+1)$, 
	where $0 \leq i, j \leq m-1$, 
$i \not \in \{0,m/2\}$.
	In this case,  $\alpha \in \F_{2^m}$ is a linear structure of $f$ if and only if it satisfies $(\delta^{2^{m-j}} \alpha^{2^i+1})^{2^i-1}+1=0$. More exactly the linear space $\Lambda$ of $f$ is as follows. 
	Denote $\sigma =\gcd(m,2i)$. Then, $\Lambda=\{0\}$ if $\delta$ is not a $(2^i+1)$-th power in $\F_{2^m}$. Otherwise, if $\delta=\beta^{2^j(2^i+1)}$ for some $\beta \in \F_{2^m}$, it holds that $\Lambda=\beta^{-1}\F_{2^\sigma}$.
\end{theo}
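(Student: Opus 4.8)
The plan is to reduce the existence of a linear structure to a single algebraic equation in $\alpha$ and $\delta$, and then to analyze its solution set as a coset in $\F_{2^m}^*$. Since $wt_H(s)=2$, I write $s=2^j(2^i+1)$ with $1\le i\le m-1$ (the value $i=0$ gives $wt_H(s)=1$, i.e.\ a linear $f$, which is why it is excluded). By definition $\alpha$ is a linear structure of $f$ precisely when $D_\alpha f(x)=Tr(\delta[(x+\alpha)^s+x^s])$ is constant in $x$; equivalently, its $x$-dependent part must vanish identically. I would first isolate this part explicitly.

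For the explicit derivative I would exploit that the Frobenius map $y\mapsto y^{2^j}$ is additive. Expanding $(x+\alpha)^{2^i+1}=x^{2^i+1}+\alpha x^{2^i}+\alpha^{2^i}x+\alpha^{2^i+1}$ and raising the difference to the $2^j$-th power gives
\[
(x+\alpha)^s+x^s=\alpha^{2^j}x^{2^{i+j}}+\alpha^{2^{i+j}}x^{2^j}+\alpha^{s}.
\]
The term $Tr(\delta\alpha^s)$ is the constant, while the linear-in-$x$ part, after transferring Frobenius onto the coefficients via the identity $Tr(cx^{2^k})=Tr(c^{2^{m-k}}x)$, becomes $Tr\big((\delta^{2^{m-i-j}}\alpha^{2^{m-i}}+\delta^{2^{m-j}}\alpha^{2^i})x\big)$. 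By nondegeneracy of the trace form (the map $x\mapsto Tr(cx)$ is the zero function iff $c=0$) this vanishes identically iff its coefficient is $0$; for $\alpha\ne0$, dividing by $\delta^{2^{m-i-j}}\alpha^{2^i}$ and raising to the $2^i$-th power (using $\alpha^{2^m}=\alpha$) rewrites the condition in exactly the stated form $(\delta^{2^{m-j}}\alpha^{2^i+1})^{2^i-1}=1$.

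Next I would determine the solution set. Rewriting the condition as $\alpha^{2^{2i}-1}=\delta^{-2^{m-j}(2^i-1)}$, any nonzero solutions form a multiplicative coset of $\ker(\alpha\mapsto\alpha^{2^{2i}-1})$, which is the group of $(2^{2i}-1)$-th roots of unity in $\F_{2^m}^*$, namely $\F_{2^\sigma}^*$ with $\sigma=\gcd(2i,m)$; adjoining $\alpha=0$ yields a coset of the subfield $\F_{2^\sigma}$. When $\delta$ is a $(2^i+1)$-th power, say $\delta=\beta^{2^j(2^i+1)}$, then $\delta^{2^{m-j}}=\beta^{2^i+1}$ and the condition collapses to $(\beta\alpha)^{2^{2i}-1}=1$, whose solution set is exactly $\beta^{-1}\F_{2^\sigma}$ (in particular $\alpha=\beta^{-1}$ is a nonzero linear structure), as claimed. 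It then remains to prove that $\delta$ being a $(2^i+1)$-th power is both necessary and sufficient for a nonzero solution to exist, which I would obtain by translating solvability of $\alpha^{2^{2i}-1}=c$ into the requirement that $c$ be a $(2^{2i}-1)$-th power in $\F_{2^m}^*$.

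The main obstacle is this final number-theoretic bookkeeping. The crux is the identity $\gcd(2^{2i}-1,2^m-1)=2^{\gcd(2i,m)}-1$ together with the coprimality $\gcd(2^i+1,2^i-1)=1$, from which one shows that $\gcd(2^i+1,2^m-1)$ and $\gcd(2^i-1,2^m-1)=2^{\gcd(i,m)}-1$ are coprime with product $2^\sigma-1$; writing $\delta=g^t$ for a primitive element $g$, solvability unwinds to the divisibility $\gcd(2^i+1,2^m-1)\mid t$, i.e.\ precisely the statement that $\delta$ is a $(2^i+1)$-th power. Finally, the excluded value $i=m/2$ must be treated separately: there $2^{2i}-1\equiv0\pmod{2^m-1}$, so the exponent in $\alpha^{2^{2i}-1}$ degenerates and the reduction above breaks down (correspondingly $\sigma=m$ would force $\Lambda=\F_{2^m}$), which is exactly why case (ii) excludes $i\in\{0,m/2\}$.
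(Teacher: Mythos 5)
A preliminary remark: the paper does not prove this theorem at all --- it is quoted verbatim from Charpin and Kyureghyan \cite{Pascale_Gohar2010} as an external result --- so there is no internal proof to compare against, and I judge your argument on its own merits. In substance it is correct and follows the canonical route (which is also that of the cited source): expand $(x+\alpha)^{2^j(2^i+1)}$ using additivity of Frobenius, shift the Frobenius powers onto the coefficients via $Tr(cx^{2^k})=Tr(c^{2^{m-k}}x)$, use nondegeneracy of the trace pairing to reduce ``$D_\alpha f$ constant'' to the coefficient equation $\delta^{2^{m-i-j}}\alpha^{2^{m-i}}+\delta^{2^{m-j}}\alpha^{2^i}=0$, and for $\alpha\neq0$ rewrite this as $(\delta^{2^{m-j}}\alpha^{2^i+1})^{2^i-1}=1$. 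Your coset analysis is likewise sound: the kernel of $\alpha\mapsto\alpha^{2^{2i}-1}$ on $\F_{2^m}^*$ is $\F_{2^\sigma}^*$ with $\sigma=\gcd(2i,m)$ by $\gcd(2^{2i}-1,2^m-1)=2^\sigma-1$; the solvability bookkeeping via $\gcd(2^i+1,2^i-1)=1$ and the multiplicativity of $\gcd(ab,n)$ for coprime $a,b$ correctly unwinds to $\gcd(2^i+1,2^m-1)\mid t$ for $\delta=g^t$, i.e.\ to $\delta$ being a $(2^i+1)$-th power; and the substitution $\delta=\beta^{2^j(2^i+1)}$, whence $\delta^{2^{m-j}}=\beta^{2^i+1}$, gives $(\beta\alpha)^{2^{2i}-1}=1$ and $\Lambda=\beta^{-1}\F_{2^\sigma}$ exactly as claimed.

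The one genuinely thin spot is the exclusion $i=m/2$, which is the ``only if'' half of (ii). Saying that the reduction ``degenerates'' and that $\sigma=m$ ``would force $\Lambda=\F_{2^m}$'' is not yet a contradiction: a nonzero affine function also has $\Lambda=\F_{2^m}$, and nothing you have used so far rules this out --- in particular, you never invoke the hypothesis that $f$ is not the zero function, which is precisely what is needed here. For $i=m/2$ the two linear terms merge (since $2^{m-i}=2^i$) and the coefficient becomes $(\delta^{2^{m/2-j}}+\delta^{2^{m-j}})\alpha^{2^{m/2}}$; equivalently, because $\alpha^{2^{2i}-1}=\alpha^{2^m-1}=1$, your equation collapses to $\delta^{2^{m-j}(2^{m/2}-1)}=1$, a condition independent of $\alpha$. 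Hence either every nonzero $\alpha$ is a linear structure or none is, and the former holds iff $\delta^{2^{m/2}}=\delta$, i.e.\ $\delta\in\F_{2^{m/2}}$. You must then add the observation that $\delta\in\F_{2^{m/2}}$ forces $f\equiv0$: indeed $\delta x^{2^j(2^{m/2}+1)}\in\F_{2^{m/2}}$ for all $x$, and by transitivity of the trace, $Tr(y)=Tr_{\F_{2^{m/2}}/\F_2}(y+y^{2^{m/2}})=0$ for every $y\in\F_{2^{m/2}}$, contradicting the hypothesis that $f$ is not identically zero. With that one line added (together with the harmless convention that the displayed equation characterizes the \emph{nonzero} linear structures, since $\alpha=0$ fails it while being trivially a linear structure), your proof is complete.
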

\begin{prop}\label{prop: APN-like}
	Let $\pi (y) = y^{2^t+1}$ for $y\in\F_{2^m}$, where $s=\gcd(t, m) = 2$, $m = 2r$ and $r\geq 3$ is odd. Denote by $S$ a vector subspace of $\F_{2^m}$ for which
	$D_aD_b \pi(y) = 0_m$, for any $a,b\in S$.  Then, $\dim(S)\le2$ and permutation $\pi$ has the property~\eqref{eq: P2}.
\end{prop}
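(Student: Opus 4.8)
The plan is to exploit that $\pi(y)=y^{2^t+1}$ is quadratic, so that its second-order derivatives are \emph{constant}, and to translate every condition in Proposition~\ref{prop:suffcondunique} and~\eqref{eq: P2} into a statement about $(2^t-1)$-th roots of unity in $\F_{2^m}$. A direct expansion gives
\begin{equation*}
D_aD_b\pi = a^{2^t}b + a\,b^{2^t} = ab\,(a^{2^t-1}+b^{2^t-1}),
\end{equation*}
so for nonzero $a,b$ one has $D_aD_b\pi=0$ if and only if $(a/b)^{2^t-1}=1$. The $(2^t-1)$-th roots of unity lying in $\F_{2^m}^*$ form a subgroup of order $\gcd(2^t-1,2^m-1)=2^{\gcd(t,m)}-1=3$, namely $\F_4^*$ (here $\F_4=\F_{2^2}\subseteq\F_{2^m}$ since $2\mid m$). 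Thus $D_aD_b\pi=0$ for nonzero $a,b$ exactly when $a/b\in\F_4^*$. Consequently, any subspace $S$ on which $D_aD_b\pi$ vanishes has all its nonzero elements pairwise $\F_4$-proportional; fixing one nonzero $a\in S$ forces $S\setminus\{0\}\subseteq a\F_4^*$, whence $|S|\le 4$ and $\dim(S)\le 2$. This settles the first claim and leaves only $\dim(S)\in\{1,2\}$ to treat for~\eqref{eq: P2} (both admissible, since $m\ge 6$ gives $2\le m-2$).

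Next I would study $U_S=\{u\in\F_{2^m}: u\cdot D_a\pi(y)=0 \text{ for all } a\in S,\ y\in\F_{2^m}\}$, an $\F_2$-subspace; by the argument in Proposition~\ref{prop:suffcondunique}, adjoining independent vectors $(u_i,0_m)$ to $\{0_m\}\times S$ is possible precisely when enough independent $u_i$ lie in $U_S$, so property~\eqref{eq: P2} with respect to $S$ holds iff $\dim(U_S)<m-\dim(S)$. Identifying the component functions of $\pi$ with the trace maps $y\mapsto Tr(u\,\pi(y))$ and using $D_a\pi(y)=a\,y^{2^t}+a^{2^t}y+a^{2^t+1}$, the function $y\mapsto Tr(u\,D_a\pi(y))$ is affine in $y$; it vanishes identically iff its linear coefficient $(ua)^{2^{m-t}}+u\,a^{2^t}$ is $0$ \emph{and} its constant term $Tr(u\,a^{2^t+1})$ is $0$. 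Applying the Frobenius map to the linear condition rewrites it as $(u\,a^{2^t+1})^{2^t-1}=1$, i.e. $u\,a^{2^t+1}\in\F_4^*$. I would then evaluate the absolute trace on $\F_4$: since $r=m/2$ is odd, $Tr_{\F_{2^m}/\F_4}$ restricts to the identity on $\F_4$, so $Tr(\xi)=\xi+\xi^2$ there, giving $Tr(1)=0$ and $Tr(\omega)=Tr(\omega^2)=1$. Hence the only trace-zero element of $\F_4^*$ is $1$, and the two conditions together force $u\,a^{2^t+1}=1$ for every nonzero $a\in S$.

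The two cases are then immediate. If $\dim(S)=1$, the single equation $u\,a^{2^t+1}=1$ yields $U_S=\{0,a^{-(2^t+1)}\}$, so $\dim(U_S)=1<m-1$. If $\dim(S)=2$, pick distinct nonzero $a,a'\in S$; by the first paragraph $a'=\omega a$ with $\omega\in\F_4^*\setminus\{1\}$, and $u\,a^{2^t+1}=1=u\,(a')^{2^t+1}$ give $\omega^{2^t+1}=1$. But $t$ is even (as $\gcd(t,m)=2$ divides $t$), so $2^t\equiv 1$ and $2^t+1\equiv 2\pmod 3$, whence $\omega^{2^t+1}=\omega^2\neq 1$, a contradiction; thus $U_S=\{0\}$ and $\dim(U_S)=0<m-2$. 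In both cases $\dim(U_S)<m-\dim(S)$, so~\eqref{eq: P2} holds for every admissible $S$, i.e. $\pi$ fully satisfies~\eqref{eq: P2}.

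The step I expect to be most delicate is the constant-term (trace) condition: the linear part alone only pins $u\,a^{2^t+1}$ to $\F_4^*$ and, by itself, does \emph{not} exclude $\dim(S)=2$, so the crux is the clean evaluation of the trace on $\F_4$, where the hypothesis that $r=m/2$ is \emph{odd} enters decisively by collapsing $Tr_{\F_{2^m}/\F_4}$ to the identity on $\F_4$. A second point worth verifying explicitly is that $\pi$ really is a nonlinear permutation so that Proposition~\ref{prop:suffcondunique} applies: it is quadratic since $wt(2^t+1)=2$, and it permutes $\F_{2^m}$ because $m/\gcd(t,m)=r$ is odd.
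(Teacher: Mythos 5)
Your proof is correct, and it takes a genuinely different route from the paper's. The paper invokes the Charpin--Kyureghyan result (Theorem~\ref{th:Pas_Goh}) to identify the linear space of each component $y\mapsto Tr(u_i y^{2^t+1})$ as $\beta_i^{-1}\F_{2^2}$ with $u_i=\beta_i^{2^t+1}$, and then runs a pigeonhole argument: among four linearly independent $u_i$ at least two of these $2$-dimensional linear spaces differ, $S$ cannot contain both, so some nonzero $a\in S$ makes $D_a\bigl(Tr(u_1\pi)\bigr)$ non-constant --- i.e.\ the paper only ever uses the \emph{linear-structure} (constancy) condition and never touches the constant term. You instead compute $D_aD_b\pi=ab\,(a^{2^t-1}+b^{2^t-1})$ directly and reduce everything to the subgroup $\F_4^*$ of $(2^t-1)$-th roots of unity, which buys you three things the paper's argument does not deliver as cleanly: (i) an explicit proof that $\dim(S)\le 2$ with the precise structure $S\subseteq a\F_4$ (the paper's final paragraph only derives a contradiction under the \emph{additional} assumption that the $u_i$ exist, so it establishes the $(P_2)$-condition for $\dim(S)\ge 3$ but arguably never proves the dimension bound itself); (ii) a complete determination of the annihilator space $U_S$ --- your use of the constant term $Tr(ua^{2^t+1})$, evaluated on $\F_4$ via the hypothesis that $r$ is odd, pins $ua^{2^t+1}$ to $1$ and shows $U_S=\{0\}$ when $\dim(S)=2$, a strictly stronger conclusion than the paper's non-existence of $m-2$ independent $u_i$; and (iii) self-containedness, at the cost of the short field-trace computation. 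Both arguments are sound; yours is the more informative one here.
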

\begin{proof}
We first notice that when $\dim(S)=1$ we trivially have that $D_aD_b \pi(y) = 0_m$, since either $a$ or $b$ is zero.  To prove that $\pi$ has the property~\eqref{eq: P2}, let $S$ be a vector subspace of $\F_{2^m}$ for which $D_aD_b \pi(y) = 0_m$, such that $\dim(S)=2$. We will show that there do not exist  linearly independent $u_1, \ldots, u_{m-2} \in \F_2^m$ such that $Tr(u_i D_a \pi)= D_a (Tr(u_i \pi))=0$, for all $a \in S$ and $i=1, \ldots m-2$. Let $u_1, \ldots, u_{m-2}$ be any $m-2$ linearly independent elements in $\F_2^m$. Set $j=0$ and $i=t$ in Theorem \ref{th:Pas_Goh}. Since $m = 2r$, $r$ is odd and $\gcd(t, m) = 2$, we have that $\gcd(2t, m)=2$, i.e., $\sigma=2$ in Theorem \ref{th:Pas_Goh}. From Theorem \ref{th:Pas_Goh}, we deduce that the linear space of $Tr( \delta y^{2^t+1})$ is $\beta ^{-1} \mathbb{F}_{2^2}$, where $\beta$ is such that $\delta = \beta^{2^t+1}$. This means that the linear space of $Tr( u_i y^{2^t+1})$ is $\beta_i^{-1}\mathbb{F}_{2^2}$, where $u_i= \beta_i^{2^t+1}$, for $i=1, \ldots m-2$. Since $u_1, \ldots u_4$ are four linearly independent vectors, then $\beta_1^{-1}$, $\beta_2^{-1}$, $\beta_3^{-1}$,  $\beta_4^{-1}$ are four different nonzero elements, and hence we have that for at least two, w.l.o.g., $u_1$ and $u_2$ the subspaces $\beta_1^{-1} \mathbb{F}_{2^2} $  and $\beta_2^{-1}\mathbb{F}_{2^2} $ are different. The subspace $S$ does not cover both of them, w.l.o.g., assume that it does not cover $\beta_1^{-1} \mathbb{F}_{2^2}$. Let $a \in S\setminus \lbrace 0 \rbrace$ be such that $a \notin \beta_1^{-1} \mathbb{F}_{2^2}$, which  exists since both $S$ and $\beta_1^{-1}\mathbb{F}_{2^2}$ have $4$ elements and $S$ does not cover $\beta_1^{-1}\mathbb{F}_{2^2}$. Then, since $\beta_1^{-1} \mathbb{F}_{2^2}$ is the linear space of $Tr( u_1 y^{2^t+1})$, we have that $D_a(Tr( u_1 y^{2^t+1}))$ is not constant. Since $u_1, \ldots, u_{m-2}$ were arbitrary linearly independent elements from $\F_2^m$, we deduce that there do not exist  linearly independent 
		$u_1, \ldots, u_{m-2} \in \F_2^m$ for which $Tr(u_i D_a \pi)= D_a (Tr(u_i \pi))=0$, for all $a \in S$ and $i=1, \ldots m-2$. That is $\pi$ has the property~\eqref{eq: P2}.
  
    Assume that $\dim(S)=t$, where $3 \leq t \leq m-1$, and assume that there exist $u_1, \ldots, u_{m-t} \in \F_2^m$ such that $Tr(u_i D_a \pi)= D_a (Tr(u_i \pi))=0$, for all $a \in S$ and $i=1, \ldots m-t$.
    From Theorem \ref{th:Pas_Goh}, we have that the linear space of $Tr( u_i y^{2^t+1})$ is $\beta_i^{-1}\mathbb{F}_{2^2}$, where $u_i= \beta_i^{2^t+1}$, for $i=1, \ldots m-t$. Since $\dim(S) \geq 3$, there is an element $a \in S$ such that $a \neq \beta_i^{-1}\mathbb{F}_{2^2}$. This means that $a \in S$ is not in the linear space of $Tr( u_1 y^{2^t+1})$, hence $D_a(Tr( u_1 y^{2^t+1}))$ is not constant, which is a contradiction with our assumption $D_a (Tr(u_1 \pi))=0$.
\end{proof}


\subsection{Piecewise permutations having~\eqref{eq: P1} property}
Now, we provide a secondary construction of permutations with the~\eqref{eq: P1} property. In this way, we obtain infinite families of permutations with the~\eqref{eq: P1} in all dimensions. We also indicate that permutations with the~\eqref{eq: P1} property are not necessarily APN.

	\begin{prop}\label{prop:gensecderlarger}
		Let $\sigma_1$ and $\sigma_2$ be two permutations of $\F_2^m$ such that $D_V\sigma_1 \neq D_V\sigma_2$ for all two dimensional subspaces $V$ of $\F_2^m$. Define the function $\pi \colon \F_2^{m+1} \to \F_2^{m+1}$ by
		\begin{equation*}
			\pi(y,y_{m+1})= \left( \sigma_1(y) + y_{m+1}(\sigma_1(y)+\sigma_2(y)) , y_{m+1} \right) \text{, for all } y \in \F_2^m, y_{m+1} \in \F_2.
		\end{equation*}
		Then the function $\pi$ is a permutation of $\F_2^{m+1}$ such that $D_W\pi \neq 0_{m+1}$ for  all two dimensional subspaces $W$ of $\F_2^{m+1}$.
	\end{prop}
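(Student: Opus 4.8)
The plan is to reduce the statement to a clean reformulation, then treat separately the (easy) bijectivity claim and the (main) second-order-derivative claim. For bijectivity I would note that $\pi$ preserves the last coordinate and acts layer-wise, since $\pi(y,0)=(\sigma_1(y),0)$ and $\pi(y,1)=(\sigma_1(y)+\sigma_1(y)+\sigma_2(y),1)=(\sigma_2(y),1)$; as $\sigma_1,\sigma_2$ are permutations of $\F_2^m$, the map $\pi$ restricts to a bijection on each layer $\F_2^m\times\{0\}$ and $\F_2^m\times\{1\}$ and is therefore a permutation of $\F_2^{m+1}$. Writing $G:=\sigma_1+\sigma_2$, the hypothesis ``$D_V\sigma_1\neq D_V\sigma_2$ for every two-dimensional $V\subset\F_2^m$'' is exactly the statement that $D_vD_wG\neq 0_m$ for all linearly independent $v,w\in\F_2^m$, i.e.\ that $G$ has the \eqref{eq: P1} property as a vectorial function (recall that $D_VG=D_vD_wG$ depends only on $V=\langle v,w\rangle$). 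Since the last coordinate of $\pi$ equals the affine function $y_{m+1}$, its second-order derivative vanishes identically, so proving $D_W\pi\neq 0_{m+1}$ for every two-dimensional $W$ is equivalent to showing that the second-order derivative of the first $m$ coordinates $F(y,y_{m+1})=\sigma_1(y)+y_{m+1}G(y)$ never vanishes.

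Next I would expand $D_vD_wF$ directly. For $v=(a,a_{m+1})$ and $w=(b,b_{m+1})$ with $a,b\in\F_2^m$ and $a_{m+1},b_{m+1}\in\F_2$, a routine computation yields
\begin{equation*}
D_vD_wF(y,y_{m+1})=D_aD_b\sigma_1(y)+y_{m+1}\,D_aD_bG(y)+a_{m+1}D_bG(y+a)+b_{m+1}D_aG(y+b).
\end{equation*}
The structural observation driving the proof is that the coefficient of $y_{m+1}$ is exactly $D_aD_bG(y)$, independently of $a_{m+1},b_{m+1}$.

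The argument then splits according to whether $a,b$ are linearly independent in $\F_2^m$. If they are, then $D_aD_bG\neq 0_m$ by the \eqref{eq: P1} property of $G$, so the displayed expression takes different values at $y_{m+1}=0$ and $y_{m+1}=1$ (differing by the nonzero function $D_aD_bG$), hence cannot be identically zero. If $a,b$ are linearly dependent, then since $v,w$ remain linearly independent in $\F_2^{m+1}$ the independence must be carried by the last coordinates; over $\F_2$ the only surviving configurations are those where one of $a,b$ equals $0_m$ with the corresponding vector equal to $(0_m,1)$, or where $a=b\neq 0_m$ with $a_{m+1}\neq b_{m+1}$. In each such case the first two terms of the formula vanish (because $a,b$ coincide or one is zero), and substitution collapses the remaining terms to a single, possibly shifted, first-order derivative $D_cG$ with $c\in\{a,b\}\setminus\{0_m\}$.

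Finally, the crux is a short preliminary lemma that I would establish first: for every nonzero $c\in\F_2^m$ one has $D_cG\neq 0_m$ as a function. Indeed, if $D_cG$ were identically $0_m$, then choosing any $d$ independent from $c$ gives $D_cD_dG=D_d(D_cG)=0_m$, contradicting the \eqref{eq: P1} property of $G$. This guarantees that the residual first-order derivatives arising in the dependent case are nonzero, which finishes the proof. I expect the main obstacle to be the bookkeeping in the dependent case---precisely enumerating which pairs $(a,a_{m+1}),(b,b_{m+1})$ stay independent in $\F_2^{m+1}$ once $a,b$ become dependent in $\F_2^m$, and verifying that the formula degenerates to a nonzero $D_cG$ in every such configuration---whereas the independent case follows immediately from the observation on the coefficient of $y_{m+1}$.
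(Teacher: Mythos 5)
Your proof is correct. It shares the paper's overall strategy --- layer-wise bijectivity, then a direct expansion of the second-order derivative followed by a case analysis --- but the decomposition is genuinely different and in one respect tighter. The paper keeps $\sigma_1$ and $\sigma_2$ separate, splits on the last coordinates ($a_{m+1}=b_{m+1}=0$ versus not, normalizing the latter to $a_{m+1}=0$, $b_{m+1}=1$ via $a\mapsto a+b$), and concludes by evaluating at $y_{m+1}=0$ and $y_{m+1}=1$. You instead repackage the hypothesis as the property~\eqref{eq: P1} for $G=\sigma_1+\sigma_2$, derive a single unified formula for $D_vD_w$ of the first $m$ coordinates, and split on whether the $\F_2^m$-parts $a,b$ are linearly independent. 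The main payoff of your route is the explicitly isolated lemma that $D_cG\neq 0_m$ for every $c\neq 0_m$: this is precisely what is needed for the degenerate pairs ($b=0_m$, or $b=a$ with the independence carried by the last coordinate), where $D_aD_b\sigma_1=D_aD_b\sigma_2=0_m$ and the paper's closing appeal in its second case to the assumption $D_aD_b\sigma_2\neq D_aD_b\sigma_1$ is vacuous because $a,b$ no longer span a two-dimensional subspace; your lemma disposes of exactly these configurations, which the paper's write-up passes over. (Both arguments tacitly assume $m\geq 2$ --- needed to choose $d$ independent of $c$ and for the hypothesis to have content --- so this is not a defect of your proposal relative to the paper.)
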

	\begin{proof}
		Since $\pi(y,0)=(\sigma_1(y),0)$ and $\pi(y,1)=(\sigma_2(y),1)$ and since $\sigma_1$ and $\sigma_2$ are permutations, $\pi$ is a permutation as well.
		
		Take two linearly independent vectors $(a,a_{m+1}),(b,b_{m+1})\in \F_2^{m+1}$, where $a,b\in\F_2^{m}$ and $a_{m+1},b_{m+1}\in\F_2$. 
		
		Assume first that $a_{m+1}=b_{m+1}=0$. Then  
		\begin{equation*}
			D_{(a,a_{m+1})}D_{(b,b_{m+1})} \pi(y,y_{m+1})= (D_aD_b \sigma_1(y)+y_{m+1}( D_aD_b \sigma_1(y) +D_aD_b \sigma_2(y),0) 
		\end{equation*}
		Since $(a,a_{m+1})$ and $(b,b_{m+1})$ are linearly independent and $a_{m+1}=b_{m+1}=0$, the vectors $a$ and $b$ are linearly independent. If $D_aD_b \sigma_1(y) \neq 0_m$, then $D_{(a,a_{m+1})}D_{(b,b_{m+1})} \pi(y,0) = (D_aD_b\sigma_1(y),0) \neq 0_{m+1}$, hence $D_{(a,a_{m+1})}D_{(b,b_{m+1})} \pi(y,y_{m+1}) \neq 0_{m+1}$. If $D_aD_b \sigma_1(y) = 0_m$, then, since from the assumption $D_aD_b \sigma_2(y) \neq D_aD_b \sigma_1(y) = 0_m$, we have that $$D_{(a,a_{m+1})}D_{(b,b_{m+1})} \pi(y,1) = ( \sigma_2(y),0) \neq 0_{m+1},$$ hence $D_{(a,a_{m+1})}D_{(b,b_{m+1})} \pi(y,y_{m+1}) \neq 0_{m+1}$. We conclude that in any case, when $a_{m+1}=b_{m+1}=0$, we have $D_{(a,a_{m+1})}D_{(b,b_{m+1})} \pi(y,y_{m+1}) \neq 0_{m+1}$.
		
		Now assume that $a_{m+1}=1$ or $b_{m+1}=1$. W.l.o.g, we assume that $b_{m+1}=1$. Then, since 
		$$D_{(a,a_{m+1})}D_{(b,b_{m+1})} \pi(y,y_{m+1}) =D_{(a+b,a_{m+1}+b_{m+1})}D_{(b,b_{m+1})} \pi(y,y_{m+1}),$$ we can assume that $a_{m+1}=0$. Computing the second-order derivative of $\pi$, we get
		\begin{equation*}
			\begin{split}
				D_{(a,a_{m+1})}&D_{(b,b_{m+1})} \pi(y,y_{m+1})=D_{(b,1)} (D_a\sigma_1(y)+y_{m+1}(D_a\sigma_1(y)+D_{a}\sigma_2(y)),0)\\
				&=(D_aD_b\sigma_1(y) +y_{m+1}(D_{a}D_{b}\sigma_1(y)+D_{a}D_{b}\sigma_2(y))+D_{a}\sigma_1(y+b)+D_{a}\sigma_2(y+b),0),
			\end{split}
		\end{equation*}
		for all $y \in \F_2^m, y_{m+1} \in \F_2$. Setting $y_{m+1}=0$, we have $$D_{(a,a_{m+1})}D_{(b,b_{m+1})} \pi(y,0)=(D_aD_b\sigma_1(y) +D_{a}\sigma_1(y+b)+D_{a}\sigma_2(y+b),0).$$
		If $D_aD_b\sigma_1(y) +D_{a}\sigma_1(y+b)+D_{a}\sigma_2(y+b) \neq 0_m$, we deduce that $D_{(a,a_{m+1})}D_{(b,b_{m+1})} \pi(y,0) \neq 0_{m+1}$, hence $D_{(a,a_{m+1})}D_{(b,b_{m+1})} \pi(y,y_{m+1}) \neq 0_{m+1}$. If however, $D_aD_b\sigma_1(y) +D_{a}\sigma_1(y+b)+D_{a}\sigma_2(y+b) = 0_m$, then we compute 
		$$D_{(a,a_{m+1})}D_{(b,b_{m+1})} \pi(y,1)=(D_{a}D_{b}\sigma_1(y)+D_{a}D_{b}\sigma_2(y),0).$$
		From the assumption  $D_aD_b \sigma_2(y) \neq D_aD_b \sigma_1(y)$, we have $D_aD_b \sigma_2(y) + D_aD_b \sigma_1(y) \neq 0_m$, hence $D_{(a,a_{m+1})}D_{(b,b_{m+1})} \pi(y,1) \neq 0_{m+1}$, and consequently $D_{(a,a_{m+1})}D_{(b,b_{m+1})} \pi(y,y_{m+1}) \neq 0_{m+1}$. We deduce that $D_{(a,a_{m+1})}D_{(b,b_{m+1})} \pi(y,y_{m+1}) \neq 0_{m+1}$, what concludes the proof.
	\end{proof}
	
	\begin{cor}\label{cor:secderlarger}
		Let $\sigma$ be a permutation of $\F_2^m$ such that $D_V\sigma \neq 0_m$ for all two dimensional subspaces $V$ of $\F_2^m$. Define the function $\pi \colon \F_2^{m+1} \to \F_2^{m+1}$ by
		\begin{equation}\label{eq:permext}
			\pi(y,y_{m+1})= \left( y + y_{m+1}(\sigma(y)+y) , y_{m+1} \right) \text{, for all } y \in \F_2^m, y_{m+1} \in \F_2.
		\end{equation}
		Then,  $\pi$ is a permutation of $\F_2^{m+1}$ such that $D_W\pi \neq 0_{m+1}$ for  all two dimensional subspaces $W$ of $\F_2^{m+1}$, thus it satisfies the~\eqref{eq: P1} property.
	\end{cor}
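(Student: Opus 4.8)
The plan is to obtain this statement as a direct specialization of Proposition~\ref{prop:gensecderlarger}. The key observation is that the function $\pi$ in~\eqref{eq:permext} already has the exact shape appearing in that proposition, once the two building-block permutations are identified correctly. Concretely, I would set $\sigma_1 = \mathrm{id}$, the identity permutation of $\F_2^m$, and $\sigma_2 = \sigma$. Substituting these into the formula $\pi(y,y_{m+1}) = (\sigma_1(y) + y_{m+1}(\sigma_1(y)+\sigma_2(y)),\, y_{m+1})$ from Proposition~\ref{prop:gensecderlarger} yields $\pi(y,y_{m+1}) = (y + y_{m+1}(y + \sigma(y)),\, y_{m+1})$, which coincides with~\eqref{eq:permext} since addition and subtraction agree over $\F_2$. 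Both $\sigma_1 = \mathrm{id}$ and $\sigma_2 = \sigma$ are permutations, so the hypotheses on the building blocks are met.

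Next I would verify that the hypothesis of Proposition~\ref{prop:gensecderlarger}, namely $D_V\sigma_1 \neq D_V\sigma_2$ for every two-dimensional subspace $V \subset \F_2^m$, is equivalent to the assumption made here. Since the identity map is linear, all of its second-order derivatives vanish: for any $a,b \in \F_2^m$ one has $D_aD_b\,\mathrm{id}(y) = y + (y+a) + (y+b) + (y+a+b) = 0_m$. Hence $D_V\sigma_1 = 0_m$ for every two-dimensional $V$, and the condition $D_V\sigma_1 \neq D_V\sigma_2$ collapses to $D_V\sigma \neq 0_m$ for all two-dimensional $V$ --- which is precisely the standing assumption on $\sigma$.

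With both the formula and the hypothesis matched, Proposition~\ref{prop:gensecderlarger} immediately gives that $\pi$ is a permutation of $\F_2^{m+1}$ satisfying $D_W\pi \neq 0_{m+1}$ for every two-dimensional subspace $W \subset \F_2^{m+1}$. Finally, I would translate this into the~\eqref{eq: P1} property: since any two linearly independent $v,w \in \F_2^{m+1}$ span a two-dimensional subspace $W = \langle v,w\rangle$ with $D_W\pi = D_vD_w\pi$, the conclusion $D_W\pi \neq 0_{m+1}$ is exactly the statement that $D_vD_w\pi \neq 0_{m+1}$ for all linearly independent $v,w$, i.e., that $\pi$ has property~\eqref{eq: P1}.

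There is no genuine obstacle here, as the result is a clean corollary; the only points requiring a moment of care are verifying that the identity permutation forces $D_V\sigma_1$ to vanish (so that the generic inequality $D_V\sigma_1 \neq D_V\sigma_2$ reduces to the desired nonvanishing of $D_V\sigma$), and confirming that the two expressions for $\pi$ literally coincide over $\F_2$.
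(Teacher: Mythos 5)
Your proposal is correct and is exactly the paper's argument: the paper also proves the corollary by taking $\sigma_1=\mathrm{id}$, $\sigma_2=\sigma$ in Proposition~\ref{prop:gensecderlarger}, noting that $D_V\sigma_1=0_m\neq D_V\sigma_2$ for every two-dimensional subspace $V$. Your additional remarks (identifying the two formulas over $\F_2$ and translating $D_W\pi\neq 0_{m+1}$ into property~\eqref{eq: P1}) are just careful spellings-out of steps the paper leaves implicit.
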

	\begin{proof}
		Set $\sigma_1(y)=y$ and $\sigma_2(y)=\sigma(y)$ for all $y \in \F_2^m$. Then $D_V\sigma_1(y) = 0_m \neq D_V\sigma_2(y)$ for all two dimensional subspaces $V$ of $\F_2^m$. The result then follows from Proposition \ref{prop:gensecderlarger}.
	\end{proof}
	\noindent
	Note that, with the same assumptions as in Corollary \ref{cor:secderlarger}, using Proposition \ref{prop:gensecderlarger} and setting $\sigma_1(y)=\sigma(y)$ and $\sigma_2(y)=y$, we can deduce in the same way that $$\pi'(y,y_{m+1})= \left( \sigma(y) + y_{m+1}(\sigma(y)+y) , y_{m+1} \right)$$ is also a permutation such that $D_W\pi' \neq 0_{m+1}$ for  all two dimensional subspaces $W$ of $\F_2^{m+1}$.

In the following remark, we indicate that APN-ness of permutations $\pi$ on $\F_2^m$ with the~\eqref{eq: P1} property, plays a very important role in the vanishing behaviour of Maiorana-McFarland bent functions $x\cdot\pi(y)$. 
\begin{rem} Let $\sigma$ be a permutation on $\F_2^m$ such that $D_V\sigma \neq 0_m$ for all two dimensional subspaces $V$ of $\F_2^m$. Define the permutation $\pi \colon \F_2^{m+1} \to \F_2^{m+1}$ as in Corollary \ref{cor:secderlarger} by $$\pi(y,y_{m+1})= \left( y + y_{m+1}(\sigma(y)+y) , y_{m+1} \right) \text{, for all } y \in \F_2^m, y_{m+1} \in \F_2.$$
	Clearly, the permutation $\pi$ is not APN, since the last coordinate is linear. Define the function $f\colon \F_2^{2m+2} \to \F_2$ by $$f(x,x_{m+1},y,y_{m+1})=(x,x_{m+1}) \cdot \pi(y,y_{m+1}),$$ for all $x,y \in \F_2^m$ and $x_{m+1},y_{m+1} \in \F_2$. From Corollary \ref{cor:secderlarger} and 
	Theorem \ref{theo: unique P1} we deduce that $\pi$ has the property \eqref{eq: P1}, and $\F_2^{m+1} \times \{ 0_{m+1} \}$ is the unique $\mathcal{M}$-subspace of $f$.
	
	Now, define $a_1=\mathbbm{e}_{m+1} \in\F_{2}^{m+1},a_2=0_{m+1} \in\F_{2}^{m+1}$ and $b_1=\mathbbm{0}_{m+1} \in\F_{2}^{m+1},b_2=(b,0) \in\F_{2}^{m+1}$, where $b$ is a nonzero vector in $\F_2^m$.  From \eqref{eq: secder2}, we have 
	\begin{equation*}
		\begin{split}
			D_{(a_1, a_2)}D_{(b_1, b_2)}f(x,x_{m+1},y,y_{m+1})
			&=(x,x_{m+1})\cdot D_{a_2}D_{b_2}\pi(y,y_{m+1})\\&+  a_1\cdot D_{b_2}\pi((y,y_{m+1})+ a_2)
			+ b_1 \cdot D_{a_2}\pi((y,y_{m+1})+ b_2) \\
			&=\mathbbm{e}_{m+1}\cdot D_{(b,0)}\pi(y,y_{m+1}) \\
			&=\mathbbm{e}_{m+1} \cdot (b+y_{m+1}(D_b \sigma(y) + b),0)\\
			&= 0.
		\end{split}
	\end{equation*}
	However, $\dim(\langle (a_1, a_2),(b_1, b_2) \rangle)=2$, and since $b_2=(b,0) \neq 0_{m+1}$, it is not a subspace of $\F_2^{m+1} \times \{ 0_{m+1} \}$. This means that $D_{a}D_{b}f=0$ vanishes not only on the two-dimensional subspaces $\{a,b\}$ of $\F_2^m\times\{0_m\}$, from what follows that not every permutation $\pi$ with the~\eqref{eq: P1} property defines the bent function $(x,y)\mapsto x\cdot\pi(y)$ with the vanishing behavior as in Corollary~\ref{cor:dim2}. 
\end{rem}
The problem of preserving the~\eqref{eq: P2} property for the class of permutations defined by \eqref{eq:permext} appears to be harder. One can eventually show that the~\eqref{eq: P2} property for $\pi$ is inherited from $\sigma$ for some particular subspaces whereas it remains an open problem to show  that  $\pi$ fully satisfies  the~\eqref{eq: P2} property when $\sigma$ does.
\begin{op}
	Find more constructions of permutations with the~\eqref{eq: P2} property.
\end{op}

\section{Generic construction methods of bent functions outside $\cM^\#$}\label{sec: 5}
In this section, we provide a theoretical analysis of possible $\mathcal{M}$-subspaces of the bent 4-concatenation $f=f_1||f_2||f_3||f_4\in\mathcal{B}_{n+2}$. Based on this analysis, we consequently provide two generic methods of constructing bent functions outside $\cM^\#$ for even $n \geq 8$. Our first approach is based on the concatenation of  bent functions $f_1,f_2,f_3,f_4\in\mathcal{B}_{n}$ that \textit{do not share any $\mathcal{M}$-subspace of dimension $n/2-1$}, i.e, $\bigcap_{i=1}^{4}\mathcal{MS}_{n/2-1}(f_i)=\varnothing$.  Our second approach is based on the concatenation of  bent functions $f_1,f_2,f_3,f_4\in\mathcal{B}_{n}$ that \textit{share a unique $\mathcal{M}$-subspace of dimension $n/2$}, i.e, $|\bigcap_{i=1}^{4}\mathcal{MS}_{n/2}(f_i)|=1$. Finally, we provide an algorithm for checking the membership in the completed partial spread class $\mathcal{PS}^\#$, and show that with our approaches it is possible to construct inequivalent bent functions in $n=8$ outside $\cM^\#\cup\mathcal{PS}^\#$.
\subsection{Possible $\mathcal{M}$-subspaces of the bent 4-concatenation}\label{sub: 5.1}
The following result is crucial in understanding the structural properties of bent functions in $\cM^\#$ in terms of 4-concatenation. Notice that when considering $f=f_1||f_2||f_3||f_4 $ we do not assume neither that $f_i$ are bent nor that $f_i$ share the same unique $\mathcal{M}$-subspace.
	\begin{prop} \label{prop:commonsubspace} Let $f_1, \ldots,f_4$ be four Boolean functions in $n$ variables, not necessarily bent, such that $f=f_1||f_2||f_3||f_4 \in\B_{n+2}$ is a bent function in $\cM^{\#}$. Let $W$ be an $\mathcal{M}$-subspace of $f$ of dimension $(\frac{n}{2}+1)$. Then, there is an $(\frac{n}{2}-1)$-dimensional subspace $V$ of $\F_2^n$ such that:
	\begin{itemize}
		\item[1)] $V \times \{(0,0)\}$ is a subspace of $W$,
		\item[2)] $V$ is an $\mathcal{M}$-subspace of $f_i$ for all $i=1, \ldots ,4$.
	\end{itemize}	
	\end{prop}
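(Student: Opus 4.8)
The plan is to combine a single dimension count via rank-nullity with the observation that, for directions living entirely in the ``$x$-part'', the second-order derivative of a $4$-concatenation decouples across the four pieces.

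First I would introduce the coordinate projection $p\colon\F_2^{n+2}\to\F_2^2$, $p(a',a_1,a_2)=(a_1,a_2)$, and restrict it to $W$. Since $\operatorname{Im}(p|_W)\subseteq\F_2^2$ has dimension at most $2$, the rank-nullity theorem gives
\[
\dim\big(\operatorname{Ker}(p|_W)\big)\ge \dim(W)-2=\Big(\tfrac{n}{2}+1\Big)-2=\tfrac{n}{2}-1.
\]
The kernel is exactly $K:=W\cap\big(\F_2^n\times\{(0,0)\}\big)$, so under the first projection it is identified with a subspace $V_0\subseteq\F_2^n$ with $\dim(V_0)\ge \tfrac{n}{2}-1$. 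I would then fix any $(\tfrac{n}{2}-1)$-dimensional subspace $V\subseteq V_0$. By construction $V\times\{(0,0)\}\subseteq K\subseteq W$, which already settles claim 1).

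For claim 2), I would take arbitrary $a',b'\in V$ and set $a=(a',0,0)$, $b=(b',0,0)$, which lie in $W$. Because these directions have zero last two coordinates, adding them leaves $(y_1,y_2)$ fixed, so
\[
D_aD_bf(x,y_1,y_2)=D_{a'}D_{b'}\big(f(\,\cdot\,,y_1,y_2)\big)(x).
\]
Since $W$ is an $\mathcal{M}$-subspace of $f$, the left-hand side vanishes for all $(x,y_1,y_2)$; evaluating on the four slices $(y_1,y_2)\in\{(0,0),(0,1),(1,0),(1,1)\}$ and recalling $f(x,0,0)=f_1$, $f(x,0,1)=f_2$, $f(x,1,0)=f_3$, $f(x,1,1)=f_4$ yields $D_{a'}D_{b'}f_i=0$ for every $i\in\{1,2,3,4\}$. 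As $a',b'\in V$ were arbitrary, $V$ is an $\mathcal{M}$-subspace of each $f_i$, which is claim 2). (The same conclusion drops out of formula~\eqref{eq:2ndderiv_conc correct}: setting $a_1=a_2=b_1=b_2=0$ kills every term carrying a factor $a_i$ or $b_i$, leaving $D_{a'}D_{b'}f_1+y_1D_{a'}D_{b'}f_{13}+y_2D_{a'}D_{b'}f_{12}+y_1y_2D_{a'}D_{b'}f_{1234}$, whose identically-zero vanishing forces all four coefficients to vanish by the linear independence of $1,y_1,y_2,y_1y_2$, and inverting the relations $f_{12}=f_1+f_2$, $f_{13}=f_1+f_3$, $f_{1234}=f_1+f_2+f_3+f_4$ recovers $D_{a'}D_{b'}f_i=0$.)

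This is a soft structural result, and I do not expect a genuine obstacle: the $(\tfrac{n}{2}-1)$-dimensional ``horizontal'' slice of $W$ is forced purely by the dimension count, and the transfer of the vanishing property to the four components is immediate once $a,b$ are restricted to the coordinates $(a_1,a_2)=(b_1,b_2)=(0,0)$. The only point requiring a little bookkeeping is confirming that the identification of $K$ with $V_0$ is linear and dimension-preserving, so that choosing $V\subseteq V_0$ of dimension exactly $\tfrac{n}{2}-1$ simultaneously guarantees $V\times\{(0,0)\}\subseteq W$ and that every pair $a',b'\in V$ pulls back to a pair in $W$.
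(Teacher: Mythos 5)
Your proposal is correct and follows essentially the same route as the paper: the paper obtains $\dim\bigl(W\cap(\F_2^n\times\{(0,0)\})\bigr)\ge \frac{n}{2}-1$ via the dimension formula for the intersection, which is exactly your rank--nullity count for the projection onto the last two coordinates, and it then derives $D_aD_bf_i=0$ from the vanishing of the four coefficients in the expansion of $D_{(a,0,0)}D_{(b,0,0)}f$, just as in your parenthetical argument. Your direct ``evaluate on the four slices $(y_1,y_2)$'' phrasing is an equivalent, slightly more transparent way of reading off the same conclusion.
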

	\begin{proof} Let $W$ be an $\mathcal{M}$-subspace of $f$ of dimension $(\frac{n}{2}+1)$ (we know that it exists since $f$ is in $\cM^{\#}$). We have 
		\begin{equation*}
		\dim ( W \cap (\F_2^n \times \{(0,0) \}))= \dim (W) + \dim (\F_2^n \times \{(0,0) \}) - \dim(\langle W , \F_2^n \times \{(0,0) \} \rangle).
		\end{equation*}
		Because $\dim ( W + (\F_2^n \times \{(0,0) \})) \leq n+2$, we have 
		\begin{equation*}
		\dim ( W \cap (\F_2^n \times \{(0,0) \})) \geq (\frac{n}{2}+1) + n -(n+2) = \frac{n}{2}-1.
		\end{equation*}
		Hence, there is an $(\frac{n}{2}-1)$-dimensional subspace $V$ of $\F_2^n$ such that $V \times \{ (0,0) \}$ is a subspace of $W$. Let $a$ and $b$ be two arbitrary vectors from $V$. Then $(a,0,0)$ and $(b,0,0)$ are in $W$, so $D_{(a,0,0)}D_{(b,0,0)}f=0$. Using~\eqref{eq:2ndderiv_conc correct}, we compute:
		\begin{eqnarray}\label{eq:computDaDB}
		D_{(a,0,0)}D_{(b,0,0)}f(x,z_1,z_2) &=& D_aD_bf_1(x)+z_1(D_aD_b(f_1+f_2)(x))+z_2(D_aD_b(f_1+f_3)(x)) \nonumber \\
		& &+ z_1z_2(D_aD_b(f_1+f_2+f_3+f_4)(x))=0,
		\end{eqnarray}
		for all $(x,z_1,z_2) \in \F_2^{n+2}$. From this, we deduce that
		\begin{equation} \label{eq:conditions}
		D_aD_bf_1(x)=D_aD_b(f_1+f_2)(x)=D_aD_b(f_1+f_3)(x)=D_aD_b(f_1+f_2+f_3+f_4)(x)=0,
		\end{equation}
		for all $x \in \F_2^n$, and consequently, that $D_aD_bf_1=D_aD_bf_2=D_aD_bf_3=D_aD_bf_4=0$. Since $a$ and $b$ were two arbitrary elements from $V$ this completes the proof.
	\end{proof}

	As a special case of concatenating four bent functions $f_i \in \B_n$ in $\cM$, 
that share the same unique vanishing subspace $V=\F_2^m \times \{0_m\}$, we have the following important result that describes the form of $\mathcal{M}$-subspaces for $f=f_1||f_2||f_3||f_4$.
\begin{prop} \label{prop:sharinguniqueM} Let $f_1, \ldots,f_4 \in \B_n$, with $n=2m$, all belong to the $\cM$ class and additionally assume that the only $n/2$-dimensional subspace $U$ of $\F_2^n$  for which $D_aD_b f_i=0$ for all $a,b  \in U$, is   given by $U=\F_2^m \times \{0_m\}$. Then, the only possible $(n/2+1)$-dimensional $\mathcal{M}$-subspaces $\{W\}$ for $f=f_1||f_2||f_3||f_4$ are of the following form:  
	\begin{enumerate}[i)]
		\item    $W= \langle U \times (0,0),(a,b,c_1,c_2)\rangle $,  where  $c_1,c_2 \in \F_2$ and $(c_1,c_2)\neq 0_2$; or
		$W=\langle V \times (0,0), (a,b,c_1,c_2),  (e,f,d_1,d_2)\rangle $, where $V \subset U$ with  $\dim(V)=n/2-1$, 
		$(c_1,c_2)\neq 0_2, (d_1,d_2)\neq 0_2, (c_1,c_2)\neq (d_1,d_2)$.
		\item   $W =\langle U' \times (0,0), (a,b,c_1,c_2),  (e,f,d_1,d_2)\rangle $, where $\dim(U')=n/2-1$ and $U'\not\subset U$,  $(c_1,c_2)\neq 0_2, (d_1,d_2)\neq 0_2, (c_1,c_2)\neq (d_1,d_2)$.		
	\end{enumerate}
\end{prop}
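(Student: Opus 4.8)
The plan is to study $W$ through its projection onto the last two coordinates. Writing elements of $\F_2^{n+2}=\F_2^n\times\F_2^2$ as $(x,z_1,z_2)$ with $x\in\F_2^n$ and $z_1,z_2\in\F_2$, let $P\colon\F_2^{n+2}\to\F_2^2$ be the linear map $P(x,z_1,z_2)=(z_1,z_2)$. Its restriction $P|_W$ has kernel $K:=W\cap(\F_2^n\times\{(0,0)\})$, which I write as $K=\hat V\times\{(0,0)\}$ for a subspace $\hat V\subseteq\F_2^n$. By the rank--nullity theorem, $\dim\hat V=\dim K=(n/2+1)-r$, where $r:=\dim P(W)\in\{0,1,2\}$; thus the whole classification is governed by $r$.

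First I would record that $\hat V$ is a common $\mathcal{M}$-subspace of $f_1,\dots,f_4$. Indeed, for any $a,b\in\hat V$ the vectors $(a,0,0)$ and $(b,0,0)$ lie in $W$, so $D_{(a,0,0)}D_{(b,0,0)}f=0$; expanding this second-order derivative exactly as in~\eqref{eq:computDaDB} in the proof of Proposition~\ref{prop:commonsubspace} forces $D_aD_bf_1=D_aD_bf_2=D_aD_bf_3=D_aD_bf_4=0$. Since each $f_i\in\cM$ is bent, every $\mathcal{M}$-subspace of $f_i$ has dimension at most $n/2$, whence $\dim\hat V\le n/2$ and therefore $r\ge 1$. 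This immediately rules out $r=0$.

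The two surviving cases yield precisely the stated forms. If $r=1$, then $\dim\hat V=n/2$, so $\hat V$ is a common $\mathcal{M}$-subspace of the maximal dimension $n/2$; the uniqueness hypothesis forces $\hat V=U$, and choosing any $w\in W\setminus K$ gives $W=\langle U\times\{(0,0)\},\,w\rangle$ with $P(w)=(c_1,c_2)\neq 0_2$, which is the first description in~i). If $r=2$, then $\dim\hat V=n/2-1$ and $P|_W$ is onto, so I can pick $w_1,w_2\in W$ with $P(w_1)=(c_1,c_2)$ and $P(w_2)=(d_1,d_2)$ forming a basis of $\F_2^2$, giving $W=\langle\hat V\times\{(0,0)\},\,w_1,\,w_2\rangle$ with $(c_1,c_2),(d_1,d_2)$ both nonzero and distinct. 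The final step is to split this case according to whether $\hat V\subseteq U$ --- which is the second description in~i) with $V=\hat V$ --- or $\hat V\not\subseteq U$, which is description~ii) with $U'=\hat V$.

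I expect no serious obstacle: once the projection $P$ is introduced, the proof reduces to rank--nullity bookkeeping whose only substantive inputs are the derivative identity~\eqref{eq:computDaDB}, the bentness bound $\dim\le n/2$, and the uniqueness of $U$. The one point requiring care is verifying that the genericity conditions $(c_1,c_2)\neq 0_2$, $(d_1,d_2)\neq 0_2$, $(c_1,c_2)\neq(d_1,d_2)$ in the statement are exactly equivalent to $P|_W$ attaining the relevant rank, and that the dichotomy $\hat V\subseteq U$ versus $\hat V\not\subseteq U$ matches descriptions~i) and~ii) with no overlap or omission.
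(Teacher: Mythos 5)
Your proof is correct and follows essentially the same route as the paper's: both identify $W\cap(\F_2^n\times\{0_2\})$ as a common $\mathcal{M}$-subspace of the $f_i$ via the expansion of $D_{(a,0,0)}D_{(b,0,0)}f$, bound its dimension between $n/2-1$ and $n/2$ (the upper bound from bentness, the lower from the codimension-two projection), and then split into the cases $\hat V=U$, $\hat V\subset U$, $\hat V\not\subset U$. Your rank--nullity bookkeeping via the projection onto the last two coordinates is a cleaner packaging of the paper's quantity $d$, but the substantive inputs are identical.
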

\begin{proof}
	By Proposition~\ref{prop:commonsubspace}, if  $f \in \cM^\#$ then  any $(n/2+1)$-dimensional  $\mathcal{M}$-subspace $W$ of $f$ contains an $(n/2-1)$-dimensional (shared) subspace $V$ of $\F_2^n$ such that $D_aD_bf_i=0$, for all $a,b \in V$ and  $i=1, \ldots ,4$. 
	By assumption, this $(n/2-1)$-dimensional subspace $V$ of $\F_2^n$ such that $D_aD_bf_i=0$, for all $a,b \in V$ and  $i=1, \ldots ,4$, is either  a subspace of $U=\F_2^m \times \{0_m\}$ or alternatively $V \not \subset U$.  Furthermore,  by Proposition~\ref{prop:commonsubspace}, if $f \in \cM^\#$ then 
	$V \times \{ (0,0) \}$ is a vanishing subspace of $\F_2^{n+2}$ (of dimension $n/2-1$) for   $f$. 
	Notice that since $\dim(W)=n/2+1$ and $V \times (0,0) \subset W$, then 
	$$d=\dim(\{(a,b,0_2)\in\vF{n/2}\times\vF{n/2} \times \F_2^2 \colon (a,b,c_1,c_2)\in W\})\geq n/2-1. $$
	However, we also have  that  $d \leq n/2$ since any bent function on $\F_2^n$  cannot have an $\mathcal{M}$-subspace of dimension larger than $n/2$, which can be deduced from \cite[Proposition 8.33]{Carlet2021} and is explicitly stated in \cite[Result 1.35]{Polujan2020}. 

	Thus, there are two cases to consider. 
	
	\noindent $a)$ {\bf The case $V \subset U$:} 
	This  implies that we have two situations here.
	When  $d=n/2$, that is $V$ is extended to $U$, so that $W^{(1)}= \langle U \times (0,0), (a,b,c_1,c_2)\rangle $ is an $\mathcal{M}$-subspace of $f$. 
	
	When  $d=n/2-1$, then we have 
	$W^{(2)}=\langle V \times (0,0), (a,b,c_1,c_2),  (e,f,d_1,d_2)\rangle $  is an $\mathcal{M}$-subspace of $f$, where $V \subset U$ with  $\dim(V)=n/2-1$.   
	Assuming  that $(c_1,c_2)= 0_2$  or $ (d_1,d_2)= 0_2$, would contradict that $d=n/2-1$ and lead to $W^{(2)}= W^{(1)}$. 
	Similarly, one can  deduce  $  (c_1,c_2)\neq (d_1,d_2)$ as otherwise we would get $d=n/2$.
	It is obvious that $ W^{(1)}\neq W^{(2)}$.
	
	
	\noindent $b)$ {\bf The case $V \not  \subset U$:} 
	We have only the case $d=n/2-1$ since by  assumption $f_1, \ldots,f_4 \in \B_n$ have  only  $(n/2-1)$-dimensional subspaces $V$ of $\F_2^n$  for which $D_aD_b f_i=0$ for all $a,b  \in V$. Hence,  we have $W^{(3)}=\langle V \times (0,0), (a,b,c_1,c_2),   (e,f,d_1,d_2)\rangle $, where $\dim(V)=n/2-1$. 		Without loss of generality, we assume $(c_1,c_2)\neq 0_2$,
	then $d=n/2$ which contradicts that $d=n/2-1$.    Similarly, we know  $  (d_1,d_2)\neq 0_2, (c_1,c_2)\neq (d_1,d_2)$.
	Hence, we have  $(c_1,c_2)\neq  0_2, (d_1,d_2)\neq 0_2, (c_1,c_2)\neq (d_1,d_2)$.
	It is obvious that $W^{(3)}\neq W^{(1)}$. Now we prove that  $W^{(3)}\neq W^{(2)}$.
	Since $V\not  \subset U$, we have
	$$\{(a,b,0_2) \colon (a,b,c_1,c_2)\in W^{(3)}\}\neq \{(a,b,0_2) \colon (a,b,c_1,c_2)\in W^{(2)}\},$$
	which confirms the claim.
\end{proof}
\paragraph{An algorithm for checking the membership in the $\mathcal{PS}^\#$ class.} 
Recall that a partial spread of order $s$ in $\mathbb{F}_{2}^{n}$ with $n=2m$ is a set of $s$ vector subspaces $U_{1}, \ldots, U_{s}$ of $\mathbb{F}_{2}^{n}$ of dimension $m$ each, such that $U_{i} \cap U_{j}=\{0_n \}$ for all $i\neq j$. The partial spread of order $s=2^{m}+1$ in $\mathbb{F}_{2}^{n}$ with $n=2m$ is called a spread.

In the following, we denote by $\mathbbm{1}_{U}\colon \mathbb{F}_{2}^{n} \to \mathbb{F}_{2}$ the \emph{indicator function} of $U\subseteq\F_2^n$, i.e., $\mathbbm{1}_{U}(x)=1$ if $x\in U$, and $0$ otherwise. The \emph{partial spread class} $\mathcal{PS}$ of bent functions on $\F_2^n$ is the union of the following two classes~\cite{Dillon}: the $\mathcal{PS}^{+}$ \emph{class} is the set of Boolean bent functions of the form $f(x)=\sum_{i=1}^{2^{m-1}+1} \mathbbm{1}_{U_i}(x)$; the $\mathcal{PS}^{-}$ \emph{class} is the set of Boolean bent functions of the form $f(x)=\sum_{i=1}^{2^{m-1}} \mathbbm{1}_{U^*_i}(x)$, where $U^*_i:=U_i\setminus \{ 0 \}$.
The \emph{Desarguesian partial spread} class $\mathcal{PS}_{ap}\subset \mathcal{PS}^-$ is the set of Boolean bent functions $f$ on $\F_{2^{m}}\times \F_{2^{m}}$ of the form $f\colon (x,y)\in\F_{2^{m}}\times \F_{2^{m}}\mapsto h\left(x/y\right)$, where $\frac{x}{0}=0$, for all $x \in \mathbb{F}_{2^k}$ and $h\colon\F_{2^{k}}\rightarrow\F_2$ is a balanced Boolean function with $h(0)=0$. 

The property of a bent function to be a member of the partial spread class is not invariant under equivalence. If $f$ is partial spread function on $\F_2^n$, i.e., $f(x)=\sum_{i=1}^{s} \mathbbm{1}_{U_i}(x)$ for a partial spread $\{ U_{1}, \ldots, U_{s} \}$ of order $s$ in $\F_2^n$, then for an invertible $n\times n$-matrix $A$, the function $g\colon x \in\F_2^n\mapsto f(xA)$ is a partial spread function as well, since $g(x)=\sum_{i=1}^{s} \mathbbm{1}_{U_iA^{-1}}(x)$ for the partial spread $\{ U_{1}A^{-1}, \ldots, U_{s}A^{-1} \}$. However, translations of the input $x\mapsto x +  b$ for $b\in\F_2^n$ and additions of affine functions $l$ on $\F_2^n$ to the output of a partial spread function $f$ on $\F_2^n$ may lead to functions $g\colon x \mapsto f(x +  b)$ and $h\colon x \mapsto f(x) +  l(x)$ on $\F_2^n$, respectively, which do not belong to the partial spread class $\mathcal{PS}$. In Algorithm~\ref{algorithm: Membership in the partial spread class}, we describe how to check computationally the membership of a given bent function $f$ on $\F_2^n$ in the $\mathcal{PS}$ class.
\begin{algorithm}[H]
	\caption{Membership in the partial spread class $\mathcal{PS}$}
	\label{algorithm: Membership in the partial spread class}
	\begin{algorithmic}[1]
		\Require Bent function $f\in\mathcal{B}_n$.
		\Ensure True, $f$ is a partial spread function and false, otherwise.
		\If{$f(0)=1$} \Comment{The case $\mathcal{PS}^{+}$}
		\State \textbf{Assign} $s:=2^{n/2-1}+1$ \mbox{and} $V:=\operatorname{supp}(f)$ (the support of $f$). 
		\Else \Comment{The case $\mathcal{PS}^{-}$}
		\State \textbf{Assign} $s:=2^{n/2-1}$\quad\quad \mbox{and} $V:=\operatorname{supp}(f)\bigcup \{0_n\}$.
		\EndIf
		\State \textbf{Construct} the graph $G=(V,E)$, for which the relation between vertices in $V$ and edges in $E$ is determined by the incidence matrix $\left[ f( x + y)\right]_{x,y\in V}$.
		\State \textbf{Find} the set $S$ of cliques of the size $2^{n/2}$ in  $G$.
		\State \textbf{Construct} the set $V'$ of cliques in $S$, whose elements form an $n/2$-dimensional vector space.
		\If{$|V'|<k$} 
		\State \textbf{Return} false. 
		\EndIf
		\State \textbf{Construct} the graph $G'=(V',E')$, for which the relation between vertices in $V'$ and edges in $E'$ is determined by the incidence matrix $(a_{i,j})$, where $a_{i,j}=1$, if for $U_i,U_j \in S$ holds $U_i \cap U_j = \{0_n\}$, and 0 otherwise.
		\State \textbf{Return} true, $f$ is a partial spread function, if the graph $G'$ contains a clique of size $k$, and false otherwise.  
	\end{algorithmic}
\end{algorithm}
\begin{rem} Note that, it is possible to establish with Algorithm~\ref{algorithm: Membership in the partial spread class} whether a bent function $f\in\mathcal{B}_n$ belongs to the completed partial spread class $\mathcal{PS}^\#$. If for a vector $b\in \F_2^n$ and an affine function $l$ on $\F_2^n$ the function $g\colon x\mapsto f(x +  b) +  l(x)$ on $\F_2^n$ is a member of the $\mathcal{PS}$ class, we have $f\in\mathcal{PS}^\#$, otherwise $f\notin\mathcal{PS}^\#$.
\end{rem}	

\subsection{Concatenating bent functions on $\F_2^n$ that do not share  any $\mathcal{M}$-subspace of dimension $n/2-1$}\label{sub: 5.2}
With this result,  we derive the following generic construction method of bent functions outside the $\mathcal{MM}^\#$ class.

\begin{theo} \label{th:general_constr} Let $f_1, \ldots,f_4\in\mathcal{B}_n$ be four Boolean functions, not necessarily bent, such that $f=f_1||f_2||f_3||f_4 \in\B_{n+2}$ is a bent function. Assume that there is no 
	 $(\frac{n}{2}-1)$-dimensional subspace $V$ of $\F_2^n$  such that $D_aD_bf_i=0$, for all $a,b \in V$ and all $i\in \{1,\ldots,4\}$. Then, $f \in\B_{n+2}$ is a bent function outside $\cM^\#$.
\end{theo}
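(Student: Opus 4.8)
The plan is to argue by contradiction, reducing the claim entirely to the structural description of the $\mathcal{M}$-subspaces of a $4$-concatenation already established in Proposition~\ref{prop:commonsubspace}. Suppose, contrary to the assertion, that the bent function $f=f_1||f_2||f_3||f_4\in\mathcal{B}_{n+2}$ belongs to $\cM^\#$. Since $f$ is a bent function on $\F_2^{n+2}$, Dillon's criterion (Lemma~\ref{lem M-M second}), applied in the ambient dimension $n+2$, tells us that membership in $\cM^\#$ is equivalent to the existence of an $\mathcal{M}$-subspace $W$ of $f$ of dimension $(n+2)/2=n/2+1$. Fix such a subspace $W$.

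The decisive step is to feed this $W$ into Proposition~\ref{prop:commonsubspace}. Its hypotheses are met verbatim: it requires only that $f_1,\ldots,f_4$ be Boolean functions in $n$ variables (not necessarily bent) whose concatenation $f$ lies in $\cM^\#$, and that $W$ be an $\mathcal{M}$-subspace of $f$ of dimension $n/2+1$. The proposition then produces an $(n/2-1)$-dimensional subspace $V$ of $\F_2^n$ that is \emph{simultaneously} an $\mathcal{M}$-subspace of every $f_i$, that is, $D_aD_bf_i=0$ for all $a,b\in V$ and all $i\in\{1,\ldots,4\}$.

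This directly contradicts the hypothesis of the theorem, which asserts precisely that \emph{no} $(n/2-1)$-dimensional subspace $V$ of $\F_2^n$ satisfies $D_aD_bf_i=0$ for all $a,b\in V$ and all $i$. Hence the assumption $f\in\cM^\#$ is untenable, and we conclude that $f$ is a bent function lying outside $\cM^\#$, as claimed.

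Since essentially all of the work is carried by Proposition~\ref{prop:commonsubspace}, there is no substantial obstacle; the proof is a short contradiction argument. The only points requiring care are bookkeeping in dimension: Lemma~\ref{lem M-M second} must be invoked in dimension $n+2$ (not $n$), so that a maximal $\mathcal{M}$-subspace has dimension $n/2+1$, and the common subspace it forces on the $f_i$ has dimension $n/2-1$, exactly matching the dimension that the hypothesis excludes. One should also note that bentness of the individual $f_i$ is never used, so the result applies to any Boolean $f_1,\ldots,f_4$ whose concatenation happens to be bent.
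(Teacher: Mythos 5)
Your proof is correct and follows exactly the paper's route: the paper also dispatches this theorem as a direct consequence of Proposition~\ref{prop:commonsubspace}, and your write-up merely makes the contrapositive/contradiction structure and the dimension bookkeeping explicit. No issues.
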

\begin{proof}
	The result  is a direct consequence of Proposition \ref{prop:commonsubspace}.
\end{proof}
\begin{ex}\label{ex: Outside everything 1}
 Let $\pi$ be a quadratic APN permutation of $\F_2^3$, which, in turn, has the~\eqref{eq: P1} property:
 \begin{equation}\label{eq: APN perm F23}
 	\pi(y_1,y_2,y_3)=\begin{bmatrix}  y_{2} y_3 + y_{1} + y_{2} + y_3 \\ y_{1} y_{2} + y_{1} y_3 +y_{2}  \\ y_{1} y_{2} + y_3
 	\end{bmatrix}.
 \end{equation}
 Define four bent functions $f_1, \ldots ,f_4\in\mathcal{B}_6$, which all belong to $\cM^\#$, as follows: 
\begin{equation}\label{eq: D0 decomposition}
	\begin{aligned}
		f_1(x,y)&=x\cdot y+\delta_0(x), &
		f_2(x,y)&=x\cdot \pi(y)+\delta_0(x),\\
		f_3(x,y)&=x\cdot y, &
		f_4(x,y)&=x\cdot \pi(y)+1.
	\end{aligned}
\end{equation}
One can check that for defined in~\eqref{eq: D0 decomposition} bent functions, the dual bent condition is satisfied. In this way, we have that $f=f_1||f_2||f_3||f_4\in\mathcal{B}_8$ is bent. Its ANF is given by
\begin{equation}\label{eq: Outside everything 1 ANF}
	\begin{split}
		f(z)=&1 + z_1 + z_2 + z_1 z_2 + z_3 + z_1 z_3 + z_2 z_3 + z_1 z_2 z_3 + z_3 z_4 + z_1 z_5 + z_2 z_6 + z_7 + \\ & z_1 z_7 + z_2 z_7 + z_1 z_2 z_7 + z_3 z_7 + z_1 z_3 z_7 + z_2 z_3 z_7 + z_1 z_2 z_3 z_7 + z_1 z_4 z_8 + z_2 z_4 z_5 z_8 + \\ & z_1 z_6 z_8 + z_1 z_4 z_6 z_8 + z_2 z_5 z_6 z_8 + z_3 z_5 z_6 z_8 + z_7 z_8.
	\end{split}
\end{equation}
 Finally, we confirm that the functions $f_1,f_2,f_3,f_4$ satisfy the conditions of Theorem~\ref{th:general_constr}. Due to the APN-ness of $\pi$, we have that $D_aD_bf_4=0$ if and only if two-dimensional subspace $\{a,b\}$ is a subspace of $S=\F_2^3\times\{0_3\}$. On the other hand,  $D_aD_bf_1\neq0$ for any two dimensional subspace $\{a,b\}$ of $S=\F_2^3\times\{0_3\}$. In this way, we conclude that $f\notin\cM^\#$. Using Algorithm~\ref{algorithm: Membership in the partial spread class}, we also confirm that $f\notin\mathcal{PS}^\#$. In this way, we have that $f\notin(\cM^\#\cup\mathcal{PS}^\#)$. 
\end{ex}
Now, we provide one generic method of specifying $f=f_1||f_2||f_3||f_4$ outside $\cM^\#$, where $f_i$ are bent functions within or outside  $\cM^\#$. The dual bent condition $f_1^*  + f_2^*  +  f_3^*  + f_4^* =1$ can be  satisfied if we simply select, e.g., $f_1=f_2$ and $f_4=1 + f_3$, where $f_i \in \B_n$ are bent. Then, according to Theorem \ref{th:general_constr}, it is enough to ensure that $f_1$ and $f_3$ do not share any $\mathcal{M}$-subspace of dimension $n/2-1$.

\begin{theo}\label{th: two MM with a unique M-subspace} Let $\pi$ be a permutation of $\F_2^m$ having  the property~\eqref{eq: P1}. Let $\sigma$ a permutation of $\F_2^m$, such that there is no $(m-2)$-dimensional subspace $S$ of $\F_2^m$ for which $D_{a}D_{b} \sigma=0$ for all $a,b \in S$.
Let $h_1,h_2\in\mathcal{B}_m$ be arbitrary Boolean functions. Let $f_i\in\mathcal{B}_{2m}$, $i=1,\ldots,4$ be the functions defined by
\begin{equation}\label{eq: bent4 change variables}
    \begin{split}
        f_1(x,y)&=f_2(x,y)=x \cdot \pi(y) + h_1(y),\\
        f_3(x,y)&=f_4(x,y)+1= y \cdot \sigma(x) + h_2(x)
    \end{split}
\end{equation}
for all $x,y \in \F_2^{m}$. Then $f=f_1||f_2||f_3||f_4 \in\B_{2m+2}$ is a bent function outside the $\cM^{\#}$ class.
\end{theo}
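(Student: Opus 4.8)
The plan is to establish two things in turn: that $f$ is bent, and that $f\notin\cM^\#$; the latter I would obtain through Theorem~\ref{th:general_constr}, whose conclusion presupposes bentness, so bentness must be settled first. All four functions $f_i$ lie in $\cM$ and are therefore bent, so by the dual bent condition of~\cite{SHCF} it suffices to verify $f_1^*+f_2^*+f_3^*+f_4^*=1$. Since $f_1=f_2$ we have $f_1^*+f_2^*=0$; and since $f_3=f_4+1$, negating the Walsh--Hadamard transform of $f_4$ gives $f_4^*=f_3^*+1$, so $f_3^*+f_4^*=1$. Adding these gives the dual bent condition, hence $f\in\mathcal{B}_{2m+2}$ is bent.

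To show $f\notin\cM^\#$, by Theorem~\ref{th:general_constr} (with $n=2m$) it is enough to prove that $f_1,\dots,f_4$ share no $(m-1)$-dimensional $\mathcal{M}$-subspace $V$ of $\F_2^{2m}$. Because $f_1=f_2$ and $f_3,f_4$ differ only by the constant $1$, and constants do not affect second-order derivatives, the condition $D_aD_bf_i=0$ for all $i$ is equivalent to the pair $D_aD_bf_1=0$ and $D_aD_bf_3=0$. So I would argue by contradiction: assume there is an $(m-1)$-dimensional subspace $V\subseteq\F_2^{2m}$ that is simultaneously an $\mathcal{M}$-subspace of $f_1$ and of $f_3$, and derive a contradiction with the hypothesis on $\sigma$.

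The key step is to exploit the~\eqref{eq: P1} property of $\pi$ exactly as in the proof of Theorem~\ref{theo: unique P1}. Writing $L(x,y)=y$ and using formula~\eqref{eq: secder2} for $f_1(x,y)=x\cdot\pi(y)+h_1(y)$, any two vectors $a=(a_1,a_2),b=(b_1,b_2)\in V$ with $a_2,b_2$ distinct and nonzero (hence linearly independent over $\F_2$) would leave the surviving term $x\cdot D_{a_2}D_{b_2}\pi(y)\neq 0$, contradicting $D_aD_bf_1=0$; hence $\dim L(V)\le 1$. Consequently $W:=\{a_1\in\F_2^m \colon (a_1,0_m)\in V\}$ satisfies $\dim W\ge m-2$ and $W\times\{0_m\}\subseteq V$. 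Since $V$ is also an $\mathcal{M}$-subspace of $f_3(x,y)=y\cdot\sigma(x)+h_2(x)$, expanding
\[
D_{(a_1,0_m)}D_{(a_1',0_m)}f_3(x,y)=y\cdot D_{a_1}D_{a_1'}\sigma(x)+D_{a_1}D_{a_1'}h_2(x)
\]
and letting $y$ vary forces $D_{a_1}D_{a_1'}\sigma=0_m$ for all $a_1,a_1'\in W$. As $\dim W\ge m-2$, the space $W$ contains an $(m-2)$-dimensional subspace on which all second-order derivatives of $\sigma$ vanish, contradicting the assumption on $\sigma$. This contradiction rules out a common $(m-1)$-dimensional $\mathcal{M}$-subspace, so $f\notin\cM^\#$.

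The part requiring the most care is the coordination of the two opposite roles played by the coordinate blocks. In $f_1$ the vanishing forced by~\eqref{eq: P1} constrains the $y$-projection of $V$, pinning all but one dimension of $V$ inside $\F_2^m\times\{0_m\}$; but in $f_3$ the differentiation directions $(a_1,0_m)$ act precisely on the $\sigma$-input coordinate $x$, converting this large intersection into a low-codimension subspace on which $\sigma$ has vanishing second-order derivatives. The decisive bookkeeping is that the \emph{strong} hypothesis on $\pi$ (no vanishing on any $2$-dimensional subspace) produces exactly a collapse of dimension at least $m-2$, while the \emph{weaker} hypothesis on $\sigma$ (no vanishing on any $(m-2)$-dimensional subspace) is precisely what forbids it — so the two structural assumptions dovetail at the matching dimension, and verifying that these counts line up is the crux of the argument.
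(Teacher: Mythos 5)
Your proposal is correct and follows essentially the same route as the paper: the paper's proof also assumes $f\in\cM^\#$, invokes Proposition~\ref{prop:commonsubspace} (of which Theorem~\ref{th:general_constr} is the contrapositive) to obtain a common $(m-1)$-dimensional $\mathcal{M}$-subspace $V$, uses the~\eqref{eq: P1} property of $\pi$ exactly as in the proof of Theorem~\ref{theo: unique P1} to force $\dim(Im(L))\le 1$ and hence $\dim(Ker(L))\ge m-2$, and then differentiates $f_3$ along $Ker(L)$ to contradict the hypothesis on $\sigma$. Your explicit verification of the dual bent condition is handled in the paper by the discussion immediately preceding the theorem, but is the same observation.
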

\begin{proof}
Assume that $f$ is in the $\cM^{\#}$ class. From Proposition~\ref{prop:commonsubspace}, there exists an $(m-1)$-dimensional subspace $V$ of $\F_2^{2m}$ such that $D_aD_bf_i=0$, for all $a,b \in V$; $i=1, \ldots ,4.$ Define the mapping $L: V \to \F_2^m$ by $L(x,y)=y$, for all $(x,y) \in \F_2^{2m}$. Since $D_aD_bf_1=0$ for all $a,b \in V$, from the proof of Theorem \ref{theo: unique P1} we deduce that $\dim(Im(L)) \leq 1$. From the rank-nullity theorem, we have that $\dim(Ker(L)) \geq m-2$. For $a=(a_1,a_2)$, $b=(b_1,b_2)$ in $Ker(L)$ we have $a_2=b_2=0_m$, and since $Ker(L) \subseteq V$ so $D_aD_bf_3=0$, we get 
$$y \cdot D_{a_1}D_{b_1} \sigma(x) + D_{a_1}D_{b_1} h_2(x) =0, \text{ for all } x,y \in \F_2^m.$$
Consequently, $D_{a_1}D_{b_1} \sigma=0$. Since $\dim(Ker(L)) \geq m-2$, this means that there is a subspace $S$ of $\F_2^m$ of dimension $m-2$ such that $D_{a_1}D_{b_1} \sigma=0$ for all $a_1,b_1 \in S$. However, this is in contradiction with the assumption about $\sigma$. Hence $f$ is outside of the $\cM^{\#}$ class.
\end{proof}

With this result, we can now demonstrate how one can construct bent functions in 8 variables outside $\mathcal{MM}^\#$ class from four bent functions in 6 variables in $\mathcal{MM}^\#$. We emphasize that this is the first  attempt in the literature towards  our   better understanding of the origin of bent functions. 

\begin{ex}\label{ex: Outside everything 2}
    Let $\pi$ be the APN permutation defined in~\eqref{eq: APN perm F23} and $\sigma$ be another APN permutation of $\F_2^3$, defined by the algebraic normal form in the following way:
    \begin{equation*}
        \sigma(x)=
        \begin{bmatrix}
        x_1 + x_2 + x_3 + x_2 x_3\\ x_2 + x_3 + x_1 x_3\\ x_2 + x_1 x_2 + x_1 x_3
        \end{bmatrix}.
    \end{equation*}
    Let $h_1,h_2\in\mathcal{B}_3$ be arbitrary Boolean functions. Define four bent functions $f_i\in\mathcal{B}_{6}$ for $i=1,2,3,4$ as in~\eqref{eq: bent4 change variables}, which all belong to $\cM^\#$. Then, the function $f=f_1||f_2||f_3||f_4 \in\B_{8}$ is a bent function outside the $\cM^{\#}$ class by Theorem~\ref{th: two MM with a unique M-subspace} (independently on the choice of $h_1$ and $h_2$).  Now, set $h_1(y)=y_1 y_2 y_3 + y_1 y_2 + y_1 y_3 + y_2 y_3 + y_1 + y_2 + y_3$ and $h_2(y)=y_1 y_2 y_3 + y_1 y_3 + y_2 y_3 + 1$. Then, the algebraic normal form of $f=f_1||f_2||f_3||f_4$ is given as follows:
    \begin{equation}\label{eq: Outside everything 2 ANF}
    	\begin{split}
    		f(z)&= z_4 + z_1 z_4 + z_5 + z_1 z_5 + z_2 z_5 + z_4 z_5 + z_2 z_4 z_5 + z_3 z_4 z_5 + z_6 + z_1 z_6 + z_3 z_6 \\&+ z_4 z_6 + z_2 z_4 z_6 + z_5 z_6 + z_1 z_5 z_6 + z_4 z_5 z_6 + z_1 z_3 z_7 + z_2 z_3 z_7 + z_1 z_2 z_3 z_7 \\&+ z_4 z_7 + z_2 z_4 z_7 + z_3 z_4 z_7 + 
    		z_2 z_3 z_4 z_7 + z_5 z_7 + z_1 z_5 z_7 + z_1 z_2 z_5 z_7 + z_1 z_3 z_5 z_7 \\&+ 
    		z_4 z_5 z_7 + z_2 z_4 z_5 z_7 + z_3 z_4 z_5 z_7 + z_6 z_7 + z_1 z_6 z_7 + 
    		z_1 z_2 z_6 z_7 + z_4 z_6 z_7 + z_2 z_4 z_6 z_7 \\& + z_5 z_6 z_7 + z_1 z_5 z_6 z_7 + 
    		z_4 z_5 z_6 z_7 + z_7 z_8.\\
    	\end{split}
    \end{equation}
	Using Algorithm~\ref{algorithm: Membership in the partial spread class}, we confirm that $f\notin\mathcal{PS}^\#$, and, hence, $f\notin(\cM^\#\cup\mathcal{PS}^\#)$.
\end{ex}
\begin{rem}\label{rem:sharing}
	It is important to notice that the condition that any $(\frac{n}{2}-1)$-dimensional  $\mathcal{M}$-subspace $V$ is not shared between $f_i$ in Theorem \ref{th:general_constr} is only sufficient, and there exist functions $f_i$ that do share the unique canonical $\mathcal{M}$-subspace $V=\F_2^{n/2} \times \{0_{n/2}\}$  even though 
	$f=f_1||f_2||f_3||f_4$ is outside $\cM^\#$, which is discussed in  Section   \ref{sec:sharingMsubspace}.
\end{rem}

We notice that bent functions on $\F_2^n$ outside $\cM^\#$ do  not  admit $n/2$-dimensional vanishing subspaces, and furthermore it was observed 
in~\cite{Bent_Decomp2022} that  many instances of  bent functions in $\mathcal{PS}\setminus \cM^\#$  only have vanishing  subspaces of dimension
less than $n/2-1$.

\begin{cor}\label{cor:mixingtMMandoutsideMM} Let $f_1=f_2$ be   two arbitrary  bent functions on $\F_2^n$ in $\cM^\#$ and define $f_4=1+f_3$ on $\F_2^n$ where $f_3 \not\in  \cM^\#$ and it does not admit $\mathcal{M}$-subspaces of dimension  larger than   $n/2-2$.  Then, $f=f_1||f_2||f_3||f_4 \in \B_{n+2}$ is a  bent function outside $\cM^\#$. 
\end{cor}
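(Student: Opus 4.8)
The plan is to apply Theorem~\ref{th:general_constr} directly, so the entire task reduces to verifying its single hypothesis: that there is no $(\frac{n}{2}-1)$-dimensional subspace $V$ of $\F_2^n$ on which $D_aD_bf_i=0$ for all $a,b\in V$ and all $i\in\{1,\ldots,4\}$. First I would check the dual bent condition to confirm that $f=f_1||f_2||f_3||f_4$ is indeed bent: since $f_1=f_2$ we have $f_1^*+f_2^*=0$, and since $f_4=1+f_3$ we have $f_3^*+f_4^*=1$, so $f_1^*+f_2^*+f_3^*+f_4^*=1$ as required by the dual bent condition from Subsection~\ref{sub: bent-4 cat and its prop}. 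Hence $f\in\B_{n+2}$ is bent.

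Next I would argue that no common $(\frac{n}{2}-1)$-dimensional $\mathcal{M}$-subspace $V$ can exist. Suppose, for contradiction, that such a $V$ does exist. Then in particular $D_aD_bf_3=0$ for all $a,b\in V$, which says $V$ is an $\mathcal{M}$-subspace of $f_3$ of dimension $\frac{n}{2}-1=n/2-1$. But by assumption $f_3\notin\cM^\#$ admits no $\mathcal{M}$-subspace of dimension larger than $n/2-2$; a subspace of dimension $n/2-1$ exceeds this bound, giving the desired contradiction. (One could equally well use $f_1\in\cM^\#$ here, but the hypothesis on $f_3$ is the clean obstruction.) Therefore the hypothesis of Theorem~\ref{th:general_constr} is satisfied, and we conclude that $f$ is a bent function outside $\cM^\#$.

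The only subtlety worth flagging is the bookkeeping on dimensions: I must ensure the dimension strictly exceeds the allowed maximum, i.e. that $n/2-1 > n/2-2$, which is immediate. The genuinely load-bearing step is the contradiction argument, and it is essentially trivial once one observes that a \emph{common} $\mathcal{M}$-subspace of all four functions is in particular an $\mathcal{M}$-subspace of $f_3$ alone. Thus the main obstacle is not in the proof itself but in correctly invoking the upper bound on the dimension of $\mathcal{M}$-subspaces of $f_3$ — this is exactly the hypothesis that $f_3$ admits no $\mathcal{M}$-subspace of dimension larger than $n/2-2$, which forces any shared $(n/2-1)$-dimensional candidate to fail to lie in $f_3$'s $\mathcal{M}$-subspace collection. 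No explicit computation of second-order derivatives is needed, since everything follows structurally from Proposition~\ref{prop:commonsubspace} via Theorem~\ref{th:general_constr}.
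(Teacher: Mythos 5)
Your proof is correct and follows exactly the route the paper intends: the corollary is stated without an explicit proof immediately after Theorem~\ref{th:general_constr}, and the hypothesis that $f_3$ admits no $\mathcal{M}$-subspace of dimension larger than $n/2-2$ is precisely what rules out a common $(n/2-1)$-dimensional $\mathcal{M}$-subspace, while $f_1=f_2$ and $f_4=1+f_3$ give the dual bent condition. One small correction: your parenthetical claim that one could ``equally well use $f_1\in\cM^\#$'' is false --- since $f_1\in\cM^\#$ it does possess $(n/2-1)$-dimensional $\mathcal{M}$-subspaces (any subspace of its $n/2$-dimensional $\mathcal{M}$-subspace qualifies), so the obstruction genuinely must come from $f_3$; fortunately your argument does not rely on that aside.
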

\begin{op}
The non-sharing property provides a theoretical framework for bent 4-concatenation, however finding such $f_i$ (also satisfying the dual bent condition) appears to be difficult. We leave as an open problem a specification of such quadruples in a generic manner.
\end{op}
	\subsection{Concatenating bent functions that share a unique $\mathcal{M}$-subspace of dimension $n/2$} \label{sec:sharingMsubspace}
	Proposition~\ref{prop:sharinguniqueM} provides the possibility to analyze the class exclusion from $\cM^\#$ by only considering the subspaces $W$ of dimension $n/2+1$ of the above form. In particular, this general case is not covered by Proposition~\ref{prop:commonsubspace}, since $f_i$ share the unique $\mathcal{M}$-subspace $U=\F_2^m \times \{0_m\}$. The analysis can be divided into two cases, namely considering the case that the only $(n/2-1)$-dimensional vanishing subspace $U'$ for all $f_i$ is such that $U' \subset U$ or alternatively $U' \not \subset U$. The main problem in this analysis is the fact that $f_1+f_2$, $f_1 + f_3$ or $f_1+f_2+f_3+f_4$ are not in general bent functions and therefore the analysis of second-order derivatives in~\eqref{eq:2ndderiv_conc correct} becomes harder. 
	\begin{theo} \label{th:sharingcommonsubspace}Let $f_1, \ldots,f_4$ be four bent  functions on $\F_2^n$, with $n=2m$, satisfying the following conditions:
	\begin{enumerate}
		\item $f_1, \ldots,f_4$ belong to $\cM^\#$ and  share a unique  $\mathcal{M}$-subspace of dimension $m$;
		\item  $f=f_1||f_2||f_3||f_4 \in\B_{n+2}$ is a bent function;
	\end{enumerate}
	Let $V$ be an $(\frac{n}{2}-1)$-dimensional subspace  of $\F_2^n$  such that $D_aD_bf_i=0$, for all $a,b \in V$; $i=1, \ldots ,4.$
	If for any $v\in  \F_2^n $ and  any such $V \subset \F_2^n$, there exist $u^{(1)},u^{(2)},u^{(3)}\in V $ such that the following three conditions hold simultaneously 
	\begin{enumerate}
		\item[1.] $D_{u^{(1)}}f_1(x)+D_{u^{(1)}}f_2(x+v)\neq 0,~\textit{or}~D_{u^{(1)}}f_3(x)+D_{u^{(1)}}f_4(x+v)\neq 0,$
		\item[2.] $D_{u^{(2)}}f_1(x)+D_{u^{(2)}}f_3(x+v)\neq 0,~\textit{or}~D_{u^{(2)}}f_2(x)+D_{u^{(2)}}f_4(x+v)\neq 0,$
		\item[3.] $D_{u^{(3)}}f_2(x)+D_{u^{(3)}}f_3(x+v)\neq 0,~\textit{or}~D_{u^{(3)}}f_1(x)+D_{u^{(3)}}f_4(x+v)\neq 0,$
	\end{enumerate}
	then $f$ is outside $\cM^\#$.
\end{theo}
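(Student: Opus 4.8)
The plan is to argue by contradiction. Assume that $f=f_1||f_2||f_3||f_4$ belongs to $\cM^\#$, so that by Lemma~\ref{lem M-M second} it admits an $\mathcal{M}$-subspace $W$ of dimension $\frac{n}{2}+1=m+1$. Since $f_1,\ldots,f_4$ share the unique $m$-dimensional $\mathcal{M}$-subspace $U=\F_2^m\times\{0_m\}$, Proposition~\ref{prop:sharinguniqueM} forces $W$ into one of the three listed shapes. In every one of these shapes $W$ contains a subspace $V\times\{(0,0)\}$, where $V$ is an $(m-1)$-dimensional subspace of $\F_2^n$ with $D_aD_bf_i=0$ for all $a,b\in V$ and all $i$ (this is exactly the conclusion of Proposition~\ref{prop:commonsubspace}; in the shape i)-first one takes any $(m-1)$-dimensional $V\subset U$), together with at least one additional generator $g=(g_0,g_1,g_2)$ with $g_0\in\F_2^n$ and $(g_1,g_2)\in\F_2^2\setminus\{0_2\}$. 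The whole proof then reduces to showing that the presence of such a pair is incompatible with hypotheses~1.--3.

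First I would fix $g$ and an arbitrary $u_0\in V$ and exploit that both $(u_0,0,0)$ and $g$ lie in the $\mathcal{M}$-subspace $W$, so that $D_{(u_0,0,0)}D_{g}f\equiv0$. Substituting $a=(u_0,0,0)$ (whose last two coordinates vanish) and $b=g$ into~\eqref{eq:2ndderiv_conc correct} and collecting the result as a polynomial in $y_1,y_2$, the identity $D_{(u_0,0,0)}D_gf\equiv 0$ is equivalent to the simultaneous vanishing of its constant, $y_1$-, $y_2$- and $y_1y_2$-coefficient functions of $x$. Writing $D_i:=D_{u_0}f_i(x)$ and $D_i':=D_{u_0}f_i(x+g_0)$ and using $D_{u_0}D_{g_0}f_i(x)=D_i+D_i'$, each coefficient collapses to a short $\F_2$-linear combination of the $D_i$ and $D_i'$. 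The main computational step is to read off, for each value of $(g_1,g_2)$, which pairings survive: for $(g_1,g_2)=(0,1)$ the constant coefficient is $D_1+D_2'$ and the $y_1$-coefficient is $(D_1+D_2')+(D_3+D_4')$; for $(1,0)$ the constant and $y_2$-coefficients are $D_1+D_3'$ and $(D_1+D_3')+(D_2+D_4')$; and for $(1,1)$ the constant coefficient is $D_1+D_4'$ and the $y_1$-coefficient is $D_1+D_3+D_2'+D_4'$.

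Consequently, the vanishing of the relevant coefficients forces, for \emph{every} $u_0\in V$, both summands of exactly one of the three dichotomies to be identically zero in $x$, namely $\{D_{u_0}f_1(x)+D_{u_0}f_2(x+g_0),\,D_{u_0}f_3(x)+D_{u_0}f_4(x+g_0)\}$ when $(g_1,g_2)=(0,1)$, the pair $\{D_{u_0}f_1(x)+D_{u_0}f_3(x+g_0),\,D_{u_0}f_2(x)+D_{u_0}f_4(x+g_0)\}$ when $(g_1,g_2)=(1,0)$, and $\{D_{u_0}f_1(x)+D_{u_0}f_4(x+g_0),\,D_{u_0}f_2(x)+D_{u_0}f_3(x+g_0)\}$ when $(g_1,g_2)=(1,1)$. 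These are precisely the negations of hypotheses~1., 2.\ and~3., read with $v=g_0$. Choosing $u_0=u^{(1)}$, $u^{(2)}$ or $u^{(3)}$ according to the direction $(g_1,g_2)$ of the generator $g$ then produces an element of $V$ at which one summand is nonzero, contradicting the fact that $D_{(u_0,0,0)}D_gf\equiv 0$ for all $u_0\in V$. Hence no $(m+1)$-dimensional $\mathcal{M}$-subspace $W$ can exist, and $f\notin\cM^\#$.

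I expect the only delicate point to be the case $(g_1,g_2)=(1,1)$. There the constant coefficient yields $D_{u_0}f_1(x)+D_{u_0}f_4(x+g_0)=0$ directly, but adding it to the $y_1$-coefficient produces the second pairing in the form $D_{u_0}f_3(x)+D_{u_0}f_2(x+g_0)=0$ rather than as written in hypothesis~3.; the two are reconciled by the substitution $x\mapsto x+g_0$, which is a bijection and turns $D_{u_0}f_3(x)+D_{u_0}f_2(x+g_0)$ into $D_{u_0}f_2(x)+D_{u_0}f_3(x+g_0)$. Besides this bookkeeping, the remaining care is to verify that each shape in Proposition~\ref{prop:sharinguniqueM} indeed supplies a generator with $(g_1,g_2)\neq 0_2$ and an accompanying genuinely shared $(m-1)$-dimensional $\mathcal{M}$-subspace $V$, so that the contradiction can be invoked against whichever of the conditions~1.--3.\ matches the direction of $g$; this is why all three conditions are required simultaneously.
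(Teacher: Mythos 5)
Your proposal is correct and follows essentially the same route as the paper's proof: both invoke Proposition~\ref{prop:sharinguniqueM} to force any $(m+1)$-dimensional $\mathcal{M}$-subspace $W$ to contain a shared $(m-1)$-dimensional subspace $V\times\{(0,0)\}$ together with a generator whose last two coordinates $(d_1,d_2)$ are nonzero, and then evaluate $D_{(u,0,0)}D_{(v,d_1,d_2)}f$ via~\eqref{eq:2ndderiv_conc correct} in the three cases $(0,1)$, $(1,0)$, $(1,1)$, matching them to hypotheses~1, 2 and~3 respectively. Your coefficient bookkeeping (including the $x\mapsto x+g_0$ substitution in the $(1,1)$ case) is in fact slightly more explicit than the paper's displayed computation, which silently restricts to the part of the $y_1,y_2$-polynomial relevant to each case, but the argument is the same.
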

\begin{proof}
W.l.o.g., we assume  that the unique $\mathcal{M}$-subspace shared between $f_i$  is $U=\F_2^m \times \{0\}$.	Let  $\{W\}$  be $(n/2+1)$-dimensional subspaces of  $\F_2^{n+2}$.
	We prove that $f$ does not belong to $\cM^\#$ by using  Lemma \ref{lem M-M second}.  We need to show  that, for any $W$, there exist two vectors $ (u,c_1,c_2),(v,d_1,d_2)\in W$  such that 
	$ D_{(u,c_1,c_2)}D_{(v,d_1,d_2)}f\neq 0$.
	
	From Proposition \ref{prop:sharinguniqueM}, if $W$ is an  $(n/2+1)$-dimensional vanishing subspaces of $f$ then 
	$W= \langle U \times (0,0),(a,b,c_1,c_2)\rangle $,  where  $c_1,c_2 \in \F_2,a,b\in \F_2^{n/2}$ and $(c_1,c_2)\neq 0_2$; or
	$W=\langle V \times (0,0), (a,b,c_1,c_2), (e,f,d_1,d_2)\rangle $, where   $\dim(V)=n/2-1$ and $ a,b,e,f\in \F_2^{n/2}$, 
	$(c_1,c_2)\neq 0_2, (d_1,d_2)\neq 0_2, (c_1,c_2)\neq (d_1,d_2)$.
	In addition, we know 
	$$W= \langle U \times (0,0),(a,b,c_1,c_2)\rangle= \langle V \times (0,0),(a,b,c_1,c_2), (e,f,0,0)\rangle,$$ when $V\subset U, (e,f)\in U\setminus V$  (where $\dim(V)=n/2-1$).
	Hence, if we prove that for any $(v,d_1,d_2)\in W$ there always exists one vector $(u,0,0)\in W$ such that  $ D_{(u,0,0)}D_{(v,d_1,d_2)}f\neq 0$ where $(d_1,d_2)\neq 0_2$, then $f$ is outside $\cM^\#$. In order to show it, consider the following three cases. \ \\
	\noindent\textbf{Case 1}. Let $(d_1,d_2)=(0,1)$. From Equation~\eqref{eq:2ndderiv_conc correct}, we have that 
	\begin{equation}\label{equ the5.10 2 c1}
		\begin{split}
			D_{(u,0,0)}D_{(v,d_1,d_2)}f(x,y_1,y_2)=&D_uf_{12}(x+v)+y_1 D_u f_{1234}(x+v)\\
			=&(y_1+1)(D_uf_{12}(x+v))+y_1D_u f_{34}(x+v)\\
			=&(y_1+1)(D_uf_1(x)+D_uf_2(x+v))\\+&y_1(D_uf_3(x)+D_uf_4(x+v)).
		\end{split}
	\end{equation}

	\noindent Since for any $v\in  \F_2^n $ and any $V$, there exist $u^{(1)}\in V $ such that  $D_{u^{(1)}}f_1(x)+D_{u^{(1)}}f_2(x+v)\neq 0,~\textit{or}~D_{u^{(1)}}f_3(x)+D_{u^{(1)}}f_4(x+v)\neq 0$, from (\ref{equ the5.10 2 c1}), we have $$D_{(u^{(1)},0,0)}D_{(v,d_1,d_2)}f(x,y_1,y_2)\neq 0.$$
		\noindent\textbf{Case 2}. Let $(d_1,d_2)=(1,0)$. From Equation~\eqref{eq:2ndderiv_conc correct}, we have that 
	\begin{equation}\label{equ the5.10 2 c2}
		\begin{split}
			D_{(u,0,0)}D_{(v,d_1,d_2)}f(x,y_1,y_2)=&D_uf_{13}(x+v)+y_2 D_u f_{1234}(x+v)\\
			=&(y_2+1)(D_uf_{13}(x+v))+y_2D_u f_{24}(x+v)\\
			=&(y_2+1)(D_uf_1(x)+D_uf_3(x+v))\\+&y_2(D_uf_2(x)+D_uf_4(x+v)).
		\end{split}
	\end{equation}
	
	\noindent Since for any $v\in  \F_2^n $ and any $V$, there exist $u^{(2)}\in V $ such that  $D_{u^{(2)}}f_1(x)+D_{u^{(2)}}f_3(x+v)\neq 0,~\textit{or}~D_{u^{(2)}}f_2(x)+D_{u^{(2)}}f_4(x+v)\neq 0$, from (\ref{equ the5.10 2 c2}), we have $$D_{(u^{(2)},0,0)}D_{(v,d_1,d_2)}f(x,y_1,y_2)\neq 0.$$
	
		\noindent\textbf{Case 3}. Let $(d_1,d_2)=(1,1)$. From Equation~\eqref{eq:2ndderiv_conc correct}, we have that 
	\begin{equation}\label{equ the5.10 2 c3}
		\begin{split}
			D_{(u,0,0)}D_{(v,d_1,d_2)}f(x,y_1,y_2)=&D_uf_{23}(x+v)+(y_1+y_2+1) D_u f_{1234}(x+v)\\
			=&(y_1+y_2)(D_uf_{23}(x+v))+(y_1+y_2+1)D_u f_{14}(x+v)\\
			=&(y_1+y_2)(D_uf_2(x)+D_uf_3(x+v))\\
			+&(y_1+y_2+1)(D_uf_1(x)+D_uf_4(x+v)).
		\end{split}
	\end{equation}
	
	\noindent Since for any $v\in  \F_2^n $ and any $V$, there exist $u^{(3)}\in V $ such that  $D_{u^{(3)}}f_2(x)+D_{u^{(1)}}f_3(x+v)\neq 0,~\textit{or}~D_{u^{(3)}}f_1(x)+D_{u^{(3)}}f_4(x+v)\neq 0$, from (\ref{equ the5.10 2 c3}), we have $$D_{(u^{(3)},0,0)}D_{(v,d_1,d_2)}f(x,y_1,y_2)\neq 0.$$
	In this way, we conclude that $f\notin\cM^\#$.
\end{proof}

In the special case when $f_4=f_1+f_2+f_3$,  we have the following corollary. 
\begin{cor}\label{cor:sharing_uniquesubspace}
Let $f_1, \ldots,f_4$ be four bent  functions on $\F_2^n$, with $n=2m$, satisfying the following conditions:
\begin{enumerate}
	\item $f_1, \ldots,f_4$ belong to $\cM^\#$ and  share a unique $\mathcal{M}$-subspace $U$;
	\item  $f=f_1||f_2||f_3||f_4 \in\B_{n+2}$ is a bent function.
\end{enumerate}
Let $V$ be an $(\frac{n}{2}-1)$-dimensional subspace  of $\F_2^n$  such that $D_aD_bf_i=0$, for all $a,b \in V$; $i=1, \ldots ,4.$
If for any $v\in  \F_2^n $ and  any such $V \subset \F_2^n$, there exist $u^{(1)},u^{(2)},u^{(3)}\in V $ such that the following three conditions hold simultaneously 
\begin{enumerate}
	\item[1.] $D_{u^{(1)}}f_1(x)+D_{u^{(1)}}f_2(x+v)\neq 0,$
	\item[2.] $D_{u^{(2)}}f_1(x)+D_{u^{(2)}}f_3(x+v)\neq 0,$
	\item[3.] $D_{u^{(3)}}f_2(x)+D_{u^{(3)}}f_3(x+v)\neq 0,$
\end{enumerate}
then $f$ is outside $\cM^\#$.
\end{cor}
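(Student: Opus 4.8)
The plan is to derive this corollary directly from Theorem~\ref{th:sharingcommonsubspace}, whose hypotheses~1 and~2 are repeated here verbatim, so its entire case analysis is available. The essential observation I would record first is that the three numbered conditions of the corollary are exactly the \emph{first} (left) disjuncts of the three disjunctive conditions of the theorem: condition~$i$ of the corollary is the left alternative of condition~$i$ there, for $i=1,2,3$. Since a disjunction $P\text{ or }Q$ holds whenever $P$ holds, the corollary's hypotheses imply the theorem's, and the conclusion $f\notin\cM^\#$ follows at once. Put differently, the corollary is the strengthening of the theorem in which one commits to the first alternative in each of the three conditions.

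To keep the argument self-contained and to explain the special role of the assumption $f_4=f_1+f_2+f_3$ (equivalently $f_{1234}=0$), I would re-run the theorem's three-case argument. Writing $g_{y_1,y_2}(x)=f(x,y_1,y_2)$, so that $g_{0,0}=f_1,\ g_{0,1}=f_2,\ g_{1,0}=f_3,\ g_{1,1}=f_4$, a short direct computation yields, for $u\in V$ and $(v,d_1,d_2)\in W$ with $(d_1,d_2)\neq 0_2$,
\begin{equation*}
	D_{(u,0,0)}D_{(v,d_1,d_2)}f(x,y_1,y_2)=D_u g_{y_1+d_1,\,y_2+d_2}(x+v)+D_u g_{y_1,y_2}(x).
\end{equation*}
Evaluating the right-hand side at $(y_1,y_2)=(0,0)$ when $(d_1,d_2)\in\{(0,1),(1,0)\}$ and at $(y_1,y_2)=(0,1)$ when $(d_1,d_2)=(1,1)$ isolates, one per case, the three $f_4$-free expressions $D_u f_1(x)+D_u f_2(x+v)$, $D_u f_1(x)+D_u f_3(x+v)$, and $D_u f_2(x)+D_u f_3(x+v)$; this is exactly why the special case $f_4=f_1+f_2+f_3$ is the natural home for these criteria. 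Conditions~1--3 then furnish vectors $u^{(1)},u^{(2)},u^{(3)}\in V$ making the respective expression not identically zero, so that the corresponding full second-order derivative is nonzero.

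Feeding this into the classification of candidate $(n/2+1)$-dimensional $\mathcal{M}$-subspaces $W$ from Proposition~\ref{prop:sharinguniqueM}, I would conclude that no admissible $W$ can be an $\mathcal{M}$-subspace of $f$, hence $f\notin\cM^\#$ by Lemma~\ref{lem M-M second}. I do not anticipate a genuine obstacle, since the corollary's hypotheses are strictly stronger than the theorem's; the only point requiring care is the bookkeeping that pairs each case $(d_1,d_2)$ with the evaluation point $(y_1,y_2)$ producing the $f_4$-free expression, together with the (immediate) justification that committing to the first disjunct is legitimate.
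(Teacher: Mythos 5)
Your proposal is correct and takes essentially the same route as the paper: the paper offers no separate proof, presenting the corollary as an immediate specialization of Theorem~\ref{th:sharingcommonsubspace}, and your key observation --- that conditions 1--3 are precisely the first disjuncts of that theorem's three disjunctive conditions, so the hypotheses here imply those of the theorem --- is exactly the intended derivation. Your supplementary re-derivation of $D_{(u,0,0)}D_{(v,d_1,d_2)}f$ and the pairing of each $(d_1,d_2)$ with an evaluation point $(y_1,y_2)$ faithfully reproduces the case analysis in the theorem's proof, so nothing further is needed.
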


\begin{cor}\label{cor:sharing_uniquesubspaceandn/2-1}
	With the same notation as in Theorem \ref{th:sharingcommonsubspace}, we assume  that $f_4=f_1+f_2+f_3$ and   $V\subset U$ for any $V$, where $\dim(V)=n-1$ and $U$ is a unique common $\mathcal{M}$-subspace of $f_1,f_2,f_3,f_4$. Then, the following set of sufficient conditions ensures that $f=f_1||f_2||f_3||f_4 \in \B_{n+2}$ does not belong to  $\cM^\#$:\\
	There exist one subspace $S\subset U$ with $\dim(S)=2$ such that 
	$$\begin{array}{c}D_{u}f_1(x)+D_{u}f_2(x+v)\neq 0;\\
	D_{u}f_1(x)+D_{u}f_3(x+v)\neq 0;\\
	
	D_{u}f_2(x)+D_{u}f_3(x+v)\neq 0,
\end{array}$$
for  any $u\in S\setminus \{0_n\}, v\in \F_2^n$.  
\end{cor}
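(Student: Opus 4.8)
The plan is to deduce this statement directly from Corollary~\ref{cor:sharing_uniquesubspace}, of which it is a streamlined special case. Since we already assume $f_4=f_1+f_2+f_3$ and inherit from Theorem~\ref{th:sharingcommonsubspace} that $f_1,\ldots,f_4$ are bent, lie in $\cM^\#$, share the unique $\mathcal{M}$-subspace $U$, and that $f=f_1||f_2||f_3||f_4\in\B_{n+2}$ is bent, all structural hypotheses of Corollary~\ref{cor:sharing_uniquesubspace} are in place. The three inequalities appearing in Corollary~\ref{cor:sharing_uniquesubspace} are exactly the three inequalities displayed here, except that there they may be witnessed by three possibly distinct directions $u^{(1)},u^{(2)},u^{(3)}\in V$, whereas here a single direction $u\in S\setminus\{0_n\}$ is required to fulfil all three simultaneously. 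Hence it suffices to show that for every $v\in\F_2^n$ and every $(n/2-1)$-dimensional subspace $V$ with $D_aD_bf_i=0$ for all $a,b\in V$ and all $i$, one can exhibit a single nonzero $u\in V$ satisfying the three inequalities; setting $u^{(1)}=u^{(2)}=u^{(3)}=u$ then verifies the hypotheses of Corollary~\ref{cor:sharing_uniquesubspace} and gives $f\notin\cM^\#$. (For consistency with Theorem~\ref{th:sharingcommonsubspace}, the dimension of $V$ should be read as $n/2-1$ rather than $n-1$.)

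The key step is a dimension count carried out entirely inside the common $\mathcal{M}$-subspace $U$. By assumption every admissible $V$ satisfies $V\subset U$, and $\dim(U)=m=n/2$ since $U$ is an $\mathcal{M}$-subspace of the bent functions $f_i\in\cM^\#$. As $S\subset U$ with $\dim(S)=2$, both $V$ and $S$ live inside the $m$-dimensional space $U$, so
\begin{equation*}
	\dim(V\cap S)\ge\dim(V)+\dim(S)-\dim(U)=\left(\tfrac{n}{2}-1\right)+2-\tfrac{n}{2}=1,
\end{equation*}
and therefore $V\cap S$ contains a nonzero vector $u$, depending only on $V$ and not on $v$.

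Finally, this $u$ lies in $S\setminus\{0_n\}$, so by the hypothesis of the corollary it satisfies $D_uf_1(x)+D_uf_2(x+v)\neq0$, $D_uf_1(x)+D_uf_3(x+v)\neq0$ and $D_uf_2(x)+D_uf_3(x+v)\neq0$ for the given $v$; and since $u\in V$ as well, it is a valid common witness for all three conditions of Corollary~\ref{cor:sharing_uniquesubspace}. Invoking that corollary concludes that $f$ lies outside $\cM^\#$. The main obstacle is precisely guaranteeing that the \emph{fixed} $2$-dimensional subspace $S$ meets an \emph{arbitrary} $(n/2-1)$-dimensional vanishing subspace $V$ nontrivially: this is ensured only by the standing assumption that all such $V$ are contained in $U$, which confines both subspaces to the $m$-dimensional ambient space $U$ and validates the count above. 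Without $V\subset U$ the intersection $V\cap S$ could be trivial, and no single $S$ would suffice, which is exactly why this extra hypothesis is imposed.
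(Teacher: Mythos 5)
Your proposal is correct and follows essentially the same route as the paper: both arguments hinge on the dimension count $\dim(V\cap S)\ge\dim(V)+\dim(S)-\dim(U)=1$ (valid precisely because the hypothesis $V\subset U$ confines both subspaces to the $n/2$-dimensional space $U$), which yields a single nonzero $u\in V\cap S$ serving as the common witness $u^{(1)}=u^{(2)}=u^{(3)}$ before invoking Theorem~\ref{th:sharingcommonsubspace} (equivalently its specialization, Corollary~\ref{cor:sharing_uniquesubspace}). Your reading of the stated dimension $n-1$ as $n/2-1$ matches what the paper's own proof implicitly requires for the count to come out to $1$.
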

\begin{proof}
If  we always have $V\subset U$ for any $V$, then  $\dim(V\cap S)\geq 1.$  This follows from the fact that  $\dim(S)=2, \dim(V)=n-1$ and furthermore $S\subset U$ and $ V\subset U$. 
Thus, for any $V$, we always can find at least one nonzero vector $u'\in V\cap S$. 
Since 	$$\begin{array}{c}D_{u}f_1(x)+D_{u}f_2(x+v)\neq 0;\\
	D_{u}f_1(x)+D_{u}f_3(x+v)\neq 0;\\
	
	D_{u}f_2(x)+D_{u}f_3(x+v)\neq 0,
\end{array}$$
for  any $u\in S\setminus \{0_n\}, v\in \F_2^n$, 
we have 
$$\begin{array}{c}D_{u'}f_1(x)+D_{u'}f_2(x+v)\neq 0;\\
	D_{u'}f_1(x)+D_{u'}f_3(x+v)\neq 0;\\
	
	D_{u'}f_2(x)+D_{u'}f_3(x+v)\neq 0.
\end{array}$$
From Theorem \ref{th:sharingcommonsubspace}, we know $ f$ is outside $\cM^\#$. 
\end{proof}

	\begin{ex}\label{ex: Outside everything 3}
	Consider the following Boolean bent functions $f_1,f_2,f_3,f_4\in\mathcal{B}_6$, which all belong to $\cM^\#$  and are given by algebraic normal form as follows:
	\begin{equation}\label{eq: Frobenius decomposition}
		\begin{split}
			f_1(x,y)= &   x_1 (y_2 + y_3 + y_1 y_3) + x_2 (y_1 + y_1 y_3 + y_2 y_3) + x_3 (y_1 y_2 + y_3) + y_1 + y_2 + y_3, \\
			f_2(x,y)=&  x_1 (y_2 + y_1 y_2 + y_1 y_3) + x_2 (y_1 + y_2 + y_1 y_2 + y_2 y_3) \\
			+& x_3 (y_1 + y_1 y_2 + y_3 + y_1 y_3 + y_2 y_3) + y_3 + 1, \\
			f_3(x,y)=&x_1 (y_1 + y_2 + y_1 y_2 + y_2 y_3) + x_2 (y_2 + y_3 + y_1 y_3) +  x_3 (y_1 + y_2 + y_3 + y_2 y_3)\\
			+&  y_2 + y_3 + 1,\\
			f_4(x,y)=&  x_1 (y_1 + y_2 + y_3 + y_2 y_3) + x_2 (y_1 y_2 + y_3) + x_3 (y_2 + y_3 + y_1 y_3) + y_1 + 1.
		\end{split}
	\end{equation}
	One can check that for defined in~\eqref{eq: Frobenius decomposition} bent functions, the dual bent condition is satisfied. In this way, we have that $f=f_1||f_2||f_3||f_4\in\mathcal{B}_8$ is bent. Its ANF is given by
	\begin{equation}
		\begin{split}
			f(z)&= z_4 + z_2 z_4 + z_5 + z_1 z_5 + z_3 z_4 z_5 + z_6 + z_1 z_6 + z_3 z_6 + z_1 z_4 z_6 + 
			z_2 z_4 z_6 + z_2 z_5 z_6 \\ & + z_7 + z_4 z_7 + z_1 z_4 z_7 + z_2 z_4 z_7 + z_3 z_4 z_7 + 
			z_2 z_5 z_7 + z_3 z_5 z_7 + z_1 z_4 z_5 z_7 + z_3 z_4 z_5 z_7 \\ & +  z_1 z_6 z_7  + 
			z_2 z_6 z_7 + z_1 z_4 z_6 z_7 + z_1 z_5 z_6 z_7 + z_2 z_5 z_6 z_7 + 
			z_3 z_5 z_6 z_7 + z_8 + z_4 z_8 \\ & + z_3 z_4 z_8  + z_5 z_8 + z_2 z_5 z_8 + 
			z_1 z_4 z_5 z_8 + z_2 z_4 z_5 z_8 + z_1 z_6 z_8 + z_2 z_4 z_6 z_8 + z_3 z_4 z_6 z_8 \\ & + 
			z_3 z_5 z_6 z_8 + z_7 z_8 + z_6 z_7 z_8.
		\end{split}
	\end{equation}
	Since every bent function $f_i$ has the form $f_i(x,y)=x\cdot\pi_i(y)+h_i(y)$, where $\pi_i$ is a quadratic APN permutation, then $f_i$ share the unique canonical $\mathcal{M}$-subspace $U=\F_2^3\times\{0_3\}$. In this way, we cannot use Theorem~\ref{th: two MM with a unique M-subspace}. One can check that for every two-dimensional subspace  $V$ of $\F_2^8$ such that $D_aD_bf_i=0$, for all $a,b \in V$, where $i=1, \ldots ,4$, the conditions of Theorem~\ref{th:sharingcommonsubspace} are satisfied, and hence, the bent function $f=f_1|| f_2|| f_3||f_4\in\mathcal{B}_{8}$ is outside $\cM^\#$. Additionally, using Algorithm~\ref{algorithm: Membership in the partial spread class}, we confirm that $f\notin\mathcal{PS}^\#$, and, hence, $f\notin(\cM^\#\cup\mathcal{PS}^\#)$.
\end{ex}
\begin{rem}
	The examples in this section indicate that concatenation $f=f_1||f_2||f_3||f_4$ of four bent functions $f_i\in\cM^\#$ can give a new bent function $f\notin(\cM^\#\cup\mathcal{PS}^\#)$. We would also like to note that all functions $f\in\mathcal{B}_8$ obtained in Examples~\ref{ex: Outside everything 1}, \ref{ex: Outside everything 2} and \ref{ex: Outside everything 3} are pairwise inequivalent. The latter was checked with Magma using the design isomorphism, as described in~\cite{Polujan2020}.
\end{rem}
The examples in this section indicate, that proper concatenations of bent functions satisfying the dual bent condition can give rise to many instances of (inequivalent) bent functions outside $\cM^\#$. This observation motivates the following research problem.
\begin{op}
	Find bent functions $f_1,f_2,f_3,f_4\in\mathcal{B}_{n}$ satisfying the dual bent condition, i.e., $f_1^*+ f_2^*+ f_3^*+f_4^*=1$, such that $f=f_1|| f_2|| f_3||f_4\in\mathcal{B}_{n+2}$ is bent and outside $\cM^\#$.
\end{op}

\section{Conclusion and open problems}\label{sec: concl}
In this article we have analyzed the structure of bent functions in the Maiorana-McFarland class with respect to their inherent $\mathcal{M}$-subspaces, thus contributing to the analysis of inequivalent Maiorana-McFarland bent functions. Moreover, we provided generic construction methods of bent functions outside $\cM^\#$ for any $n \geq 8$ using the bent 4-concatenation. Most notably, our results indicate that it is possible to construct bent functions outside $\cM^\#\cup\mathcal{PS}^\#$, thus we contribute to the better understanding of the origin of bent functions in $n=8$ variables.

To conclude, we believe that answering the following questions (in addition to the already mentioned open problems) will help to shed more light on the classification of bent functions as well as to develop new generic construction methods of these functions.
\begin{itemize}
	\item[1)] As we mentioned in the introduction, for a Maiorana-McFarland bent function $f\in\mathcal{B}_n$, the number of its $\mathcal{M}$-subspaces is at most $\prod_{i=1}^{n/2} \left(2^i+1\right)$ and the equality is attained if and only if $f$ is quadratic. What is the maximum number of $\mathcal{M}$-subspaces for a bent function $f\in\mathcal{B}_n$ in $\cM$ of a fixed degree $d>2$, and is it possible to characterize the functions achieving this bound? Our computational results indicate, that bent functions of the form $(x,y)\mapsto x\cdot y + y_{i_1}y_{i_1}\cdots y_{i_d}$ have the maximum number of $\mathcal{M}$-subspaces among all Maiorana-McFarland bent function of a fixed degree $d>2$.
	\item[2)] In this article, we analyzed, which properties of permutations $\pi$ guarantee that Maiorana-McFarland bent functions $x\cdot \pi(y)+h(y)$ have either one or many  $\mathcal{M}$-subspaces. For example, if $\pi$ has the $\eqref{eq: P1}$ property, we know that independently of the choice of the function $h$, the bent function $x\cdot \pi(y)+h(y)$ has the unique canonical $\mathcal{M}$-subspace. However, if the $\eqref{eq: P1}$ property is relaxed, then the properties of the function $h$ become crucial to guarantee the uniqueness of the $\mathcal{M}$-subspace. We think it is important to understand in general, how the choice of a pair $(\pi,h)$ affects the number of $\mathcal{M}$-subspaces of the corresponding Maiorana-McFarland function.
	\item[3)] An efficient way to satisfy the dual bent condition (we have to ensure that $f_1^* + f_2^* +f_3^* +f_4^*=1$ so that $f=f_1||f_2||f_3||f_4$ is bent) is to use $f_1=f_2$ and $f_3=1+f_4$  which we employed in Theorem \ref{th: two MM with a unique M-subspace}. However, there exist other possibilities to satisfy the dual bent condition which need to be examined further with  regard to the class membership of the designed bent functions. 
	We notice that Proposition \ref{prop:commonsubspace} does not require that the functions $f_i$ that define $f=f_1||f_2||f_3||f_4$ are bent.
	Therefore, another interesting research problem is to apply a similar approach  as taken  in Theorem  \ref{th: two MM with a unique M-subspace} to semi-bent and 5-valued spectra functions.  
\end{itemize}
\section*{Acknowledgements} 
Enes Pasalic is supported in part by the Slovenian Research Agency (research program P1-0404 and research projects J1-1694, N1-0159, J1-2451 and J1-4084). Sadmir Kudin is supported in part by the Slovenian Research Agency (research program P1-0404, research project J1-4084 and Young Researchers Grant). Fengrong Zhang is supported in part by the Natural Science Foundation of China (No. 61972400), the Fundamental Research Funds for the Central Universities (XJS221503), and the Youth Innovation Team of Shaanxi Universities.





\end{document}